\documentclass[aps,prl,twocolumn]{revtex4-1}
\usepackage{epsfig,amssymb,amsmath,mathrsfs,amsthm,graphicx,float,color}
\usepackage[active]{srcltx}
\usepackage{subfigure}
\vfuzz2pt 
 \def\map#1{\mathcal #1}
\def\d{\operatorname{d}}\def\<{\langle}\def\>{\rangle}

\def\Tr{\operatorname{Tr}}\def\:{\hbox{\bf
    :}}

\def\grp#1{\mathsf{#1}}
\def\st#1{\mathbf{#1}}

\def\Span{\mathsf{Span}}
\def\Supp{\mathsf{Supp}}

\def\spc#1{\mathcal{#1}}

\def\set#1{\mathsf{#1}}
\def\Fix{{\mathsf{Fix}}}
\def\Proof{{\bf Proof. }}  

\def\rank{\mathsf{rank}}
\newtheorem{theo}{{Theorem}}
\newtheorem{lem}{{Lemma}}
\newtheorem{prop}{{Proposition}}
\newtheorem{cor}{{Corollary}}

\begin{document}
\title{Efficient Quantum  Compression for Ensembles of Identically Prepared Mixed States }
\author{Yuxiang Yang,  Giulio Chiribella,  and  Daniel Ebler} 
\affiliation{Department of Computer Science, The University of Hong Kong, Pokfulam Road, Hong Kong}
\begin{abstract}
We present one-shot compression protocols that  optimally encode   ensembles   of   $N$ identically prepared mixed states into $O(\log N)$  qubits.  In contrast to the case of pure-state ensembles, we find that     the number of  encoding qubits     drops down discontinuously  as soon as a nonzero error is tolerated and  the spectrum  of the states is known with sufficient precision.    For qubit ensembles, this feature leads to a 25\% saving of memory space.  
Our compression protocols    can be implemented efficiently on a quantum computer.

\end{abstract}
\maketitle

Storing data into the smallest possible space is  of crucial importance in present-day digital technology,  especially when dealing with large amounts of information and with limited memory space \cite{bigdata}.    
The need for saving space is even more pressing in the quantum domain, where storing data is an expensive task that requires sophisticated error correction techniques \cite{Memory1,Memory2,Memory3}.   

For quantum data,   Schumacher's compression  \cite{Schumacher} and its extensions \cite{JozsaSchumacher,LoBound,Horodecki,
Jozsa,BennetHarrowLloyd} provide  optimal ways  to store  information  in the asymptotic limit of many identical and independent uses of the same source.    
 However,     in  many situations  there may be  correlations  from one use of the source to the next.   In such situations,  it is convenient to regard    $N$  uses of the original  source as   a single use of a new source,  which  emits messages of length $N$.     This scenario is an instance of  one-shot quantum data compression \cite{datta}.  
   An important example of one-shot  compression is when the states  emitted at $N$ subsequent moments of  time are perfectly correlated, resulting in codewords of the form $\rho_x^{\otimes N} $ for some density matrix $\rho_x$ and some  random parameter $x$.  This situation arises when the original source is an uncharacterized preparation device, which generates the  same quantum state at every use.      For    quantum bits (qubits),  Plesch and Bu\v zek  \cite{SchurTransform2}  observed that every  ensemble of identically prepared pure states can be stored without any error into  $\log (N  +1)$ qubits,  thus allowing for  an exponential saving of  memory space.  
  Recently, Rozema \emph{et al} \cite{SchurCompression} brought this idea into the realm of experiment,   demonstrating   a prototype of one-shot compression  in  a photonic setup.  
  
The possibility of implementing one-shot compression in the lab opens  new questions that require one to go beyond the ideal case of pure states and no errors. 
 First, due to the presence of noise, real-life implementations typically involve  mixed states---think, ~e.~g., ~of  quantum information processing with NMR  \cite{NMR}, where the standard is to have thermal states at a given temperature, or, more generally, of mixed-state quantum computing  \cite{aharonov-kitaev,1bit,shor-jordan,datta2008quantum,lanyon-barbieri}.    For mixed states, the basic principle of pure-state  compression does not work:  in the qubit case, for example, projecting the quantum state into  the smallest subspace containing the code words  does not lead to any compression if the   states  $\rho_x^{\otimes N}$ are mixed, because in that case  the smallest subspace is the whole Hilbert space.       As a result,  it is natural to search for compression protocols that work for mixed states and to ask which protocols achieve  the best compression performance. 
  An even more  important question is how the number of qubits needed to store data  depends on the  errors  in the decoding.     Tolerating a nonzero error is natural in real-life implementations, which typically suffer from noise and imperfections.     

In this Letter we  answer the above questions, proposing  compression protocols  for ensembles of identically prepared mixed states.   We first analyze the zero-error scenario, showing  that the storage of   $N$ mixed qubits with known purity and unknown Bloch vector requires a quantum memory of at least  $2 \log N$ qubits. 
The size of the required memory is twice that of the required memory for pure states, but it is still exponentially smaller that the initial data size.    
The maximum compression is achieved by a protocol that does not require knowledge of the purity.
We then investigate the more realistic case of protocols with an error tolerance.    When the purity  is known with sufficient precision,  we find out that tolerating an error, no matter how small, allows one to encode the initial data into only  $3/2  \,  \log N$  qubits, plus a small correction independent of $N$.   Remarkably, the  discontinuity   in the error parameter takes place as soon as the prior knowledge of the  purity  is more precise than the  knowledge that could be gained by measuring the  $N$ input qubits.      The existence of a discontinuity is a striking deviation from the pure-state case, for which  we prove that  there is no significant advantage in introducing an error tolerance.     
Furthermore, we show that our compression protocol  can be implemented efficiently and that the compression rate is optimal under the requirements that the encoding be rotationally covariant and the decoding preserve   the magnitude of the total angular momentum. These assumptions are relevant in physical situations where the mixed states  are used as indicators of spatial directions   \cite{demkowicz,bagan} and the decoding operations are limited by conservation laws \cite{reference, marvian-spekkens1,marvian-spekkens2,spekkens-marvian-WAY, ahmadi,marvian-spekkensNat}.    
     All our results can be generalized to quantum systems of arbitrary finite dimension, where  we quantify how the presence of degeneracy in the spectrum affects the compression rates.

Let us start from the   qubit  case, assuming $N$ to be even for the sake of concreteness.  
We denote by $\map{E}: \spc{H}^{\otimes N}\to\spc{H}_{\rm enc}$   ($\map{D}: \spc{H}_{\rm enc}\to \spc{H}^{\otimes N})$ the encoding  (decoding) channel, where $\spc H$ is the Hilbert space of a single qubit and $\spc H_{\rm enc}$ is the Hilbert space of the encoding system.   For an ensemble  of identically prepared qubit states $\{ \rho^{\otimes N}_x \,  ,  p_x  \}$ the average error of the compression protocol is 
\begin{align}\label{error}
e_N= \sum_{x} \, p_x  \frac{ \left\|\rho^{\otimes N}_x-\map{D}\circ\map{E}\left(\rho_x^{\otimes N}\right)\right\|} 2 \, ,
\end{align}
$\|  A  \|$ denoting the trace norm. 
 We  consider ensembles where all the states  $\rho_x$   have the same  purity,  which is assumed to be perfectly known (this assumption will be lifted later).  Let us write $\rho_x$ as   $\rho_{\st n}=p \, |  \st n  \>\<  \st n|  \,  + (1-p)  \,  | - \st n \>\< -\st n |$, where   $|\st n\>$ denotes  the two-dimensional pure state with Bloch vector    $\st n  =  (n_x,n_y,n_z)$ and $p  \ge 1/2$ is the maximum eigenvalue.   
  We focus on mixed states $(p\not  = 1)$, excluding the trivial case $p=1/2$, in which the ensemble consists of just one state. For $p\not \in  \{ 1,1/2\}$,   we call the ensemble  $\{\rho_{\st n}^{\otimes N}\, , p_\st n\}$  complete if  the probability distribution  $p_{\st n}  $ is dense in the unit sphere.  
 The typical example is an ensemble of mixed states with known purity and completely unknown Bloch vector.     For every complete ensemble we demonstrate a  sharp contrast between  two types of compression: (i)  zero-error compression, wherein the decoded state is equal to the initial state, and (ii) approximate compression, wherein small errors are tolerated.  
  In the zero-error case we have the following 
\begin{theo}
\label{thm:zeroerror}
The minimum number of  logical qubits needed to compress a complete   $N$-qubit ensemble  is  $ \lceil    2  \log  (N+2)-2\rceil $. Every compression protocol that has zero error on a complete ensemble must have zero error on every ensemble of identically prepared mixed states and on every ensemble of permutationally invariant N-qubit states.
\end{theo}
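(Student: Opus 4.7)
The plan is to work in the Schur--Weyl decomposition $\spc{H}^{\otimes N}\cong\bigoplus_{j=0}^{N/2}\spc{H}_j\otimes\spc{M}_j$, so that the target dimension $\sum_{j=0}^{N/2}(2j+1)=(N/2+1)^2$ matches $\lceil 2\log(N+2)-2\rceil$. I will establish (i) a matching encoding, (ii) an extension step promoting zero error on a complete ensemble to zero error on the whole algebra $\spc{A}_{\rm sym}$ of permutation-invariant operators, and (iii) a converse dimension bound that uses only a discrete family inside $\spc{A}_{\rm sym}$. The second statement of the theorem will fall out of (ii).

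For (i), define $\map{E}$ by applying the Schur transform and tracing out the multiplicity register $\spc{M}_j$, and $\map{D}$ by reinserting $I_{\spc{M}_j}/m_j$ block by block and inverting the Schur transform. Every permutation-invariant state has the block form $\bigoplus_j \sigma_j\otimes(I_{\spc{M}_j}/m_j)$, so $\map{D}\circ\map{E}$ is the identity on $\spc{A}_{\rm sym}$ and in particular on every $\rho_{\st n}^{\otimes N}$. The encoding lives in $\bigoplus_j\spc{H}_j$, of dimension $(N/2+1)^2$, supplying the claimed $\lceil 2\log(N+2)-2\rceil$ logical qubits.

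For the extension (ii) I will prove that $\Span\{\rho_{\st n}^{\otimes N}:\st n\in S^2\}=\spc{A}_{\rm sym}$ whenever $p\notin\{1/2,1\}$. By $SO(3)$-covariance the block-$j$ part of $\rho_{\st n}^{\otimes N}$ is $U_j(R)D_jU_j(R)^\dagger$ with $D_j=\diag\bigl(p^{N/2+m}(1-p)^{N/2-m}\bigr)_m$, so the orbit of $D_j$ under $SU(2)$ conjugation spans an $SO(3)$-invariant subspace of $\End(\spc{H}_j)=\bigoplus_{k=0}^{2j}V_k$, equal to the whole $\End(\spc{H}_j)$ iff $D_j$ has nonzero projection onto every $V_k$. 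Given this spanning property, linearity of $\map{D}\circ\map{E}$ upgrades zero error on the complete ensemble to the identity on all of $\spc{A}_{\rm sym}$, and hence on every identically prepared mixed state $\sigma^{\otimes N}$ and every permutation-invariant $N$-qubit state.

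For the converse (iii) I exhibit the $(N/2+1)^2$ permutation-invariant states $\sigma_{j,m}=|j,m\rangle\langle j,m|\otimes (I_{\spc{M}_j}/m_j)$, $j\leq N/2$, $|m|\leq j$. Their supports in $\spc{H}^{\otimes N}$ are pairwise orthogonal, so $\|\sigma_{j,m}-\sigma_{j',m'}\|=2$ for distinct pairs; by (ii) each is preserved by $\map{D}\circ\map{E}$, and the data-processing inequality applied to $\map{D}$ then forces $\|\map{E}(\sigma_{j,m})-\map{E}(\sigma_{j',m'})\|=2$ as well. The images must therefore have pairwise orthogonal supports in $\spc{H}_{\rm enc}$, yielding $d_{\rm enc}\geq(N/2+1)^2$, which matches the upper bound. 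The main technical obstacle is the spanning claim inside (ii): writing $D_j\propto\exp(\alpha J_z)$ with $\alpha=\log[p/(1-p)]\neq 0$, the $V_k$-component of $D_j$ is an entire function of $\alpha$ whose Taylor series starts at $\alpha^k$ with a coefficient proportional to the nonzero $V_k$-projection of $J_z^k$ (nonvanishing because $J_z\in V_1$ and Clebsch--Gordan yields a nonzero rank-$k$ component at order $k$); the remaining task is to rule out accidental zeros of this analytic function at the specific real value $\alpha=\log[p/(1-p)]$, which I expect to dispatch by an explicit computation, e.g.\ verifying $\sum_m s^m P_k(m)\neq 0$ for $s>0$, $s\neq 1$, where $P_k$ is the discrete Legendre-type polynomial expressing $\langle j,m|T^{(k)}_0|j,m\rangle$.
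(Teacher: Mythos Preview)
Your step (ii) has a genuine gap: the linear spanning claim $\Span\{\rho_{\st n}^{\otimes N}:\st n\in S^2\}=\spc{A}_{\rm sym}$ is false. Your blockwise analysis is fine---for each fixed $j$ the orbit of $D_j$ under $\grp{SU}(2)$ conjugation does span $\End(\spc R_j)$---but this does \emph{not} imply that the joint orbit of $\bigoplus_j q_{j,N}D_j$ spans the direct sum $\bigoplus_j\End(\spc R_j)$. The obstruction is visible already in the $V_0$-isotypic component (the scalars $I_j$ in each block): this component has multiplicity $N/2+1$, one copy per $j$, and conjugation acts trivially on it, so the orbit contributes only the single line through the vector $\bigl(q_{j,N}\,\Tr D_j/d_j\bigr)_j$. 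The same multiplicity defect recurs for every $k<N$. Concretely, for $N=2$ the span is $9$-dimensional inside the $10$-dimensional $\spc A_{\rm sym}$. Hence linearity of $\map D\circ\map E$ alone cannot upgrade zero error on the complete ensemble to zero error on all of $\spc A_{\rm sym}$; your worry about accidental zeros of the analytic function is moot, because the argument fails for structural reasons before that point.

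The paper closes this gap by exploiting more than linearity: the fixed-point set $\Fix(\map C)$ of a channel $\map C=\map D\circ\map E$, thanks to the Blume--Kohout \emph{et al.} structure theorem, becomes a matrix $*$-algebra after conjugation by $E^{-1/2}$, where $E$ is the rotationally averaged input state. This gives access to \emph{products} of the renormalized operators $A_U=E^{-1/2}(U\rho U^\dag)^{\otimes N}E^{-1/2}$, not just linear combinations. Integrating $A_U$ against the top character $\chi^{(N)}_U$ isolates a full-rank operator supported on the block $j=N/2$; its support projector then lies in the algebra, and multiplying by it lets one peel off that block and induct downward in $j$, ultimately forcing $\Fix(\map C)\supseteq\bigoplus_j\mathsf B(\spc R_j)\otimes I_{m_j}$. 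Once that inclusion is established, your steps (i) and (iii) go through as written; in particular your orthogonal-support argument via data processing in (iii) is a clean alternative to the paper's Holevo-$\chi$ bound and yields the same conclusion.
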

Intuitively, the reason for the exponential reduction of the number of qubits is that  the states in the ensemble are invariant under permutations and, therefore, they do not carry all the  information that could be encoded into $N$ qubits.  This observation was anticipated by Blume-Kohout \emph{et al}  in the context of state discrimination and tomography \cite{estimation}.  The key point  of Theorem   \ref{thm:zeroerror} is the optimality proof, which establishes that if a mixed-state  ensemble  is complete, then compressing it is as hard as compressing any arbitrary ensemble of permutationally invariant states \cite{supplemental}.      

In preparation of our analysis of approximate compression, it is instructive to look into an optimal  protocol achieving zero-error compression.  The starting point  is the Schur-Weyl duality \cite{fultonharris}, 
 stating  that there exists a basis   in which the $N$-fold tensor action of the group $\grp{GL}(2)$  and the natural action of the permutation group $S_N$  are both block diagonal.  
In this basis, the Hilbert space of the $N$ qubits can be decomposed as 
\begin{align}
\spc{H}^{\otimes N}\simeq\bigoplus_{j=0 }^{N/2}\left(\spc{R}_j\otimes\spc{M}_{j}  \right)   , 
\end{align}
 where $j$ is the quantum number of the total angular momentum,  $\spc{R}_j$ is a representation space, in which the group $\grp{GL}(2)$ acts irreducibly, and $\spc{M}_{j}$ is a multiplicity space, in which the group acts trivially.  
 Now, since the state $\rho_{\st n}^{\otimes N}$ is invariant under  permutations of the $N$ qubits, one has  
\begin{align}\label{statedecomp}
 \rho_{\st n}^{\otimes N}  =\bigoplus_{j=0}^{N/2}  \,  q_{j,N}  \, \left(\rho_{\st n,j}\otimes \frac{I_{m_j }}{m_j}\right),
\end{align}
where  $q_{j,N}$  is a suitable probability distribution in $j$,      $\rho_{\st n,j}$ is a quantum state on $\spc{R}_j$, $I_{m_j} $ is the identity on  $\spc{M}_{j} $,  and $m_j$ is the dimension of $\spc{M}_j$.
From Eq. (\ref{statedecomp}) it is obvious  that all information about the input state lies  in  the representation spaces.  
  Hence, $\rho_{\st n}^{\otimes N}$ can be encoded faithfully into the state $  \map E  \left (\rho_{\st n}^{\otimes N}  \right)   =  \bigoplus_{j}  q_{j,N}\,  \rho_{\st n, j}$.   Such state has an exponentially smaller support, contained in the space $\spc{H}_{N}:=\bigoplus_{j=0 }^{N/2}\spc{R}_j$, whose dimension is  
$\dim\spc{H}_{N}=\left( N/2+1\right)^2$.
Hence, the initial state can be encoded into  $\lceil\log \dim\spc{H}_{N}\rceil$ qubits---the amount declared in  Theorem \ref{thm:zeroerror}.  A perfect decoding is achieved by  the  channel 
  \begin{align}\label{channelD}
\map{D}(\rho) :  =    \bigoplus_{j   }  \,     \left(     P_j  \,\rho \, P_j  \otimes \frac{ I_{m_j}}{m_j}\right)  \, ,
\end{align} 
where $P_j$ is the projector on the representation space $\spc R_j$.

Considering that qubits  are a costly resource, it is worth pointing out a slight modification of the above protocol, which uses approximately $\log N$ qubits and $\log N$  classical bits.  The modified protocol consists in (i)  measuring the value of $j$, thus projecting $N$ qubits into the state $\rho_{\st n,j} \otimes I_{m_j}/m_j$, (ii)   discarding the multiplicity part,   (iii)  encoding the state $\rho_{\st n,j}$ into   $\lceil \log (N+1)\rceil$ qubits,  and (iv) transmitting the encoded state to the receiver, along with a classical message specifying the value of $j$. Knowing the value of $j$, the receiver can append  an additional system in the state $I_{m_{j}}/{m_j}$  and  embed the  state  $\rho_{\st n,  j}\otimes I_{m_j}/m_j$  into the right  subspace. 

Let us consider now  the more realistic case of approximate compression.  Here,  the number of encoding  qubits  drops down discontinuously.

\begin{theo}\label{thm:faithful}
For every allowed error rate $\epsilon  >0  $ and for every complete qubit ensemble,  
there exists a number $N_0>0$ such that for any $N\ge N_0$ the  ensemble   can be encoded into   $3/2 \log N+\log[4(2p-1)\sqrt{\ln(2/\epsilon)}]$ qubits  with error smaller than $\epsilon$. 
\end{theo}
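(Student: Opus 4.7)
My plan is to truncate the Schur--Weyl decomposition~(3) to a \emph{typical} set of angular-momentum values $j$. Since all information about $\st n$ lives in the representation spaces $\spc R_j$ and the block weights $q_{j,N}$ concentrate sharply around a single typical value, throwing away the non-typical blocks incurs only a small trace-norm error while collapsing the encoded space to a dimension of order $N^{3/2}$.

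The first and main step is to establish a sub-Gaussian concentration bound for $q_{j,N}$. Since $j$ is $\grp{SU}(2)$-invariant, I may assume $\st n = \hat z$, in which case $\rho_{\hat z}^{\otimes N}$ is diagonal in the computational basis with a binomial distribution of ``ups'', and the marginal of $J_z$ has mean $N(2p-1)/2$ and variance $Np(1-p)$. Using the deterministic inequality $j\ge |J_z|$ for the lower tail of $j$, together with an explicit computation of $q_{j,N}$ from the multiplicity formula and a Chernoff estimate for the upper tail, I obtain a bound of the form
\begin{align}
\sum_{|j-j_0|\ge \Delta} q_{j,N}\;\le\;2\,e^{-2\Delta^2/N},\qquad j_0:=\frac{N(2p-1)}{2}.
\end{align}
Setting $\Delta=\sqrt{(N/2)\ln(2/\epsilon)}$ makes the right-hand side at most $\epsilon$.

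Given this, I take $T_\epsilon=\{j:|j-j_0|\le \Delta\}$ and encode into $\spc H_{T_\epsilon}=\bigoplus_{j\in T_\epsilon}\spc R_j$ via the natural encoder (project onto $\spc H_{T_\epsilon}$ and discard the multiplicity factors, exactly as in the zero-error protocol), with the decoder given by Eq.~(4) restricted to $T_\epsilon$. By the block-diagonal structure of~(3) the trace-norm error equals exactly the truncated mass $\sum_{j\notin T_\epsilon}q_{j,N}<\epsilon$; this is independent of $\st n$, so the bound is uniform over every complete ensemble. The dimension of the encoded space satisfies
\begin{align}
\dim\spc H_{T_\epsilon}=\sum_{j\in T_\epsilon}(2j+1)\le (2\Delta+1)(2j_0+2\Delta+1),
\end{align}
which for $N\ge N_0$ (with $N_0$ chosen to absorb the lower-order terms) is bounded by $16(2p-1)^2 N^3 \ln(2/\epsilon)$; taking logarithms gives the claimed $(3/2)\log N+\log[4(2p-1)\sqrt{\ln(2/\epsilon)}]$ qubits.

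The main obstacle is the concentration bound itself. The lower tail is immediate from Hoeffding's inequality for $J_z$ combined with $j\ge |J_z|$; the upper tail is more delicate because $j$ can exceed $|J_z|$, but it can be controlled by noting that $q_{j,N}$ is, up to polynomial factors in $N$, a function of the multiplicities $m_j=\binom{N}{N/2-j}-\binom{N}{N/2-j-1}$ that decays geometrically once $j>j_0$. Once the exponent $2\Delta^2/N$ is secured with the correct constant, the rest of the argument is bookkeeping, and the generous prefactor $4(2p-1)\sqrt{\ln(2/\epsilon)}$ in the statement is precisely what allows the subleading corrections to be swept into the threshold $N_0$.
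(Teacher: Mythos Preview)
Your strategy coincides with the paper's: truncate the Schur--Weyl decomposition to a window of $j$-values of half-width $O(\sqrt{N})$ around $j_0$, bound the error by the discarded probability mass, and count the dimension of the surviving representation spaces. The paper, however, does not split the concentration into a lower tail via $j\ge|J_z|$ and a separate upper-tail argument; it instead invokes the closed-form expression~(5),
\[
q_{j,N}=\frac{2j+1}{2j_0}\Bigl[B\bigl(N{+}1,p,\tfrac N2+j+1\bigr)-B\bigl(N{+}1,p,\tfrac N2-j\bigr)\Bigr],
\]
so that a single Hoeffding bound on the binomial $B(N{+}1,p,\cdot)$ controls both tails at once. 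Your upper-tail sketch (``decays geometrically once $j>j_0$'') is correct in spirit but, as written, does not actually secure the exponent $2\Delta^2/N$ with the stated constant; the formula above is what closes that gap cleanly. One further slip: the trace-norm error is not \emph{exactly} the truncated mass---the encoder must be a trace-preserving channel, so on the atypical blocks it outputs a fixed state $\rho_0$ as in Eq.~(6), and the error is then \emph{bounded above} by $\sum_{j\notin T_\epsilon}q_{j,N}$ via the triangle inequality.
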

 The idea is to work out the explicit form of the probability distribution $q_{j, N}$ in Eq. (\ref{statedecomp}), given by 
\begin{align}\nonumber
q_{j,N}= \frac{2j+1}{2j_0}   &\left[   B\left(N+1,p,\frac N2 +  j+1 \right)\right.\nonumber \\
&\left. ~ -B\left(N+1,p,\frac N2 - j \right)\right] \label{dist}
\end{align}
where $B(n,p,k)$ is the binomial distribution  with $n$ trials and with probability $p$, and $j_0 = (p-1/2)(N+1)$.  
For  large $N$, the distribution $q_{j,N}$ is approximately the product of a linear function with the normal distribution of variance   $(N+1)p(1-p)$ centered around $j_0$.  In order to  compress, we get rid of the tails:  for every $\epsilon  >0  $, we  select a set    $\set S_\epsilon : =  \left\{      j_0 -  \lfloor   \sqrt{\ln(2/\epsilon)N}  \rfloor  ,  \dots,     j_0 +  \lfloor   \sqrt{\ln(2/\epsilon)N}   \rfloor  \right\}$  and we   compress the state   $\rho_{\st n}^{\otimes N}$   into the encoding space
$\spc{H}_{\rm enc}=\bigoplus_{j\in \set S_\epsilon }\spc{R}_j$, by applying the quantum channel 
\begin{align}\label{channelE}
\map{E}(\rho):=  \bigoplus_{j \in  \set S_\epsilon} \,   \Tr_{  \spc M_j}   \, \left [   \,    \Pi_j \,     \rho \,  \Pi_j   \,  \right]  +    \sum_{j\not \in\set S_\epsilon  }  \,\Tr\left[      \Pi_j  \, \rho\right]  \, \rho_0\, ,
\end{align}
where   $\Pi_j$ is the projector on $\spc R_j \otimes \spc M_j$, $\Tr_{\spc M_j}$ is the partial trace over $  \spc M_j$,   and $\rho_0$ is a fixed state with support inside $\spc{H}_{\rm enc}$.    
  The encoding space has   dimension
 \begin{align*}
\dim\spc{H}_{\rm enc}   &  =  \sum_{j\in\set S_\epsilon} \,   (2j +1) \le       \, (2j_0  +1) \left(2 \sqrt{  N \ln\frac{2}{\epsilon}}+ 1\right)   \, ,  
\end{align*}
growing as $N^{3/2}$.  
The initial state can  be recovered,  up to  error $\epsilon$, by  a suitable decoding channel  \cite{supplemental}.

 Theorem \ref{thm:faithful} guarantees   that  $N$ identical copies of a mixed state with known purity  can be stored faithfully to $\epsilon$  into   $3/2\log N$ qubits,  plus an overhead   that is doubly logarithmic  in $1/\epsilon$.  This result is good news for  future  implementations,  because the overhead  grows slowly with the required accuracy.   For example, when $p=0.6$, 
  $N=   20$ identically prepared qubits  with Bloch vectors pointing in arbitrary direction can be compressed into 8 qubits with an error smaller than $1\%$. 
 In addition to the fully quantum version of the protocol, one can construct a hybrid version where the initial state is stored partly into qubits and partly into classical bits, as  discussed  in the zero-error case.    In the hybrid version, the discontinuity between zero-error and approximate compression  pertains to  the number of classical bits needed to communicate the value of $j$, which decreases from $\log N$ to $1/2  \log N$ as soon as a nonzero error is tolerated.

Our result highlights a  radical difference between  mixed and pure states: for mixed states, every finite  error tolerance  $\epsilon  >  0$  allows one to reduce the size of the compression space from the original $2  \,  \log N$    qubits   to  $3/2 \,  \log N$  qubits.     Such a  discontinuity  does not take place for pure states:  for pure states with completely unknown Bloch vector,  every compression protocol    
 with tolerance $\epsilon$  requires at least   $(1-  2  \epsilon)\,  \log  N$  qubits \cite{supplemental}.

It is worth commenting  on the importance of knowing the purity.  Our approximate protocol requires  the purity to be perfectly known, so that one can  encode only  the subspaces where the quantum number $j$ is in a strip around  the most likely value.    If the purity is   only partially known,  the protocol can be adapted  by broadening the size of the strip, i.\,e., by changing the set $\set{S}_\epsilon$.   Specifically, suppose that  the eigenvalues of $\rho_{\st n}$ are known up to an error $\Delta p =  O(  N^{-\gamma})$, with $\gamma  \ge 1/2$. In this case,  the number of encoding qubits can be reduced to $3/2\,  \log N+g(\epsilon,\gamma)$ where $g$ is a function depending on $\epsilon$ and $\gamma$, but not on $N$.  Hence, the discontinuity between zero-error and approximate compression persists.    However, the situation is different if the eigenvalues are known with less precision:   if the  error in the specification of the eigenvalues scales as   $ N^{-\gamma}$ with $\gamma < 1/2$, then the number of encoding qubits becomes $(2-\gamma)\,  \log  N$.   Quite intriguingly, the separation between the two regimes takes place exactly when the knowledge of the eigenvalues becomes more precise than the knowledge that could be extracted through spectrum estimation  \cite{KeylWerner}.    Note that our protocol  can be combined for free with spectrum estimation, which only requires measuring the value of $j$. 
  However, the \emph{a posteriori} knowledge of the measurement outcome cannot replace  the \emph{a priori} knowledge of the spectrum:   indeed,  finding the outcome  $j$ leads to  estimating the  maximum eigenvalue  as $\hat p =1/2  +  j/(N+1)$   \cite{KeylWerner} and then to encoding the state $\rho_{\st n,j}$ into $\lceil \log(2j+1)\rceil$ qubits. In order to decode, the receiver needs a classical message communicating  the value of $j$, which requires $\lceil \log (N/2+1)  \rceil$ bits in the one-shot scenario. This  leads to the same resource scaling  as in the zero-error case, i.~e., approximately $\log N$ qubits to send the encoded state and $\log N$ bits to communicate $j$.

The  protocol of   Theorem \ref{thm:faithful} is optimal within the physically relevant class of protocols constrained  by covariance under rotations and by the preservation of the magnitude of the angular momentum.     
More precisely,   we have the following  \cite{supplemental}.
\begin{theo}\label{thmopt}     Every compression protocol that encodes a complete  $N$-qubit ensemble  into   $(  3/2  -  \delta)  \,  \log N$ qubits  with  covariant encoding and a decoding that  preserves the magnitude of the total angular momentum  will necessarily have error $e \ge1/2$ in the asymptotic limit. 
\end{theo}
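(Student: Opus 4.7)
The plan is to translate the two symmetry constraints into a lower bound on $\dim\spc{H}_{\rm enc}$ and then to use the concentration of the Schur-Weyl distribution $q_{j,N}$ to show that any smaller encoding drives the total variation between the input and output $j$-marginals to unity.

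First, I would exploit the rotational covariance of $\map{E}$ to endow $\spc{H}_{\rm enc}$ with an $\grp{SU}(2)$ action and to decompose $\spc{H}_{\rm enc}=\bigoplus_{j\in S}\spc{R}_{j}\otimes\spc{M}_{j}^{\rm enc}$, where $S\subseteq\{0,\ldots,N/2\}$ is the set of spin values actually present. Each such block contains a full irrep $\spc{R}_j$ of dimension $2j+1$, which forces $\dim\spc{H}_{\rm enc}\ge\sum_{j\in S}(2j+1)$.

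Second, I would use the $J^2$-preservation of $\map{D}$: since the decoding commutes with the projective measurement of $J^2$, the $j$-marginal of $\map{D}\circ\map{E}(\rho_{\st n}^{\otimes N})$ coincides with that of $\map{E}(\rho_{\st n}^{\otimes N})$, and the latter is confined to $S$ by the very structure of $\spc{H}_{\rm enc}$. The data-processing inequality applied to this measurement then yields
\begin{equation*}
e_N\;\ge\;1-\sum_{j\in S}q_{j,N}\,.
\end{equation*}
Finally, I would combine these two ingredients with the behavior of $q_{j,N}$: this distribution peaks at $j_{0}=(p-\tfrac12)(N+1)$ with standard deviation $\Theta(\sqrt N)$, so $\max_j q_{j,N}/(2j+1)=\Theta(N^{-3/2})$. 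The weighted linear-programming bound $\sum_{j\in S}q_{j,N}\le\dim\spc{H}_{\rm enc}\cdot\max_j[q_{j,N}/(2j+1)]$ then shows that any encoding into $(3/2-\delta)\log N$ qubits, for which $\dim\spc{H}_{\rm enc}\le N^{3/2-\delta}$, yields $\sum_{j\in S}q_{j,N}=O(N^{-\delta})\to 0$, so that $e_N\to 1\ge 1/2$ as $N\to\infty$.

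The main obstacle I anticipate lies in making the third step rigorous: the linear-programming estimate requires sharp control of $\max_j q_{j,N}/(2j+1)$ uniformly in $j$, both at the peak and in the tails. These bounds can be obtained from the de Moivre-Laplace approximation to the binomial underlying Eq.~(\ref{dist}), but combining them cleanly with the dimension accounting of Step 1 to pin down the precise coefficient $3/2$---rather than any smaller power of $N$---calls for a careful bookkeeping of the constants in both the Gaussian tail estimate and the Schur-irrep dimension sum, ruling out the possibility that $S$ could be chosen to exploit small-$j$ blocks with disproportionately large ratio $q_{j,N}/(2j+1)$.
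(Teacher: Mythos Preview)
Your outline is correct and mirrors the paper's proof: both use covariance to identify a set $S\subseteq\{0,\dots,N/2\}$ of irrep labels present in $\spc{H}_{\rm enc}$ with $d_{\rm enc}\ge\sum_{j\in S}(2j+1)$, then use the $J^2$-preservation of $\map D$ to confine the output $j$-marginal to $S$, and finally argue that $\sum_{j\in S}q_{j,N}\to 0$ whenever $d_{\rm enc}=O(N^{3/2-\delta})$. The only difference is in this last step. The paper splits $S$ into a near-peak window $S_1$ (whose cardinality is bounded via $|S_1|\cdot(2j_0+1)\lesssim d_{\rm enc}$) and a tail $S_2$ (handled via Hoeffding), whereas you propose the single weighted estimate $\sum_{j\in S}q_{j,N}\le d_{\rm enc}\cdot\max_j q_{j,N}/(2j+1)$. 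Your route is actually shorter, and the obstacle you anticipate dissolves immediately once you read off from Eq.~(\ref{dist}) that
\[
\frac{q_{j,N}}{2j+1}=\frac{1}{2j_0}\Bigl[B\Bigl(N{+}1,p,\tfrac N2{+}j{+}1\Bigr)-B\Bigl(N{+}1,p,\tfrac N2{-}j\Bigr)\Bigr]\le\frac{1}{2j_0}\,\max_k B(N{+}1,p,k)=O(N^{-3/2})
\]
\emph{uniformly in $j$}: the ratio is controlled by the prefactor $1/(2j_0)=\Theta(N^{-1})$ together with the binomial mode $\Theta(N^{-1/2})$, so there is no danger from small-$j$ blocks. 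As a bonus, your data-processing bound on $e_N$ is a factor of two sharper than the pinching bound the paper uses, yielding $e_N\to 1$ rather than merely $e_N\ge 1/2-o(1)$.
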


\begin{figure}[t!]
      \includegraphics[width=0.45\textwidth]{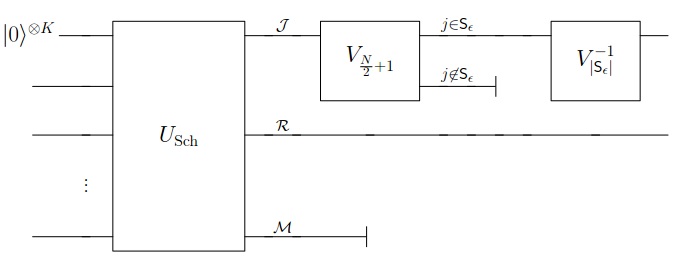}\caption{{\bf A quantum circuit for encoding.}   The Schur transform turns the initial $N$ qubits together with $K=  O(\log N)$ ancillary qubits into three registers: the index register $\spc J$, the representation register $\spc{R}$, and the multiplicity register $\spc{M}$. The multiplicity register is  discarded.   The index register is encoded into $N/2+1$ qubits by the position embedding $V_{N/2+1}$.  The qubits in positions outside $\set S_\epsilon$ are discarded and the remaining qubits are reencoded into $\lceil  \log  |\set S_\epsilon|  \rceil$ qubits. }
       \label{fig:encoding}
\end{figure}
      
\begin{figure}[t!]      
       \includegraphics[width=0.45\textwidth]{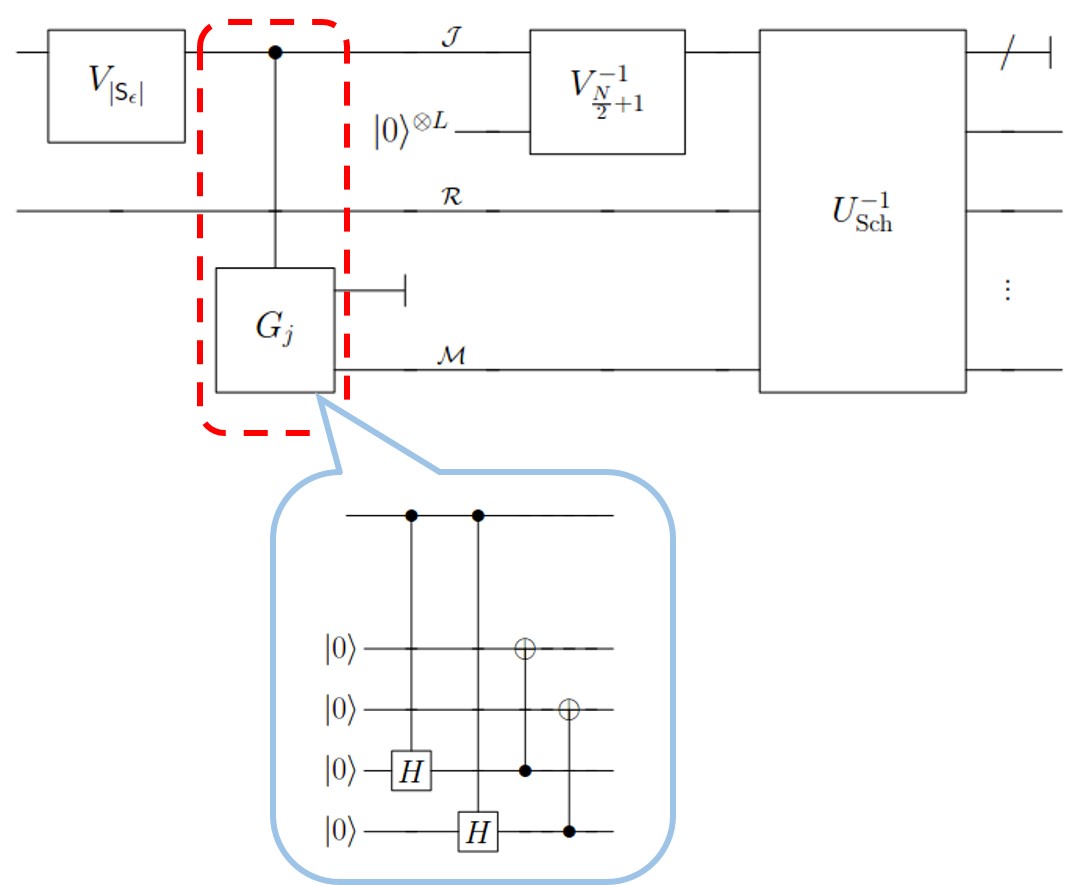}
       \caption{{\bf A quantum circuit for decoding.}    The first operation   is the position embedding   $V_{|  \set S_\epsilon|}$, which produces  $|\set S_\epsilon|$ output qubits.   The $j$th of these qubits   controls the generation of a maximally mixed state of rank $m_j$ (achieved by the controlled operation  $G_j$, represented explicitly in the blue inset for $m_j=4$).   The third step is the initialization of   $L=  N/2+1  -   |\set S_\epsilon|$  qubits which are put in positions corresponding to values of $j$ outside $\set S_\epsilon$.   After a total of  $N/2  +1$ qubits are in place, the inverse of the position embedding is performed, followed by the inverse of the Schur transform. The  output of the circuit is a state on $N$ qubits and $  K  =  O(\log N)$ ancillas, which are finally discarded. }\label{fig:decoding}   
\end{figure}

Let us now discuss the complexity of the compression protocol.    To operate on the input state we use the Schur transform \cite{harrowthesis,SchurTransform,SchurTransform2}, which transforms the initial $N$ qubits  together with $O(\log N)$ ancillary qubits into three registers:  (i) the index register, where the value of $j$ is stored into the state of $\log  (N/2+1)$ qubits,  (ii)  the representation register,  which uses $\log (N+1)$ qubits to encode the representation spaces, and (iii) the multiplicity register, where the multiplicity spaces are encoded into  $O(N)$ qubits (see  Fig. \ref{fig:encoding}).          
 Since the implementation of the Schur transform in a quantum circuit  is approximate, we focus on  approximate compression, so that  the Schur transform error can be absorbed into the compression error.     Let us analyze first the encoding.  The first step is the approximate Schur transform,  whose complexity is   ${\rm poly}( N, \log 1/\epsilon')$,  $\epsilon'$ being the approximation error   \cite{harrowthesis,SchurTransform}. 
We  set $\epsilon'$  to be vanishing exponentially  in $N$,   resulting in a  complexity  ${\rm poly}  (N)$ for the implementation of the Schur transform.  
 After the Schur transform has been performed, the encoding  circuit   embeds the  index register  into an exponentially larger register of $N/2+1$ qubits, transforming the state $|j\>$ into the state where the $j$th qubit is set  to $|1\>$ and the rest of the qubits are set to $|0\>$      \cite{SchurTransform2}.      
We  refer to this  transformation as position embedding and denote it by $  V_{D}$, where $D$ is the dimension of the register that is being embedded (in this case $D=N/2+1$).   The point of position embedding is to physically  encode the value of $j$ in a form that makes it easy  to check whether or not $j$ belongs to the set $S_\epsilon$. In fact, such a check can be equivalently implemented on a classical computer.        After this step, the circuit discards the qubits in positions outside the set $S_\epsilon$ and transforms the remaining qubits into $\log |S_\epsilon|$ qubits, by applying  $V^{-1}_{  |S_{\epsilon}|}$.  
    Now, the complexity of position embedding is upper bounded by $D   (\log D)^2$ \cite{SchurTransform2}.   Since $j$ ranges from 0 to $N/2$,   the total complexity of the position embedding and of its inverse scales  as $    N    ( \log N )^2$. 
 From the above reasoning, it is clear that the bottleneck of the encoding is  the implementation of the Schur transform, which leads to an overall complexity of ${\rm poly}(N)$ for the encoding circuit.  The situation is similar for the decoding, which also uses  position embedding to perform operations depending on  $j$ (see Fig. \ref{fig:decoding}).  The only new  parts  are the initialization  of $N/2  + 1  -  |\set S_\epsilon|$ qubits in the index register and the preparation  of maximally mixed states of rank $m_j$ in the multiplicity register, which  can be approximately generated with exponential precision in $O(N^2)$ operations  \cite{supplemental}.      Summing over the values of $j$ in  $\set S_\epsilon$, we then obtain a number of operations upper bounded by  $O(  N^2)  |\set S_\epsilon|      =  O(  N^{5/2})$. 
  From the above count it is clear that  the overall complexity is polynomial in $N$.    In addition to the computational complexity, it is worth discussing the size of the ancillary systems needed in our compression protocol.  Since the multiplicity register is discarded, the Schur transform in our protocol needs only an ancilla of  $O(\log N)$ qubits \cite{estimation}.   The position embeddings  require   ancillas of size $O(N)$, but, as mentioned earlier,  they can be implemented on a classical computer.  Hence, the total  number of qubits that need to be kept coherent throughout our protocol scales only as $O(\log N)$.



Our compression protocol, presented for qubits, can be generalized to quantum systems of arbitrary dimension  $d$. 
In this case, an ensemble of  $N$ identically prepared rank-$r$  states with known spectrum can be compressed   with error less than $\epsilon$ into approximately $  \left(2dr  -  r^2-1\right)/2 \, \log N$ qubits. In addition, one can take advantage of the presence of degeneracies  and further reduce the number of qubits:    every time the same eigenvalue appears in the spectrum the number of qubits is reduced by at least $1/2\log N$(see \cite{supplemental} for the exact value).   Again, the protocol can be implemented efficiently and is optimal under suitable symmetry assumptions \cite{supplemental}.

In this Letter we showed how to efficiently store  ensembles of  identically prepared quantum systems   into an exponentially smaller memory space.
    For mixed states we discovered that, whenever a  nonzero error is allowed, the size of the  memory  is cut down in a discontinuous way,  provided that the spectrum of the state is known with sufficient precision. 
Intriguingly, the dropoff in the memory size  takes place as soon as the prior information about the eigenvalues is more than the information that could be extracted by a measurement on the input copies.  
 Our approximate  compression protocols  can be implemented efficiently on a quantum computer.

\medskip
{\emph{Acknowledgments.} 
We thank M. Ozols and the referees of this Letter for a number of comments that stimulated  substantial improvements of the original manuscript.  This work is supported by the National Natural Science Foundation of China through Grant No. 11450110096,  by the Foundational Questions Institute (Grant No. FQXi-RFP3-1325),  by the 1000 Youth Fellowship Program of China, and by the HKU Seed Funding for Basic Research.    }

\bibliographystyle{apsrev4-1}
\bibliography{compression} 
\appendix

\begin{widetext}

\section{PROOF OF THEOREM 1}\label{app:opt_zeroerror}
Here we show the optimality of our the  error protocol in the main text. Specifically, we show   that  no zero-error protocol exists that compresses a complete ensemble of mixed states into less than  $ \lceil    2  \log  (N+2)-2\rceil$.

\subsection{The zero error condition}   

The condition for zero-error compression requires that the average error defined as 
\begin{align}\label{error}
e_N= \sum_{\st n} \, p_{\st n}
   \frac{ \left\|\rho^{\otimes N}_{\st n}-\map{D}\circ\map{E}\left(\rho_{\st n}^{\otimes N}\right)\right\|} 2  =  0 \, \, .
\end{align}
This condition immediately implies  $\|\map D\circ\map E(\rho_{\st n}^{\otimes N})-\rho_{\st n}^{\otimes N}\|=0$ for every $\st n$ except for a zero-measure set. Since the Hermitian operator $\map D\circ\map E(\rho_{\st n}^{\otimes N})-\rho_{\st n}^{\otimes N}$ has only zero eigenvalues, it must be a null operator.
 Hence,  the channel  $\map C  :  = \map D \circ\map E$ must fix $\rho_{\st n}^{\otimes N}$, namely that  
\begin{align}\label{above}  
\map C(\rho_{\st n}^{\otimes N})   =  \rho_{\st n}^{\otimes N} 
\end{align}
for every $\st n$ except for a set of zero measure.  Since $p_{\st n}$ has full support on the Bloch sphere, the above condition holds for a dense set of points on the Bloch sphere. As a result,  for every Bloch vector $\st{n}$  there exists a sequence $\left\{\rho_{\st {n}_k}^{\otimes N}\right\}$ of Bloch vectors  satisfying Eq.  (\ref{above})   such that $\lim_{k\to \infty} \st n_k  =\st n  $ and 
$$\lim_{k\to\infty}\rho_{\st{n}'_k}^{\otimes N}=\rho_{\st{n}}^{\otimes N} \, .$$
Consequently, we have
\begin{align*}
\left\|\map D\circ\map E(\rho_{\st n}^{\otimes N})-\rho_{\st n}^{\otimes N}\right\|_1&=\left\|\map D\circ\map E\left(\lim_{k\to\infty}\rho_{{\st n}'_k}^{\otimes N}\right)-\lim_{k\to\infty}\rho_{{\st n}'_k}^{\otimes N}\right\|\\
&=\left\|\lim_{k\to\infty}\left[\map D\circ\map E(\rho_{{\st n}'_k}^{\otimes N})-\rho_{{\st n}'_k}^{\otimes N}\right]\right\|\\
&=0,
\end{align*}
which implies that $\map C(\rho_{\st n}^{\otimes N})   =  \rho_{\st n}^{\otimes N} $ for every vector $\st{n}$ on the Bloch sphere.

\subsection{The algebra associated to the fixed points of a channel}
 Here  we  develop a technique  that generates fixed points of a given channel starting from an initial set of fixed points. Our technique is based on  a result by Blume-Kohout \emph{et  al} \cite{IPS} characterizes  the fixed points. Specifically, Theorem 5 of  Ref. \cite{IPS} guarantees that   one can find a  decomposition  of the  Hilbert space as $ \spc H=  \bigoplus_{k}  \left( \spc{L}_k\otimes \spc{M}_k  \right)$, with the property that the fixed points of a given channel acting on $\spc{H}$ are all the operators of the form  
\begin{align}\label{formfix}
A=\bigoplus_k    \left(  A^{(k)}\otimes  \omega^{(k)}_0 \right)\, ,
\end{align}
where $A^{(k)}$ is an arbitrary matrix on $\spc{L}_k$ and $\omega^{(k)}_0$ is a fixed non-negative matrix on $\spc{M}_k$. 
  Using this fact, we  develop a technique  that generates fixed points of a channel starting from an initial set of fixed points.  
  \begin{prop}\label{prop:fixalgebra}
  Let ${\rm Fix}   (\map C)$ be the set of fixed points of channel $\map C$, let $\{  A_x \}_{x\in\set X}   \subset  {\rm Fix}  (\map C)$ be a subset of non-negative fixed points, and let $\mu (\d x)$ be a non-negative measure on $\set X$.  Then, the set of operators    
   \[    \map A    =  E^{-1/2}  \, {\rm Fix}  (\map C)  \, E^{-1/2}   
 \, , \qquad    E : =   \int  \mu( \d x) \,    \,  A_x  \, ,  \]
 is a matrix $\ast$-algebra  (i.~e.~a matrix algebra closed under adjoint).  Moreover,  one has   $ E^{1/2}    \map A    E^{1/2}  \subseteq  \Fix(\map C)$. 
  \end{prop}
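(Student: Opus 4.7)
The plan is to exploit the structural characterization of $\Fix(\map C)$ given by Theorem 5 of Ref.~\cite{IPS}, as quoted in the excerpt, and then to translate all operations into this block-diagonal form, where the algebraic structure becomes transparent.

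First I would fix the decomposition $\spc H = \bigoplus_k (\spc L_k \otimes \spc M_k)$ provided by that theorem, together with the fixed non-negative matrices $\omega_0^{(k)}$ on $\spc M_k$, so that every fixed point of $\map C$ is of the form in Eq.~(\ref{formfix}). Since each $A_x$ is a fixed point, it decomposes as $A_x = \bigoplus_k \bigl( A_x^{(k)} \otimes \omega_0^{(k)} \bigr)$ with $A_x^{(k)} \ge 0$. Integrating against $\mu$ and pulling the tensor factor $\omega_0^{(k)}$ out of the integral yields $E = \bigoplus_k \bigl( E^{(k)} \otimes \omega_0^{(k)} \bigr)$ with $E^{(k)} := \int \mu(\d x)\, A_x^{(k)} \ge 0$. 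Defining $E^{-1/2}$ as the Moore--Penrose pseudoinverse of $E^{1/2}$ (the only subtle point, since $E$ and $\omega_0^{(k)}$ need not be of full rank), one has $E^{-1/2} = \bigoplus_k \bigl( (E^{(k)})^{-1/2} \otimes (\omega_0^{(k)})^{-1/2} \bigr)$, and the products $(E^{(k)})^{-1/2} E^{(k)} (E^{(k)})^{-1/2}$ and $(\omega_0^{(k)})^{-1/2} \omega_0^{(k)} (\omega_0^{(k)})^{-1/2}$ reduce to the support projectors $P^{\spc L_k}$ and $P^{\spc M_k}$, respectively.

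Next I would compute $\map A$ explicitly. For an arbitrary fixed point $B = \bigoplus_k \bigl( B^{(k)} \otimes \omega_0^{(k)} \bigr)$, the conjugation gives
\begin{align*}
E^{-1/2} B E^{-1/2} = \bigoplus_k \Bigl( (E^{(k)})^{-1/2} B^{(k)} (E^{(k)})^{-1/2} \otimes P^{\spc M_k} \Bigr),
\end{align*}
so $\map A$ consists precisely of the operators of the form $\bigoplus_k \bigl( C^{(k)} \otimes P^{\spc M_k} \bigr)$ with $C^{(k)}$ an arbitrary matrix supported on $P^{\spc L_k}$. This set is manifestly closed under adjoint (since $(P^{\spc M_k})^\dagger = P^{\spc M_k}$) and under multiplication (since $P^{\spc M_k} P^{\spc M_k} = P^{\spc M_k}$, the two tensor factors multiply independently), so $\map A$ is a matrix $*$-algebra with identity $\bigoplus_k \bigl( P^{\spc L_k} \otimes P^{\spc M_k} \bigr)$.

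Finally, for the inclusion $E^{1/2} \map A E^{1/2} \subseteq \Fix(\map C)$, I would take a generic element $C = \bigoplus_k \bigl( C^{(k)} \otimes P^{\spc M_k} \bigr) \in \map A$ and compute
\begin{align*}
E^{1/2} C E^{1/2} = \bigoplus_k \Bigl( (E^{(k)})^{1/2} C^{(k)} (E^{(k)})^{1/2} \otimes \omega_0^{(k)} \Bigr),
\end{align*}
using that $(\omega_0^{(k)})^{1/2} P^{\spc M_k} (\omega_0^{(k)})^{1/2} = \omega_0^{(k)}$. This is again of the canonical form in Eq.~(\ref{formfix}), hence a fixed point of $\map C$, concluding the proof. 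The only genuine obstacle is the bookkeeping with pseudoinverses on the non-trivial supports of $E^{(k)}$ and $\omega_0^{(k)}$; once one consistently reads $E^{-1/2}$ as a pseudoinverse and interprets $\map A$ as acting on the support of $E$, every step becomes a direct consequence of the block decomposition.
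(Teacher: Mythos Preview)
Your proof is correct and follows essentially the same route as the paper's: both invoke the structural decomposition of $\Fix(\map C)$ from Ref.~\cite{IPS}, compute $E$ and $E^{-1/2}$ blockwise, identify $\map A$ with $\bigoplus_k \mathsf{B}(\spc S_k)\otimes P_k$ (where $\spc S_k=\Supp E^{(k)}$ and $P_k$ is the support projector of $\omega_0^{(k)}$), and then check the inclusion $E^{1/2}\map A E^{1/2}\subseteq\Fix(\map C)$ by direct calculation. Your handling of the pseudoinverse is slightly more explicit than the paper's, but the argument is the same.
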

 [Notation:  for a non-invertible operator $E$,  we define $E^{-1}$ as the inverse on the support of $E$.] 
  
\Proof ~  Writing each operator $A_x$ in the form (\ref{formfix}), we obtain 
\[   E   =   \bigoplus_k  \left(    E^{(k)}  \otimes \omega_0^{(k)}\right)  \, , \qquad E^{(k)}   =   \int \mu(\d x)      A_x^{(k)} \, .  \]
Hence, for a generic fixed point  $A\in   \Fix(\map C)$, decomposed as in Eq. (\ref{formfix}), we have 
\[     E^{-1/2}   A  E^{-1/2}  =  \bigoplus_k \left[     \left( E^{(k)} \right)^{-1/2}    A_k        \left( E^{(k)} \right)^{-1/2}  \otimes   P_k  \, ,  \right]\]
where $P_k$ is the projector on the support of $  \omega_0^{(k)}$.  Since  each $A_k$ is a generic operator on $\spc L_k$, we have 
\[   E^{-1/2}   \, \Fix (\map C)  \,  E^{-1/2}  =  \bigoplus_k \left[    {\mathsf B}   ( \spc S_k )  \otimes P_k \right] \, ,   \]
where ${\mathsf B}   ( \spc S_k )$ denotes the algebra of all linear operators on the subspace $\spc S_k  =  \Supp \left[  E^{(k)}\right]$.    Hence,  $\spc A  =   E^{-1/2}   \, \Fix (\map C)  \,  E^{-1/2}$ is an algebra and is closed under adjoint. 
  On the other hand, we have  
\[   E^{1/2}   \, \spc A  \,  E^{1/2}  =  \bigoplus_k \left[    {\mathsf B}   ( \spc S_k )  \otimes M^{(k)}_0 \right] \, ,  \]  
meaning that every operator in  $   E^{1/2}   \, \spc A  \,  E^{1/2}$ is of the form (\ref{formfix})---that is, it is a fixed point.   \qed

\subsection{The minimal algebra required by the zero error condition}

Let us apply  Proposition \ref{prop:fixalgebra} to the channel $\map C  =   \map D\circ \map E$, resulting from the 
concatenation of the encoding and the decoding in a generic zero-error protocol.  By the zero-error condition, all the states $\rho_{\st n}^{\otimes N}$  are fixed points.   
The states can be decomposed as
 \begin{align}\label{statesagain} \rho_{\st n}^{\otimes N} =  \bigoplus_{j=0}^{N/2}  \,  q_{j,N}  \, \left(\rho_{\st n,j}\otimes \frac{I_{m_j }}{m_j}\right) \, .
 \end{align}
 \emph{A priori}, this block decomposition could be completely unrelated with the block decomposition of Eq. (\ref{formfix}).   Proving that the two decompositions coincide will be the main part of our argument.  

Choosing the measure  $\mu(\d x)$ in Proposition \ref{prop:fixalgebra} to be  the invariant measure over $\st n$, the average operator $E$ is given by  
\[ E  =  \bigoplus_{j=0}^{N/2}  \,  q_{j,N}  \, \left(  \frac{I_j}{d_j}\otimes \frac{I_{m_j }}{m_j}\right) \, .    \] 
Hence, the algebra $\spc A$ defined in Proposition \ref{prop:fixalgebra} must contain all the operators of the form  
\[   E^{-1/2}  \, \rho_{\st n}^{\otimes N} \,  E^{-1/2}  =     \bigoplus_{j=0}^{N/2}  \,   \, \left(  d_j  \, \rho_{\st n,j}\otimes I_{m_j } \right)  \, , \]
for every unit vector $\st n$.  Hence, $\map A$  must contain the smallest algebra  $\spc A_{\min}$  generated by the above operators.  
We will now characterize this algebra:  

\begin{prop}\label{prop:top}  
    If the states in Eq. (\ref{statesagain}) are not maximally mixed,  $\spc A_{\min}$  contains the matrix algebra  of all operators on the symmetric subspace, corresponding to $j=  N/2$ in the decomposition (\ref{statesagain}).   
\end{prop}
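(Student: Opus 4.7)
The plan is to establish that the projection of $\spc A_{\min}$ onto the $j=N/2$ block of the Schur-Weyl decomposition is the full matrix algebra $B(\spc R_{N/2})$. The argument proceeds in two main steps: spectral calculus extracts spin-coherent projectors from the generators, after which the overcompleteness of those projectors yields the whole block algebra.

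First, each generator $A_{\st n}:= E^{-1/2}\rho_{\st n}^{\otimes N}E^{-1/2} = \bigoplus_{j}d_j\,\rho_{\st n,j}\otimes I_{m_j}$ is block diagonal, so the same holds for every element of $\spc A_{\min}$. Restricting to the $j=N/2$ block, where the multiplicity space is one-dimensional, the generator reduces to $d_{N/2}\,\rho_{\st n, N/2}$. In the eigenbasis of the collective spin aligned with $\st n$, this operator is diagonal with eigenvalues proportional to $p^{N-k}(1-p)^k$ for $k=0,\dots,N$. The hypothesis that $\rho_{\st n, N/2}$ is not maximally mixed rules out $p=1/2$; together with $p\neq 0,1$ already fixed in the setup, this makes the $N+1$ eigenvalues pairwise distinct, with the unique largest one attained on the coherent state $|\st n\rangle^{\otimes N}$.

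Next, distinctness of the spectrum allows me to apply Lagrange interpolation to produce a polynomial $P_{\st n}$ with $P_{\st n}(d_{N/2}\,\rho_{\st n, N/2}) = |\st n\rangle^{\otimes N}\langle\st n|^{\otimes N}$. Because $\rho_{\st n, N/2}$ is invertible on $\spc R_{N/2}$, Cayley--Hamilton realizes the identity on $\spc R_{N/2}$ as a polynomial in $d_{N/2}\,\rho_{\st n, N/2}$ with no constant term, so these manipulations stay inside the algebra even if it is a priori non-unital. Hence the restriction of $\spc A_{\min}$ to the $j=N/2$ block contains the rank-one coherent-state projector $|\st n\rangle^{\otimes N}\langle\st n|^{\otimes N}$ for every unit vector $\st n$. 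I would then invoke the overcompleteness of spin-coherent states: under the adjoint action of $\grp{SU}(2)$, the space $B(\spc R_{N/2})$ decomposes multiplicity-freely as $\bigoplus_{J=0}^{N}\spc R_J$, and the component of $|\st n\rangle^{\otimes N}\langle\st n|^{\otimes N}$ in each irreducible summand is proportional to a spherical harmonic $Y_{J}^{M}(\st n)$, all of which are nonvanishing as $\st n$ ranges over $S^2$. Integrating against $Y_{J}^{M*}(\st n)$ yields every irreducible tensor operator on $\spc R_{N/2}$, so real linear combinations of the projectors span all of $B(\spc R_{N/2})$, proving the proposition.

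The main obstacle is passing from continuously many full-rank generators to rank-one projectors. This step is possible only because of the non-degeneracy of the spectrum of $\rho_{\st n, N/2}$, which is precisely what the hypothesis of not being maximally mixed provides; without it, the polynomial functional calculus on the $j=N/2$ block would collapse to scalar multiples of the identity, trivially failing to span $B(\spc R_{N/2})$.
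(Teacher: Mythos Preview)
Your argument shows that the image of $\spc A_{\min}$ under projection to the $j=N/2$ block is all of $\mathsf{B}(\spc R_{N/2})$, but the proposition asserts the stronger containment $\mathsf{B}(\spc R_{N/2})\subseteq\spc A_{\min}$, where $\mathsf{B}(\spc R_{N/2})$ is embedded as operators vanishing on all other blocks. These are not the same in general: the diagonal subalgebra $\{a\oplus a:a\in M_n\}\subset M_n\oplus M_n$ surjects onto each factor yet contains neither. Concretely, when you apply your Lagrange polynomial $P$ to $A_{\st n}$, the $j=N/2$ block is indeed the coherent-state projector, but the lower blocks carry $P(d_j\rho_{\st n,j})$, which you never eliminate; your spherical-harmonic integration then produces operators whose $j=N/2$ component is the desired irreducible tensor but whose $j<N/2$ components are uncontrolled. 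The gap can be closed by a dimension count: since $\spc A_{\min}$ is a finite-dimensional $\ast$-algebra, Wedderburn furnishes a simple summand isomorphic to $M_{N+1}$ mapping onto $\mathsf{B}(\spc R_{N/2})$, and that summand must project to zero in every $\mathsf{B}(\spc R_j)$ with $j<N/2$ because there is no nonzero $\ast$-homomorphism from $M_{N+1}$ into a matrix algebra of strictly smaller size. You should make this step explicit.

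The paper avoids the issue by first producing an element of $\spc A_{\min}$ supported \emph{only} on the top block. Integrating $A_U$ against the $\grp{SU}(2)$ character $\chi_U^{(N)}$ annihilates every block with $2j<N$ by orthogonality of matrix elements, and an explicit Clebsch--Gordan computation (where ``not maximally mixed'' enters as $(1-e^{-\beta})^N\neq 0$) shows the resulting operator $X_N$ has full rank on $\spc R_{N/2}$. Functional calculus on $X_N$ then puts the projector $P_{N/2}$ itself into $\spc A_{\min}$, after which $P_{N/2}A_U P_{N/2}$ is already localized to the symmetric block and the double commutant theorem finishes. This character-integral localization is also what drives the induction in the next proposition, so it is worth knowing even if your coherent-state route, once patched, is a legitimate alternative.
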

\Proof  ~  Let us express  the state $\rho   =  p  |0\>\<0|  +  (1-p)  |1\>\<1|$ as $\rho  =   e^{-\beta  Z}/\Tr[   e^{-\beta  Z}]$, $Z  =  |0\>\<0|  -  |1\>\<1|$  for a suitable $\beta\ge 0$.    
By definition,  for every unitary $U  \in  \grp{SU}(2)$,  the algebra $\spc A_{\min}$ contains the operator  
\begin{align}
\nonumber  A_U    &: =      E^{-1/2}     (U\rho  U^{\dag})^{\otimes N}   E^{-1/2}\\
\label{au} &   = \bigoplus_{j=0}^{N/2}  \,       \frac{d_j}{\Tr \left[ e^{-\beta  J^{(j)}_z}\right]}  
\, \left( U^{(j)}  \, e^{-\beta  J^{(j)}_z}U^{(j)\dag} \otimes I_{m_j } \right)  \, ,   
 \qquad \qquad J_z^{(j)}   =  \sum_{m=-j}^j   \,  m  \, |j,m\>\<j,m|    
 \end{align} 
 where $U^{(j)}$ denotes the $(2j+1)$-dimensional irreducible representation of $\grp{SU} (2)$.
Moreover,   since the algebra  $\spc A_{\min}$  is closed under linear combinations, $\spc A_{\min}$ must contain the operator 
\[  X_l  =    \int   \d U  \,  \chi_U^{(l)} ~   A_U  \, , \]
where $\chi_U^{(l)}$ are the characters of the irreducible representations of $\grp{SU}(2)$ given by $  \chi_U^{(l)}  =  \Tr  [  U^{(l)}]$.  
Let us set $l=  N$. In this case, the orthogonality of $\grp{SU}(2)$ matrix elements eliminates all terms in the block decomposition of $\rho^{\otimes N}$, except for the term with $j=N/2$. Notice that in this case the multiplicity subspace is trivial. Hence,  one has  
\begin{align*}
X_N  &  =  \int   \d U  \,  \chi_U^{(N)}  \,         d_{N/2}       \, U^{(N/2)}  \, \rho_{N/2}  \, U^{(N/2)\dag}     \qquad   \qquad  \rho_{N/2}   = \frac{ e^{-\beta   J_z^{(N/2)}}}{ \Tr \left[e^{-\beta   J_z^{(N/2)}}\right]} \, .
\end{align*} 
 The matrix elements of $X_N$ can be computed explicitly as 
\begin{align*}
 \left\< \frac N2,  n \right |       \, X_N  \, \left |\frac N2,   n'\right\>  &  =  \frac{d_{N/2} }{ \Tr \left[e^{-\beta  J^{(N/2)}_z}\right]}\,  \int \d U  \,  \chi_U^{(N)}  \,   \left [   \sum_{m=-N/2}^{N/2}     \,   e^{-\beta  m}    \,  \left\<  \frac N2, n \right |    U^{(N/2)}   \left |\frac N2,m\right\>  \left\< \frac N2 , m\right|    U^{(N/2)\dag}    \left |\frac N2,n'\right\>       \right] \\
&
  =  \delta_{n,n'} \,   (-1)^{n}  \,  \frac{ d_{N/2} \,     \left\<  \frac N2, n,  \frac N2, -n'  | N, 0\right\>}{  d_N \Tr \left[e^{-\beta  J^{(N/2)}_z}\right]}\,     \left [   \sum_{m=-N/2}^{N/2}     \,  (-e^{-\beta })^m     \,        \overline{\left. \left\<   \frac N2,m,  \frac N2, -m\right|  N,0\right\>}  \right] \, \\
&= \delta_{n,n'} \,   (-1)^{n}  \,  \frac{ d_{N/2} \,     \left\<  \frac N2, n,  \frac N2, -n'  | N, 0\right\>}{  d_N \Tr \left[e^{-\beta  J^{(N/2)}_z}\right]}\,     \left [   \sum_{m=-N/2}^{N/2}     \,     \,        \frac{(N!)^2(-e^{-\beta } )^m }{(N/2-m)!(N/2+m)!\sqrt{(2N)!}} \right] \,\\
&=\delta_{n,n'} \,   (-1)^{n+N/2}  \,  \frac{ d_{N/2}(N!)e^{\beta N/2}(1-e^{-\beta})^N \,     \left\<  \frac N2, n,  \frac N2, -n'  | N, 0\right\>}{  d_N \sqrt{(2N)!}\Tr \left[e^{-\beta  J^{(N/2)}_z}\right]}\, ,
 \end{align*} 
 $  \<j_1,m_1, j_2,m_2|  J, M\>$ denoting the Clebsch-Gordan coefficient. Note that the Clebsch-Gordan coefficient in the above expression is nonzero if and only if $n=n'$.  As a consequence, the operator $X_N$ has full support.   

Now, since $\spc A_{\min}$ is  an algebra, it must  contain $X_N$  as well as the whole Abelian algebra generated by it.   In particular, it must contain the projector on the support of $X_N$---which is nothing but $ P_{N/2}$, the projector on the symmetric subspace.   Moreover, it must contain all the operators of the form   \[ A_{U,N/2}    = P_{N/2}  A_U  P_{N/2}    \propto   U^{(N/2)}  \, e^{-\beta   J^{(N/2)}_z}  \, U^{(N/2)\dag}   \qquad \forall\,U  \in  \grp{SU}(2) \,.  \]
Finally,    for $\beta  \not  =  0$, it is easy to see that  the smallest algebra   $\map A_{\min, N/2}$ containing the above operators is the algebra $\mathsf{B}   (\spc R_{N/2})$.   This can be easily seen by von Neumann's double commutant theorem:    If an  operator  $B$ commutes with the non-degenerate Hermitian operator $A_{U,N/2}$ for every $U$, then $B$ must be proportional to the identity.   Hence, the double commutant  of $\map A_{N/2}$---equal to $\map A_{N/2}$ itself---is   the whole  $\mathsf{B}   (\spc R_{N/2})$.    In conclusion, we have the inclusion      $\mathsf{B}   (\spc R_{N/2})  \subseteq   \spc  A_{\min, N/2}  \subseteq \spc A_{\min}$.   \qed

\begin{prop}\label{prop:todos}  If the states in Eq. (\ref{statesagain}) are neither pure nor  maximally mixed, then $\spc A_{\min}$ is the full algebra generated by the $N$-fold tensor representation of $\grp{GL}(2)$, namely   
 \[  \spc A_{\min}  =   \bigoplus_{j=0}^{N/2}   \, \left[  \mathsf{B}  (  \spc R_j)  \otimes I_{m_j }\right] \, ,\]
$\mathsf{B}  (  \spc R_j)  $ denoting the algebra of all linear operators on the representation space $ \spc R_j$.    
\end{prop}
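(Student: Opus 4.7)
The plan is to iterate Proposition \ref{prop:top} by descending induction on $j$, peeling off one representation block at a time starting from the top sector $j = N/2$ (the base case, already handled). The reverse inclusion $\spc A_{\min} \subseteq \bigoplus_{j=0}^{N/2} \mathsf{B}(\spc R_j) \otimes I_{m_j}$ is immediate, since each generator $A_U$ already has this block-diagonal form with trivial action on the multiplicity spaces and the right-hand side is closed under sums, products, and adjoints; so the whole task is the inclusion $\supseteq$.

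For the inductive step, suppose $\mathsf{B}(\spc R_j) \otimes I_{m_j} \subseteq \spc A_{\min}$ for every $j > j_0$; in particular each block projector $P_j$ with $j > j_0$ lies in the algebra. Subtracting these higher contributions from $A_U$ yields
\begin{align*}
\tilde A_U := A_U - \sum_{j > j_0} P_j A_U P_j = \bigoplus_{j=0}^{j_0} \frac{d_j}{\Tr\left[e^{-\beta J_z^{(j)}}\right]} \, U^{(j)} e^{-\beta J_z^{(j)}} U^{(j)\dag} \otimes I_{m_j},
\end{align*}
which still belongs to $\spc A_{\min}$. I would then replay the character averaging of Proposition \ref{prop:top} at level $2j_0$ in place of $N$, forming $Y_{j_0} := \int dU \, \chi_U^{(2j_0)} \, \tilde A_U$. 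By Schur orthogonality, applied through the decomposition $D^{(j)} \otimes \overline{D^{(j)}} = \bigoplus_{k=0}^{2j} D^{(k)}$, the character of the spin-$2j_0$ irrep picks out the rank-$2j_0$ spherical-tensor component of $e^{-\beta J_z^{(j)}}$ block by block. In blocks with $j < j_0$ this component vanishes identically for degree reasons, while higher blocks have already been eliminated by the subtraction; hence $Y_{j_0}$ is supported entirely on $\spc R_{j_0} \otimes \spc M_{j_0}$.

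The main obstacle is to show that $Y_{j_0}$ has \emph{full} support in that block. Repeating the matrix-element calculation of Proposition \ref{prop:top} at level $2j_0$, each diagonal entry $\langle j_0, n | Y_{j_0} | j_0, n \rangle$ factors as a strictly positive Clebsch-Gordan coefficient $\langle j_0, n, j_0, -n | 2j_0, 0 \rangle$ times an $n$-independent sum of the form $\sum_m (-1)^{j_0-m} \langle j_0, m, j_0, -m | 2j_0, 0 \rangle e^{-\beta m}$. This $\beta$-sum is nonzero precisely when $e^{-\beta J_z^{(j_0)}}$ has a genuine degree-$2j_0$ component in its polynomial expansion over the $(2j_0+1)$-point spectrum of $J_z^{(j_0)}$, which is the case if and only if $\beta \neq 0$ --- equivalently, if and only if $\rho_{\st n}$ is not maximally mixed; the exclusion $p \neq 1$ keeps $\beta$ finite, so the expansion itself is well-defined. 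This is the only place where both hypotheses of the proposition enter. Once $Y_{j_0}$ is known to have full rank on the block, the projector $P_{j_0}$ is extracted from it by Lagrange interpolation and so lies in $\spc A_{\min}$. Then $P_{j_0} A_U P_{j_0} \propto U^{(j_0)} e^{-\beta J_z^{(j_0)}} U^{(j_0)\dag} \otimes I_{m_{j_0}}$ belongs to the algebra for every $U \in \grp{SU}(2)$, and since $e^{-\beta J_z^{(j_0)}}$ is non-degenerate, the double-commutant argument used at the end of Proposition \ref{prop:top} applies verbatim to give $\mathsf{B}(\spc R_{j_0}) \otimes I_{m_{j_0}} \subseteq \spc A_{\min}$, closing the induction.
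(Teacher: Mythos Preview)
Your proof is correct and follows essentially the same route as the paper's: descending induction from $j=N/2$, subtraction of the already-controlled higher blocks to obtain the truncated operator (the paper's $A_U'$), character averaging against $\chi^{(2j_0)}$ to isolate the $j_0$ sector, the Clebsch--Gordan computation to establish full rank there, extraction of the block projector, and the double-commutant finish. Your write-up is a bit more explicit than the paper's about the reverse inclusion and about why the lower blocks vanish under the character average, but the substance is identical.
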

\Proof ~ We prove that $\spc A_{\min}$ contains the algebra  $  \mathsf{B}  (  \spc R_j)  \otimes I_{m_j }$ for every $j$.    The proof is by induction, with $j$  starting from $N/2$ and going down  to $0$.  For $j=N/2$ we know that $\spc A_{\min}$ contains the algebra  $\mathsf{B}   (\spc R_{N/2})$ of all operators with support in the symmetric subspace.  Let us assume that $\spc A_{\min}$ contains all the algebras      $\mathsf{B}  (  \spc R_j)  \otimes I_{m_j }$ with $j  \ge  j_*+1$ and show that it must necessarily contain also the algebra           $\mathsf{B}  (  \spc R_{j_*})  \otimes I_{m_{j_*} }$.   
By construction, we know that  $\spc A_{\min}$ contains all the operators $A_U$ of the form 
\begin{align}
\nonumber  A_U      = \bigoplus_{j=0}^{N/2}  \,       \frac{d_j}{\Tr \left[ e^{-\beta  J^{(j)}_z}\right]}  
\, \left( U^{(j)}  \, e^{-\beta  J^{(j)}_z}U^{(j)\dag} \otimes I_{m_j } \right)  \, ,   
 \qquad \qquad J_z^{(j)}   =  \sum_{m=-j}^j   \,  m  \, |j,m\>\<j,m|   \, . 
 \end{align} 
Since the states  in Eq. (\ref{statesagain}) are not pure, all the blocks in the sum are non-zero.  
  Moreover,  the induction hypothesis implies that $\spc A_{\min}$ should also contain the operators $A_U'$ of the form  
  \[  A_U  '       =   \bigoplus_{j=0}^{j_*}  \,    \frac{  d_j}{\Tr \left[ e^{-\beta J^{(j)}_z}\right]}     \, \left(U^{(j) }   \, e^{-\beta  J^{(j)}_z}   U^{(j)  \dag}   \otimes I_{m_j } \right)   \, , \qquad   U  \in  SU(2) \, .  \]
 Now, we can repeat the argument used in the proof of Proposition \ref{prop:top}:  by linearity, $\spc A_{\min}$ must contain the operator 
\begin{align*}  X_{2j_*}    &=    \int   \d U  \,  \chi_U^{(2j_*)} ~   A'_U \\
&   =        \frac{  d_{j_*}}{\Tr \left[ e^{-\beta J^{(j_*)}_z}\right]}         \int   \d U  \,  \chi_U^{(2j_*)} ~  \, \left(    U^{(j_*)}  \, e^{-\beta  J^{(j_*)}_z} \,  U^{(j_*)  \dag} \otimes I_{m_{j_*} } \right)    \, . 
\end{align*}
Explicit calculation   (same as in Proposition \ref{prop:top}) shows that $X_{2j_*}$ has full rank.  
Hence, the projector on the support of $X_{2j_*}$ is $ P_{j_*}  =   I_{j_*}  \otimes I_{m_{j_*}}$.  Since $\spc A_{\min}$ should contain this projector, it must also contain all operators of the form  
\begin{align*}  A'_{U,j_*}  & =    P_{j_*}  A'_U  P_{j_*}  \\
&  \propto     U^{(j_*)}   e^{-\beta   J^{(j_*)}_z } U^{(j_*) \dag}  \otimes I_{m_{j_*}}  \, , \qquad U\in SU(2) \, .
\end{align*} 
Again, using von Neumann's double commutant theorem, it is easy to show that the smallest algebra containing all the above operators is   $  \mathsf{ B} (\spc R_{j^*})  \otimes I_{m_{j_*}}$.    
In conclusion we proved that $\spc A_{\min}$ must contain $  \mathsf{ B} (\spc R_{j^*})  \otimes I_{m_{j_*}}$.  By induction, this proves the inclusion
\[  \spc A_{\min}  \supseteq  \bigoplus_{j=0}^{N/2}   \, \left[  \mathsf{B}  (  \spc R_j)  \otimes I_{m_j }\right] \, .\]
In the other hand, the definition of $\spc A_{\min}$ implies the opposite inclusion. Hence, one must have the equality.  
 \qed 

\subsection{Zero-error compression of a complete ensemble implies zero error compression for every ensemble of permutationally invariant states}  

Propositions  \ref{prop:fixalgebra} and \ref{prop:todos} imply  the following 
\begin{cor}\label{cor:done}
If the states (\ref{statesagain}) are neither pure nor maximally mixed, every channel $\map C$ preserving them must preserve all permutationally invariant states. 
\end{cor}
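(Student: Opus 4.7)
The plan is to combine Proposition \ref{prop:fixalgebra} and Proposition \ref{prop:todos} to show that every permutationally invariant operator lies in $\Fix(\map C)$, and then invoke Schur--Weyl duality to identify the permutationally invariant states with precisely such operators.

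First, I would apply Proposition \ref{prop:fixalgebra} to the channel $\map C$, taking the subset $\{\rho_{\st n}^{\otimes N}\}_{\st n}$ of fixed points together with the rotationally invariant measure $\d \st n$ on the Bloch sphere. The resulting averaged operator is
\[
E=\int \d \st n\,\rho_{\st n}^{\otimes N}=\bigoplus_{j=0}^{N/2}q_{j,N}\left(\frac{I_j}{d_j}\otimes\frac{I_{m_j}}{m_j}\right),
\]
by the standard Schur-lemma integration over $\grp{SU}(2)$. For mixed states with $p\notin\{1/2,1\}$, the coefficients $q_{j,N}$ are strictly positive for every $j\in\{0,1,\dots,N/2\}$, so $E$ has full support on $\spc H^{\otimes N}$, and $E^{-1/2}$ is well defined on the whole Hilbert space. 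Proposition \ref{prop:fixalgebra} then yields $E^{1/2}\,\spc A\,E^{1/2}\subseteq\Fix(\map C)$, where $\spc A=E^{-1/2}\,\Fix(\map C)\,E^{-1/2}$ is a matrix $\ast$-algebra containing every operator of the form $E^{-1/2}\rho_{\st n}^{\otimes N}E^{-1/2}$.

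Next, by definition $\spc A$ contains the smallest algebra $\spc A_{\min}$ generated by these operators, and Proposition \ref{prop:todos} identifies this algebra explicitly:
\[
\spc A_{\min}=\bigoplus_{j=0}^{N/2}\bigl[\mathsf B(\spc R_j)\otimes I_{m_j}\bigr].
\]
Combining the inclusion $\spc A\supseteq\spc A_{\min}$ with the conclusion of Proposition \ref{prop:fixalgebra}, I obtain
\[
E^{1/2}\,\spc A_{\min}\,E^{1/2}=\bigoplus_{j=0}^{N/2}\left[\mathsf B(\spc R_j)\otimes\frac{I_{m_j}}{m_j}\right]\subseteq\Fix(\map C),
\]
where the equality uses the block-diagonal form of $E$ and the fact that rescaling each $B_j\in\mathsf B(\spc R_j)$ by the nonzero constant $q_{j,N}/d_j$ still sweeps out all of $\mathsf B(\spc R_j)$.

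Finally, I would invoke Schur--Weyl duality: the commutant of the symmetric group $S_N$ acting on $\spc H^{\otimes N}$ is exactly $\bigoplus_j\mathsf B(\spc R_j)\otimes I_{m_j}$, so every permutationally invariant state $\rho$ admits a decomposition $\rho=\bigoplus_j r_j\,(\sigma_j\otimes I_{m_j}/m_j)$ with $\sigma_j$ a state on $\spc R_j$ and $\{r_j\}$ a probability distribution. Such a state lies in $E^{1/2}\spc A_{\min}E^{1/2}$ and is therefore a fixed point of $\map C$. The main technical point to verify carefully is the positivity of every $q_{j,N}$ under the hypothesis ``neither pure nor maximally mixed''; this is what makes $E$ invertible and guarantees that the block-wise rescaling does not collapse any component of $\spc A_{\min}$. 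Everything else is straightforward bookkeeping of the Schur--Weyl decomposition.
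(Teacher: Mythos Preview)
Your proof is correct and follows essentially the same route as the paper's own argument: combine Proposition~\ref{prop:fixalgebra} and Proposition~\ref{prop:todos} to conclude that $\bigoplus_j\mathsf B(\spc R_j)\otimes I_{m_j}$ is contained in $\Fix(\map C)$, and then read off that every permutationally invariant state is fixed. The paper's proof is terser---it simply writes $\Fix(\map C)\supseteq\spc A_{\min}$---whereas you spell out explicitly why $E^{1/2}\spc A_{\min}E^{1/2}=\spc A_{\min}$ as sets (block-scalar conjugation) and why $E$ has full support (all $q_{j,N}>0$ when $\rho$ is full rank); these are exactly the small verifications the paper leaves implicit.
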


\Proof    ~ By Propositions  \ref{prop:fixalgebra} and \ref{prop:todos}, the channel $\map C$ must  satisfy   
\[  \Fix (\map C)  \supseteq    \spc A_{\min}   =  \bigoplus_{j=0}^{N/2}  \,     \left[   \mathsf B    (    \spc R_j)  \otimes I_{m_j}   \right]  \, ,\]
meaning that the full  algebra generated by the tensor representation of $\grp{GL}(2)$ is contained in the set of fixed points.  \qed

\medskip  

We are now in position to prove Theorem 1 in the main text: 

\medskip 

{\bf Proof of Theorem 1.}  Suppose that a compression protocol has zero error on a complete ensemble of mixed states.  
Then, Corollary \ref{cor:done}  implies that the protocol should have zero error on all permutationally invariant states.   
In particular, the protocol should be able to transmit without error the following ensemble of orthogonal pure states 
\[S:=  \left.  \left\{ \rho_{j,m}  =   |j,m\>\<  j,m|  \otimes \frac{I_{m_j}}{m_j}  , p_{j,m}=\frac{1}{D}\, \right  |\,      j  =  0, \dots, N/2  \, ,   m  =  -j, \dots, j, \,D:=\sum_j d_j \right\} \, .\] 
 A lower bound on the dimension $d_{\rm enc}$ of the encoding space $\spc H_{\rm enc}$ is then obtained by considering the amount of classical information carried by $S$. In detail, the lower bound can be calculated using the monotonicity of Holevo's chi quantity in quantum data processing. Holevo's chi quantity of $S$ \cite{holevo} is defined as follows
\begin{align*}
\chi\left(S\right)&:=H\left(\sum_{j,m} p_{j,m}\rho_{j,m}\right)-\sum_{j,m}p_{j,m}H\left(\rho_{j,m}\right)
\end{align*}
with $H(\rho)$ being the von Neumann entropy of the state $\rho$. Since the chi quantity is non-increasing under quantum evolutions, in the zero-error scenario we have
\begin{align}\label{chieq1}
\chi\left(S\right)=\chi\left(S_{\rm enc}\right)
\end{align}
where $S_{\rm enc}$ is the encoded ensemble $S_{\rm enc}:=\{\map E(\rho_{j,m}), p_{j,m}\}$. On the other hand, the dimension of the encoding subspace is lower bounded by the chi quantity \cite{Horodecki}
\begin{align}\label{chibound1}
\log d_{\rm enc}\ge \chi\left(S_{\rm enc}\right).
\end{align}

The chi quantity for the ensemble $S$ can be computed as
$\chi\left(S\right)=\log D \, .$
Combining this equality with Eqs. (\ref{chieq1}) and (\ref{chibound1}) we get $$d_{\rm enc}\ge D=\left(\frac N2+1\right)^2,$$ which concludes the optimality proof.   The protocol showed in the main text saturates the bound.  
\qed
 
\section{PROOF OF THEOREM 2}

As stated in the main text, we assume $p>\frac12$, because for $p=1/2$ the ensemble is trivial, consisting only of the maximally mixed state.  

We first notice that the error of the compression protocol  is upper bounded as 
\begin{align}
e_{N}&  =  \frac 12    \,  \left \|    \rho_{\st n}^{\otimes N}   -       \map D  \circ \map E  \left ( \rho_{\st n}^{\otimes N} \right) \right\|  \, ,  \qquad    \forall   \st n    \in   \mathbb S^2    \nonumber  \\
     &  =  \frac 12 \,   \left \|   \sum_{j  \not \in  \set S_\epsilon  }   \,    q_{j,N}     \,   \left[    \rho_{\st n ,   j} \otimes  \frac{ I_{m_j}}{m_j}     -   \map D(  \rho_0 )  \right]  \right \|\nonumber\\
  &\le            \sum_{j\not\in\set S_\epsilon}q_{j,N}    \, \label{errorbound},
\end{align}
the last step following from the triangle inequality and from the fact that the trace distance of two states is upper bounded by 2.    Note that the upper bound is independent of $\st n$, meaning that the protocol works equally well for all states with the same spectrum (or equivalently, for all states with the same purity).

At this point, it is enough to prove that  the upper bound  vanishes in the large $N$ limit. To this purpose, we  use the expression for $q_{j,N}$  [Eq. (5) in the main text] and observe that one has
\begin{align}\label{F2}
1-e_N\ge&\sum_{j\in\set{S}_\epsilon}\frac{2(2j+1)}{j_0}B\left(N+1,p,\frac{N}{2}+j+1\right)-\sum_{j\in\set{S}_\epsilon}\frac{2(2j+1)}{j_0}B\left(N+1,p,\frac{N}{2}-j\right)
\end{align}
where $j_0=(2p-1)(N+1)/2$. The second summand in  the r.h.s. of Eq. (\ref{F2}) is negligible in the large $N$ limit: precisely, it can be bounded as
\begin{eqnarray}
\sum_{j\in\set{S}_\epsilon}\frac{2(2j+1)}{j_0}B\left(N+1,p,\frac{N}{2}-j\right)&\le&\sum_{j=0}^{\frac{N}{2}}\frac{2(2j+1)}{j_0}B\left(N+1,p,\frac{N}{2}-j\right)\nonumber\\
&\le&\frac{1}{2p-1}\sum_{j=0}^{\frac{N}{2}}B\left(N+1,p,\frac{N}{2}-j\right)\nonumber\\
&\le&\frac{1}{2p-1}\exp\left[-\frac{2(2p-1)^2N^2}{N+1}\right]\label{F4}
\end{eqnarray}
having used the Hoeffding's inequality in the last step.  Hence, this term  goes to zero exponentially fast with $N$,

Now, recall that  we chose  $\set S_\epsilon$ to be the interval
\begin{align}\label{Sepsilon}
\set{S}_\epsilon=\left[j_0-1/2-\sqrt{N\ln(2/\epsilon)},j_0-1/2+\sqrt{N\ln(2/\epsilon)}\right].
\end{align}

Setting
$j_0-j-1/2= x$, we then obtain  
\begin{align*}
e_N&\le1-\sum_{x=-\sqrt{N\ln(2/\epsilon)}}^{\sqrt{N\ln(2/\epsilon)}}\left(1-\frac{x}{j_0}\right)B\left(N+1,p,p(N+1)-x\right)+\frac{1}{2p-1}\exp\left[-\frac{2(2p-1)^2N^2}{N+1}\right]\\
&=1-\sum_{x=-\sqrt{N\ln(2/\epsilon)}}^{\sqrt{N\ln(2/\epsilon)}}B\left(N+1,p,p(N+1)-x\right)+\frac{1}{2p-1}\exp\left[-\frac{2(2p-1)^2N^2}{N+1}\right]\\
&\le 2\exp\left[\frac{2N}{N+1}\ln\frac{\epsilon}{2}\right]+\frac{1}{2p-1}\exp\left[-\frac{2(2p-1)^2N^2}{N+1}\right]\\
&\le \epsilon^{\frac{2N}{N+1}}+\frac{1}{2p-1}\exp\left[-\frac{2(2p-1)^2N^2}{N+1}\right]
\end{align*}
In the second last step we have used the Hoeffding's inequality. Now it can be seen that the right hand side of the bound vanishes exponentially fast with $N$, and we can always find a $N_0$ such that $e_N\le\epsilon^{3/2}<\epsilon$ for any $N>N_0$.   The dimension of the encoded system is now 
\begin{eqnarray*}
d_{\rm enc}&=&\sum_{j\in\set{S}_\epsilon}(2j+1)\\
&=&2(2p-1)\sqrt{N\ln(2/\epsilon)}(N+1)
\end{eqnarray*}
An upper bound on the number of required qubits is given by 
\begin{align*}
\log d_{\rm enc}&=\log\left[2(2p-1)N\sqrt{N\ln\frac{2}{\epsilon}}\right]+\log\left(1+\frac1N\right)\\
&\le\frac32 \log N+\log\left[2(2p-1)\sqrt{\ln\frac2\epsilon}\right]+1\\ 
\end{align*}
\qed

\section{THE PURE STATE CASE: NO DISCONTINUOUS GAP BETWEEN ZERO-ERROR AND APPROXIMATE COMPRESSION}
Here we prove that the type of discontinuity highlighted by our Theorems 1 and 2 is specific to mixed states.  
  Consider the pure state ensemble
     $\left\{ \left(  |\st n\>\<\st n| \right)^{\otimes N}\,   ,  \d^2 \st n\right\}$, where $  |\st n\>$ is the pure qubit state with Bloch vector $\st n$ and $\d^2\st n$ is the invariant measure on the Bloch sphere.  
  Suppose that the state $\left(  |\st n\>\<\st n|\right)^{\otimes N}$ is encoded into a state   $\rho_{\st n,{\rm enc}}$ on a Hilbert space of dimension $d_{\rm enc}$.   Assuming that the compression error is bounded by $\epsilon$, an argument by Horodecki  \cite{Horodecki}  gives a lower bound on $d_{\rm enc}$.  
  The argument is based on the following lemma, based on  the Alicki-Fannes inequality 
   \begin{lem}[\cite{alicki}]
Let $\{  \rho_x\, ,  p_x\}$ be an ensemble of states and let $\{  \rho_{x,{\rm enc}} \, ,  p_x\}$ be the ensemble of the encoded states.  If the  compression protocol has error bounded by $\epsilon$, then the following inequality holds
\begin{align}\label{chi bound}
\left|\chi\left(\left\{ \rho_x \, ,p_x  \right\}\right)  -\chi\left(\left\{ \rho_{x,{\rm enc}} \, , p_x\right\} \right) \right|   \le 2\left[\epsilon\log d_{\rm in}+\eta(\epsilon)\right],
\end{align}
where  $d_{\rm in}$ is the rank of the average state $\rho  = \sum_x \, p_x  \rho_x $ and 
$\eta(x)=-x\ln x$. 
\end{lem}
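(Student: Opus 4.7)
My plan is to insert the decoded ensemble as a bridge between the input and encoded ensembles, use the monotonicity of $\chi$ under quantum channels to reduce the problem to a single chi-difference on the input Hilbert space, and then apply the standard Fannes continuity bound for the von Neumann entropy twice---once to the mean state and once, through a Jensen-type averaging, to the per-state entropies.

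Concretely, I would set $\sigma_x := \map{D}(\rho_{x,{\rm enc}})$, so that $\sigma_x$ lives in the same $d_{\rm in}$-dimensional input space as $\rho_x$ and, by the compression-error hypothesis, satisfies $\bar\epsilon := \sum_x p_x \|\rho_x - \sigma_x\|/2 \le \epsilon$. Monotonicity of $\chi$ under the CPTP maps $\map{E}$ and $\map{D}$ sandwiches the encoded chi between $\chi(\{\sigma_x,p_x\})$ and $\chi(\{\rho_x,p_x\})$, so
\begin{align*}
\bigl|\chi(\{\rho_x,p_x\})-\chi(\{\rho_{x,{\rm enc}},p_x\})\bigr| \le \bigl|\chi(\{\rho_x,p_x\})-\chi(\{\sigma_x,p_x\})\bigr|.
\end{align*}
Expanding the right-hand side and applying the triangle inequality gives
\begin{align*}
\bigl|\chi(\{\rho_x\})-\chi(\{\sigma_x\})\bigr| \le \bigl|S(\bar\rho)-S(\bar\sigma)\bigr| + \sum_x p_x \bigl|S(\rho_x)-S(\sigma_x)\bigr|,
\end{align*}
with $\bar\rho=\sum_x p_x\rho_x$ and $\bar\sigma=\sum_x p_x\sigma_x$.

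For the mean-state term, convexity of the trace norm gives $\|\bar\rho-\bar\sigma\|/2 \le \bar\epsilon \le \epsilon$, and Fannes on the $d_{\rm in}$-dimensional input space yields $|S(\bar\rho)-S(\bar\sigma)| \le \epsilon\log d_{\rm in}+\eta(\epsilon)$. For the per-state term, I would first apply Fannes term by term, $|S(\rho_x)-S(\sigma_x)| \le \epsilon_x\log d_{\rm in}+\eta(\epsilon_x)$ with $\epsilon_x=\|\rho_x-\sigma_x\|/2$, and then average. Since $t\mapsto t\log d_{\rm in}+\eta(t)$ is concave on $(0,1]$ (linear plus strictly concave) and monotone increasing on the relevant subinterval, Jensen's inequality gives $\sum_x p_x[\epsilon_x\log d_{\rm in}+\eta(\epsilon_x)] \le \bar\epsilon\log d_{\rm in}+\eta(\bar\epsilon) \le \epsilon\log d_{\rm in}+\eta(\epsilon)$. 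Adding the two contributions produces the factor $2$ in the stated bound.

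The main subtle step is the concavity-and-monotonicity argument that collapses the per-state Fannes bounds into a single expression in $\epsilon$: concavity of $t\mapsto t\log d_{\rm in}+\eta(t)$ is what lets Jensen upgrade the arithmetic mean of $\epsilon_x$ into $\eta$ applied to the arithmetic mean, and monotonicity on $[0,\epsilon]$ then lets one replace $\bar\epsilon$ by the looser hypothesis $\epsilon$. Both properties are straightforward to verify but are precisely what distinguishes the tight ensemble-averaged estimate from a crude term-by-term Fannes bound that would accumulate a (potentially large) sum over $x$; this is essentially the content of the Alicki-Fannes refinement in this cq setting.
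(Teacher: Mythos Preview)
The paper does not supply its own proof of this lemma; it is quoted from the literature (Alicki--Fannes, as applied by Horodecki) and used as a black box. Your argument is the natural one in this classical-quantum setting and is essentially correct, but one step is asserted rather than justified.

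You write that $\sigma_x := \map{D}(\rho_{x,{\rm enc}})$ ``lives in the same $d_{\rm in}$-dimensional input space as $\rho_x$''. This is not automatic: the decoding channel $\map{D}$ has codomain the full input Hilbert space, whereas $d_{\rm in}$ is only the rank of the average state $\bar\rho$. Your two Fannes bounds with $\log d_{\rm in}$ (rather than the full $\log\dim\spc H^{\otimes N}$) therefore need an extra argument. The fix is short: since every $\rho_x$ is supported in $\Supp(\bar\rho)$, one may postcompose $\map{D}$ with the CPTP map $\Lambda(\cdot)=P\,\cdot\,P+\Tr[(I-P)\,\cdot\,]\,\tau_0$, where $P$ projects onto $\Supp(\bar\rho)$ and $\tau_0$ is any fixed state supported there. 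Because $\Lambda(\rho_x)=\rho_x$, contractivity of the trace distance under CPTP maps gives $\|\rho_x-\Lambda(\sigma_x)\|\le\|\rho_x-\sigma_x\|$, so replacing $\sigma_x$ by $\Lambda(\sigma_x)$ does not increase the protocol error and forces all decoded states into the $d_{\rm in}$-dimensional subspace. With this reduction in place, your sandwich-by-monotonicity step and the double Fannes estimate (with the Jensen argument averaging the per-state bounds) go through as written.

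One further technicality worth making explicit: the final replacement of $\bar\epsilon$ by $\epsilon$ via monotonicity of $t\mapsto t\log d_{\rm in}+\eta(t)$ is valid only for $\epsilon$ below the critical point of that function (of order $1/e$ once the log base is fixed). This is the usual regime for Fannes-type bounds, but the restriction should be stated.
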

In our case, $d_{\rm in}$ is the dimension of the symmetric subspace, namely 
\begin{align}\label{din}
d_{\rm in} = d_{\frac{N}{2}}=N+1 \,.
\end{align}
Moreover, we have     
\begin{eqnarray}\label{chi}
\chi\left(\left\{ \left(  |\st n\>\<\st n|\right)^{\otimes N}  , \,   \d^2 \st n  \right\}  \right)=H\left(I_{\frac{N}{2}}/d_{\frac{N}{2}}\right)=\log (N+1).
\end{eqnarray} 
and, by  the Holevo's bound \cite{holevo}, 
\begin{align}\label{hb}
\chi\left(\left\{  \rho_{\st n, {\rm enc}}  \, ,  \d^2 \st n  \right\}  \right)  \le \log d_{\rm enc} \, .
\end{align}  
In our case, we have $d_{\rm in}=d_{\frac{N}{2}}=N+1$. Hence, combining Eqs.  (\ref{chi bound}),  (\ref{din}), (\ref{chi}), and (\ref{hb}) we obtain the bound
\begin{align*}
\log d_{\rm enc}&\ge (1-2\epsilon)\log(N+1)-2\eta(\epsilon).
\end{align*}
Now, note that   the r.h.s. is continuous in $\epsilon$ and tends to $\log (N+1)$ when $\epsilon$ tends to zero.    The value  $\log (N+1)$ is exactly the minimum number of qubits needed to encode a generic state in the symmetric subspace with zero error.   Hence, as $\epsilon$ tends to zero, the number of qubits needed for approximate compression tends to the number of qubits needed for zero-error compression.

\section{PROOF OF THEOREM 3}\label{app:qubit-coverse}

Here we prove the optimality of our protocol among all compression protocols where the  encoding is covariant and the decoding preserves the magnitude of the  total angular momentum.    Precisely,  we assume that   
 \begin{enumerate}
 \item the encoding space $\spc H_{\rm enc}$  supports a unitary representation of the group $\grp {SU}  (2)$, denoted by  $  \{  V_g ~|~ g\in\grp {SU} (2) \}$
 \item  the encoding channel satisfies the covariance condition     
 \begin{align}\label{covariance}  
 \map E \circ    \map U_g      =   \map  V_g  \circ  \map E   \, , \qquad \forall  g\in\grp {SU}  (2)   \, ,
 \end{align} 
 where $\map U_g$ and $\map V_g$ are the unitary channels defined by $\map U_g  (\cdot):  =  U_g  \cdot  U_g^\dag $ and $\map V_g=  V_g \cdot  V_g^\dag$. 
 \item  the decoding channel $\map D$   preserve the magnitude of the total angular momentum, in the sense that, for every input state $\rho$, one has 
  \begin{align}\label{conservation}
   \Tr  \left [ \st K^2   \,    \map D( \rho)  \right]  =   \Tr  \left[   \st J^2  \,  \rho  \right ]        \, ,
 \end{align}
where    $\st  K=  (  K_x,K_y,K_z)$ are the generators of the representation $\{  V_g  \, ,   g \in  \grp {SU}  (2)\}$  and $ \st J  =  (  J_x,  J_y, J_z)$ are the generators of the representation $\{U_g^{\otimes N} ,     g\in\grp {SU}  (2)\}$.   
 \end{enumerate}
 
Under these conditions, we can prove the optimality of the protocol presented in Theorem 3 of the main text.  

\medskip  

{\bf Proof of Theorem 3.}   For the purpose of this proof, it is convenient to parametrize the mixed states $\rho_{\st n}$ as $\rho_g   =    U_g  \rho  U_g^\dag$, where  $\rho$ is a fixed state and $g$ is a generic element of $\grp { SU}  (2)$.     Let us decompose the encoding space as   
\begin{align}\label{iso}  
\spc H_{\rm enc}     =  \bigoplus_j  \,   \left(  \spc R_j  \otimes \widetilde {\spc M}_j \right) \, ,
\end{align}   
where $j$ is the quantum number of the angular momentum, $ \map R_j$ is the corresponding representation space, and $\widetilde{\spc M}_j$ is a suitable multiplicity space.   By definition, one has  
\begin{align}
\nonumber   \spc H_{\rm enc}     & \supseteq  \Span  \left\{   \Supp  \left [ \map E  \left(\rho_g^{\otimes N}\right) \right]    ,     g\in\grp {SU}  (2)   \right\}  \\
\label{inclusion}  &    =  \Span  \left[   \Supp  \left (  \Omega \right) \right]   \, ,  \qquad \Omega : =   \int \d g \,   \map E \left(\rho_g^{\otimes N}\right)   \, .
\end{align}
Since $\map E$ is covariant, the state $\Omega$ satisfies the relation  $V_g \Omega V_g^\dag   =  \Omega\, , \forall g\in\grp {SU}  (2)$.  Hence, $\Omega$ can  be written in the block diagonal form 
\[\Omega  =  \bigoplus_{j\in  \set S}  \left ( \frac {  I_j}{d_j}   \otimes \omega_j \right)\, , \]
where $\omega_j$ is a suitable state on the multiplicity space and   $\set S$ is a suitable set of values of the angular momentum number.    Combining the above decomposition  with Eq. (\ref{inclusion}), we obtain  the bound
\begin{align}\label{dencbound}   d_{\rm enc}   \ge   \rank  \, \Omega   \ge    \sum_{  j\in \set S}    \,  d_j  \,.     
\end{align}

On the other hand, since the decoding preserves the magnitude of the angular momentum, one has 
 \begin{align*}
\Tr  [ \Pi_j     \,   \map D  \circ\map E   \left(\rho_g^{\otimes N}\right)]   &  =   \Tr  [  \widetilde \Pi_j     \map E   \left(\rho_g^{\otimes N}\right) ]  \,  ,  \qquad     \forall  j   =  0, \dots,  N/2  \, ,    \forall g \in\grp{SU}  (2) \, ,
\end{align*}
where $\Pi_j$ is the projector on $\spc R_j\otimes \spc M_j$ while $\widetilde \Pi_j$ is the projector on $\spc R_j\otimes \widetilde {\spc M}_j$.     Hence, we have  
\begin{align}
\sum_{j\in\set S}   \Tr[  \Pi_j \map D \circ \map E\left(  \rho_g^{\otimes N}\right)]   = 1  \, ,  \qquad   \forall g \in\grp{SU}  (2) \, , 
\end{align}
meaning  that all the output states $\map D \circ \map E\left(  \rho_g^{\otimes N}\right)$ are contained in the subspace $\spc H_N  :  =  \bigoplus_{j\in\set S}   \left(\spc R_j  \otimes \spc M_j\right)$.     Hence, we have 
\begin{align}
\nonumber e_N&=  \frac 12  \left \|  \rho_{g}^{\otimes N}   -    \  \map D \circ  \map E  \left(\rho_{g}^{\otimes  N}  \right)      \right\|    \qquad \forall g \in \grp {SU}  (2)\\
\nonumber &  \ge \frac 12      \left \|    P_N \left[  \rho_{g}^{\otimes N}   -    \  \map D \circ  \map E  \left(\rho_{g}^{\otimes  N}  \right)    P_N  \right]   \right \|   +  \frac 12    \left \|   (I^{\otimes N}  -  P_N)  \left[   \rho_{g}^{\otimes N}   -    \  \map D \circ  \map E  \left(\rho_{g}^{\otimes  N}  \right)  \right]    (I^{\otimes N}  -   P_N)   \right\|      \\
   \nonumber       &=   \frac 12    \left \|   (I^{\otimes N}  -  P_N)   \rho_{g}^{\otimes N}      (I^{\otimes N}  -   P_N)   \right\|       \\  
&\ge \sum_{j\not\in\set{S}}\frac{q_{j,N}}{2} \label{approxt}
\end{align}
where $P_N$ is the projector on $\spc{H}_N$.
Now we prove that any protocol with $d_{\rm enc}=O\left(N^{3/2-\delta}\right)$,  $\delta  >0$,    will have a non-vanishing error.  Recall from the main text that the probability distribution $q_{j,N}$ can be expressed as
\begin{align}\label{dist}
q_{j,N}= \frac{2j+1}{2j_0}   &\left[   B\left(N+1,p,\frac N  2  + j+1   \right)  -B\left(N+1,p,\frac  N2  -  j \right)\right]
\end{align}
where $B(n,p,k)$ is the binomial distribution  with $n$ trials and with probability $p$ and $$j_0 = (p-1/2)(N+1) \, .$$
Combing Eq. (\ref{approxt}) with Eq. (\ref{dist}), we have
\begin{align*}
e_N&\ge\frac12-\frac12\sum_{j\in\set{S}}\frac{2j+1}{2j_0}B\left(N+1,p,\frac{N}{2}+j+1\right).
\end{align*}

We split the set $\set S$ into two subsets $\set S_1$ and $\set S_2$, defined as 
\begin{align*}
\set S_1&=  \set S  \cap \left [ j_0-\frac{\sqrt{cN}+1}2  , j_0+\frac{\sqrt{cN}+1} 2\right  ] \\
\set S_2&=\set S\setminus \set S_1
\end{align*}
where $c$ is an arbitrary  constant. 
 The error is then bounded as
\begin{align}\label{e-bound1}
e_N&\ge\frac12  \left(  1  - s_1  -s_2\right)  \,  \qquad     s_k   : =  \sum_{j\in\set{S}_k}\frac{2j+1}{2j_0}B\left(N+1,p,\frac{N}{2}+j+1\right) \, ,  ~  k  =  1, 2  \, .
\end{align}
We now bound $s_1$ and $s_2$.    Let us start from $s_1$: by definition, we have 
\begin{align}
\nonumber s_1  &\le  \frac{\max_{j\in\set S_1}   ( 2j+1) }{2j_0}\,   \sum_{j\in\set{S}_1}  B\left(N+1,p,\frac{N}{2}+j+1\right)    \nonumber\\
\nonumber &=  O(1)  \,  \sum_{j\in\set{S}_1}  B\left(N+1,p,\frac{N}{2}+j+1\right)    \\
&  \le O(1) \,    |\set S_1|  B\left(N+1,p,\frac{N}{2}+j_0+1\right) \nonumber\\
\label{baab} & =     O\left(  N^{-1/2}\right)\,  |\set S_1|\, .
\end{align}
In turn, $\set S_1$ can be bounded from the relation  
\begin{align}
\nonumber |\set S_1| \,  \left(  \min_{j\in \set S_1}  \, 2j+1 \right)   &  \le    \sum_{j\in\set S_1}    (2j+1)  \\
\nonumber &   \le d_{\rm enc}   \\
&  =   O\left  (    N^{3/2  -\delta}\right) \, ,
\end{align}
which implies $  |\set S_1  |  \le   O( N^{1/2-\delta})$.  
Inserting this relation into Eq. (\ref{baab}),   we finally obtain 
\begin{align}
\label{e-bound2}  s_1  \le        O\left(    N^{-\delta}\right) \, . 
\end{align}

Regarding $s_2$, we have the bound
\begin{align}
s_2  
&\le \frac{N+1}{j_0}\left[\sum_{j\le      j_0   -  \frac {   \sqrt {cN}  +1} 2   }  \,  B\left(N+1,p,\frac{N}{2}+j+1\right)\right]  \nonumber\\
&=\frac{1}{p-1/2}   \left[ \sum_{j\le      j_0   -  \frac {   \sqrt  {c N }  +  1}2 }     \, B\left(N+1,p,\frac{N}{2}+j+1\right)  \right] \nonumber\\
&\le \frac{e^{-c/2}}{p-1/2} \, ,\label{e-bound3}
\end{align}
the last inequality coming  from  Hoeffding's bound. 	

Finally, combining the inequalities (\ref{e-bound1}), (\ref{e-bound2}), and (\ref{e-bound3}), we obtain the lower bound 
\begin{align*}
e_N \ge \frac12   \left[  1  -  O\left(N^{-\delta}\right)-\frac{e^{-c/2}}{p-1/2}  \right] \, ,
\end{align*}
Since the constant $c$ is arbitrary, the bound  becomes  $e_N  \ge 1/2  -  O\left(N^{-\delta}\right)$. \qed




\section{UPPER BOUND ON THE COMPLEXITY OF GENERATING APPROXIMATE MAXIMALLY MIXED STATES}

The decoding requires the preparation of maximally mixed states to be placed in the multiplicity register. For a given value of $j$, this is accomplished by generating a maximally entangled state of rank $m_j$.  
In the following we present a three-step protocol for this purpose. 
\begin{enumerate}
\item Choose an integer $n=O(N)$ such that  $m_j \in(2^{n-1}, 2^n]$. Prepare $n$ maximally entangled qubit states. The resulting the state is $\rho=[|\Phi^+ \rangle \langle \Phi^+ |]^{\otimes n}$, with $|\Phi^+ \rangle = (|00\rangle + |11\rangle )/\sqrt{2}$  and lies in a space of dimension $2^{2n}$. 
\item Perform the measurement in the computational basis on one qubit of each entangled pair. The measurement outcomes of the individual qubit measurements are saved in a sequence of $n$ binary digits, let us denote it by $\underline y$.
\item Compare the string $\underline y$  with the binary expression of $m_j$. If $\underline y$, as a number, is larger than $ m_j$, the protocol fails and we have to restart by preparing again $n$ maximally entangled qubits. Otherwise, we keep the remaining qubits, which, on average, will be in a maximally entangled mixed state of rank $m_j$.
\end{enumerate}
The last step can be seen by noting down the  quantum operation  $\mathcal{C}_{\rm yes}$  corresponding to the successful outcomes of the projective measurement, given by 
\begin{align} 
\mathcal{C}_{\rm yes}(\sigma) = \sum_{y\leq m_j} |\underline{y}\rangle \langle \underline{y} | \sigma  |\underline{y}\rangle \langle \underline{y} | \nonumber \ . 
\end{align}
 The protocol is successful in more than half of the cases. For that reason, the probability of  failure vanishes exponentially in the number of repetitions $l$ as $p_{\rm no} \leq 2^{-l}$. To ensure that the error is vanishing fast enough with the number of state copies $N$, we repeat the protocol $N$ times. Then, the complexity of the protocol is comprised of preparing the qubit states, which takes $O(N)$ steps, and from comparing the $n$ digit binary strings on a classical computer, which also takes $O(N)$ steps. By repeating the protocol $N$ times, the overall complexity yields $O(N^2)$. It is safe to run the protocol $N$ times to assure for an exponentially vanishing error, because the complexity of the decoding is still dominated by the Schur transform.

\section{ZERO-ERROR COMPRESSION FOR  QUANTUM SYSTEMS OF DIMENSION  $d>2$}
In this and the following sections, we generalize our results to quantum systems of arbitrary finite dimension  $d<\infty$.  

\subsection{Upper bound on the number of encoding qubits}  

\begin{theo}\label{thm2}
In dimension $d$, every  ensemble of $N$ identically prepared mixed states of rank $r$     can be  encoded without error into less than  $\left(2dr-r^2+r-2\right)/2 \, \log (N+d-1)$  qubits.  
\end{theo}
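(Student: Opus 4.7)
The plan is to mimic the qubit zero-error protocol of Theorem 1, using Schur-Weyl duality in dimension $d$ and then bounding the encoding dimension via Weyl's formula. I would first invoke the $\grp{GL}(d)$--$S_N$ duality to decompose
\[
\spc H^{\otimes N} \simeq \bigoplus_{\lambda} \spc R_\lambda \otimes \spc M_\lambda \, ,
\]
with $\lambda$ ranging over Young diagrams with $|\lambda|=N$ and at most $d$ rows, where $\spc R_\lambda$ carries the irreducible $\grp{GL}(d)$-representation of highest weight $\lambda$ and $\spc M_\lambda$ is the Specht multiplicity space. Permutational invariance of $\rho^{\otimes N}$ then gives the block form $\rho^{\otimes N}=\bigoplus_\lambda q_\lambda\,(\rho_\lambda\otimes I_{m_\lambda}/m_\lambda)$ with $\rho_\lambda$ a state on $\spc R_\lambda$.

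The structural step is to observe that $q_\lambda = 0$ whenever $\ell(\lambda)>r$. This follows because $\rho^{\otimes N}$ is supported on $(\Supp \rho)^{\otimes N}$, an $S_N$-equivariantly embedded copy of $(\mathbb{C}^r)^{\otimes N}$; the isotypic decomposition of the latter contains no components indexed by $\lambda$ with $\ell(\lambda)>r$, since $\dim\spc R_\lambda^r=0$ for such $\lambda$. Hence the state-independent encoding
\[
\map E(\sigma) = \bigoplus_{\ell(\lambda)\le r}\Tr_{\spc M_\lambda}\!\bigl[\Pi_\lambda\,\sigma\,\Pi_\lambda\bigr]
\]
into $\spc H_{\mathrm{enc}}=\bigoplus_{\ell(\lambda)\le r}\spc R_\lambda$, together with the decoding that reinstates $I_{m_\lambda}/m_\lambda$ on each discarded multiplicity factor, achieves zero error on every rank-$r$ identically prepared ensemble.

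The remaining and most technical step is to upper-bound $\dim\spc H_{\mathrm{enc}}=\sum_{\ell(\lambda)\le r}\dim\spc R_\lambda^d$. Using Weyl's dimension formula $\dim\spc R_\lambda^d=\prod_{1\le i<j\le d}(\lambda_i-\lambda_j+j-i)/(j-i)$ together with $\lambda_1\le N$ and $\lambda_{r+1}=\cdots=\lambda_d=0$, each pair $(i,j)$ with $i\le r$ contributes a factor at most $N+d-1$, while pairs with $i>r$ contribute exactly $1$. The number of nontrivial pairs is $\binom{r}{2}+r(d-r)=(2dr-r^2-r)/2$, giving $\dim\spc R_\lambda^d\le(N+d-1)^{(2dr-r^2-r)/2}$. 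Since the number of partitions of $N$ with at most $r$ parts is strictly less than $(N+d-1)^{r-1}$, multiplying these bounds yields $\dim\spc H_{\mathrm{enc}}<(N+d-1)^{(2dr-r^2+r-2)/2}$, which upon taking $\log$ gives the desired qubit count.

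The main obstacle is the dimension estimate: Weyl's formula must be handled so that every factor is correctly bounded by $N+d-1$ (paying due attention to the denominator) and the partition count must combine with the per-$\lambda$ bound so that the two exponents add to exactly $(2dr-r^2+r-2)/2$. A secondary technical point is the vanishing of $q_\lambda$ for long partitions, which relies on the $S_N$-equivariance of the embedding $(\mathbb{C}^r)^{\otimes N}\hookrightarrow(\mathbb{C}^d)^{\otimes N}$ and the compatibility of this embedding with the $\grp{GL}(d)$--$S_N$ isotypic decomposition.
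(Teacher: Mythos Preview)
Your proposal is correct and follows essentially the same route as the paper: both invoke Schur--Weyl duality, restrict to Young diagrams with at most $r$ rows, define the same encoding/decoding pair, bound each $d_\lambda$ by $(N+d-1)^{(2dr-r^2-r)/2}$ via Weyl's dimension formula, and then multiply by a crude bound of $(N+d-1)^{r-1}$ on the number of partitions. The only cosmetic difference is that you supply a direct support argument for the vanishing of $q_\lambda$ when $\ell(\lambda)>r$, whereas the paper cites this fact from the literature.
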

  
 The proof is based  on the  Schur-Weyl duality, which   
allows one to decompose the $N$-copy Hilbert space
as 
\[  \spc H^{\otimes N}  \simeq  \bigoplus_{ \lambda   \in   \spc Y_{N,d}}   \, \left( \spc R_{ \lambda}\otimes \spc M_{ \lambda}  \right) \, , \] 
where $\spc R_{ \lambda}$ is a representation space, $\spc M_{ \lambda}$ is a multiplicity space, and the sum runs over the set   $  \spc Y_{N,d}$ of all Young diagrams of $N$ boxes arranged in $d$ rows, parametrized as $  \lambda =  (\lambda_1,\dots, \lambda_d)$,  with $\lambda_1  \ge \lambda_2\ge\dots\ge  \lambda_d$,  $\sum_{i=1}^d \lambda_i  =  N$. We use the notations
$$d_\lambda=\dim \spc R_\lambda$$
and
$$m_\lambda=\dim \spc M_\lambda.$$

Relative to this decomposition,    every state of the form $\rho^{\otimes N}$ where $\rho$ has rank $r$  can be cast into the form 
$$\rho^{\otimes N}  =\bigoplus_{\lambda\in\mathcal{{Y}}_{N,r}}q_{\lambda,N} \left(\rho_{\lambda}\otimes\frac{I_{m_{\lambda}}}{m_\lambda}\right)  \, ,$$
where   $\rho_{\lambda}$ is a quantum state  on  $\spc R_\lambda$,  $I_{m_\lambda}$ is the identity on  $\spc M_\lambda$,   and $q_{\lambda,N}$  is a suitable probability distribution.  Note that only the Young diagrams with    $r$  rows or less are present here  (for this fact, see e.g. \cite{ExactDistribution}).

\medskip 

The proof of Theorem   \ref{thm2}  makes use of the following lemmas:  

\begin{lem}\label{bound-R}
For every $\lambda\in\mathcal{{Y}}_{N,r}$, one has
 $d_\lambda\le (N+d-1)^{(2dr-r^2-r)/2}$.    
\end{lem}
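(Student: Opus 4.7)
The plan is to invoke Weyl's dimension formula for the irreducible representation of $\mathsf{GL}(d)$ labeled by the Young diagram $\lambda = (\lambda_1,\dots,\lambda_d)$, namely
\begin{align*}
d_\lambda \;=\; \prod_{1 \le i < j \le d} \frac{\lambda_i - \lambda_j + j - i}{j - i}\, .
\end{align*}
Since $\lambda \in \mathcal{Y}_{N,r}$ has at most $r$ non-zero parts, one has $\lambda_{r+1} = \cdots = \lambda_d = 0$, so every factor in the product with both indices $i,j > r$ equals $\frac{j-i}{j-i}=1$ and drops out. Hence $d_\lambda$ is a product of only those factors with $i \le r$.

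Next I would count the surviving factors. They split into the ``upper'' block $1 \le i < j \le r$, contributing $\binom{r}{2} = r(r-1)/2$ factors, and the ``mixed'' block $1 \le i \le r < j \le d$, contributing $r(d-r)$ factors. Summing,
\begin{align*}
\frac{r(r-1)}{2} + r(d-r) \;=\; \frac{2dr - r^2 - r}{2}\, ,
\end{align*}
which is exactly the exponent appearing in the claimed bound.

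Finally I would bound each surviving factor by $N+d-1$. For the denominator, $j-i \ge 1$. For the numerator, using $0 \le \lambda_j \le \lambda_i \le N$ and $j - i \le d-1$, I estimate $\lambda_i - \lambda_j + j - i \le \lambda_i + (d-1) \le N + d - 1$. Since $\lambda_i \ge \lambda_j$, every factor is also $\ge 1$, so the bound on the numerator yields a bound on the whole ratio. Multiplying $(2dr-r^2-r)/2$ such factors gives $d_\lambda \le (N+d-1)^{(2dr - r^2 - r)/2}$.

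There is no serious obstacle here: the argument is a bookkeeping exercise on Weyl's formula. The only mildly delicate point is making sure that the factors with $i>r$ really trivialize (which uses $\lambda_j=0$ for $j>r$), and that the combinatorial count of the remaining factors matches the stated exponent. Both are immediate from the partition structure of $\lambda$.
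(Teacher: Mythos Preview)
Your proof is correct and follows essentially the same approach as the paper: both start from Weyl's dimension formula, use $\lambda_i=0$ for $i>r$ to trivialize the factors with $i,j>r$, count the $\binom{r}{2}+r(d-r)=(2dr-r^2-r)/2$ surviving factors, and bound each by $N+d-1$. The only cosmetic difference is that the paper keeps the denominator $\prod_{k=1}^{d-1}k!$ separate and cancels the $\prod_{r<i<j\le d}(j-i)$ block against part of it (obtaining a slightly sharper intermediate bound with $\prod_{k=d-r}^{d-1}k!$ in the denominator), whereas you work directly with the ratio form and bound each $j-i$ below by $1$.
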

\Proof 
The dimension  can be expressed as  
\begin{align}\label{dimensione} 
d_\lambda  =   \frac{\prod_{1\le i<j\le d}(\lambda_i-\lambda_j-i+j)}{\prod_{k=1}^{d-1}k!}  \,,  
\end{align}
cf. Eq. (III.10) of \cite{unitary-groups}.  Since   $\lambda_i=0$ for $i>r$, we have the following chain of (in)equalities
\begin{align*}
d_\lambda&= \frac{\prod_{1\le i<j\le r}(\lambda_i-\lambda_j-i+j)   \,  \cdot \, \prod_{1\le i\le r<j\le d}(\lambda_i  -i+j)  \, \cdot \, \prod_{r<i<j\le d}(j-i)}{\prod_{k=1}^{d-1}k!} \\
&\le \frac{(N+r-1)^{r\choose 2}  \,  \cdot \,  (N+d-1)^{r(d-r)}  \, \cdot  \,   \prod_{  l=1}^{d-r-1}    l!  }{\prod_{k=1}^{d}k!}
\\
&\le \frac{(N+d-1)^{(2dr-r^2-r)/2}}{ \prod_{  k=d-r}^{d-1}    k!  } \, .
\end{align*}
\qed

\begin{lem}\label{denc}
The total dimension of all the representation spaces corresponding to Young diagrams with no more than $r$ rows is upper bounded as  
\[  \sum_{\lambda\in\mathcal{{Y}}_{N,r}}     \,  d_\lambda <     (N+d-1)^{\frac{2dr-r^2+ r-2}2} \, .  \]    
\end{lem}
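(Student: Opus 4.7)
The plan is to bound the sum by the product of the number of summands and a uniform bound on each summand, then reconcile the result with the target exponent. The exponent $(2dr-r^2+r-2)/2$ in the target exceeds the exponent $(2dr-r^2-r)/2$ provided by Lemma~\ref{bound-R} by exactly $r-1$, so the game reduces to bounding $|\mathcal{Y}_{N,r}|$ by $(N+d-1)^{r-1}$.

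To count Young diagrams, I would observe that every $\lambda \in \mathcal{Y}_{N,r}$ has the form $\lambda = (\lambda_1, \dots, \lambda_r, 0, \dots, 0)$ with $\lambda_1 \ge \dots \ge \lambda_r \ge 0$ and $\sum_{i=1}^r \lambda_i = N$. Since the last nonzero entry $\lambda_r = N - \sum_{i=1}^{r-1}\lambda_i$ is fixed by the sum constraint once the others are chosen, the diagram is determined by the $(r-1)$-tuple $(\lambda_1, \dots, \lambda_{r-1})$ with each coordinate in $\{0, 1, \dots, N\}$. This yields the crude count $|\mathcal{Y}_{N,r}| \le (N+1)^{r-1} \le (N+d-1)^{r-1}$. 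Combining this with Lemma~\ref{bound-R} gives
\[ \sum_{\lambda\in\mathcal{Y}_{N,r}} d_\lambda \;\le\; (N+d-1)^{r-1} \cdot (N+d-1)^{(2dr-r^2-r)/2} \;=\; (N+d-1)^{(2dr-r^2+r-2)/2}, \]
which is the non-strict form of the claim.

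Strict inequality is then recovered from the slack already present in these two estimates. The monotonicity constraint $\lambda_1 \ge \dots \ge \lambda_r$ forces $|\mathcal{Y}_{N,r}|$ to be strictly smaller than $(N+1)^{r-1}$ as soon as $r \ge 2$, while for $r = 1$ the sum reduces to the single term $d_{(N,0,\dots,0)} = \binom{N+d-1}{d-1}$, which is strictly below $(N+d-1)^{d-1}$ for any $d \ge 2$. The argument requires nothing beyond Lemma~\ref{bound-R} and elementary counting; the only mild obstacle is tracking edge cases ($r=1$ and small $d$) to ensure that the strict inequality survives in every regime of interest.
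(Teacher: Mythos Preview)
Your proposal is correct and matches the paper's proof almost exactly: both bound the sum by $|\mathcal{Y}_{N,r}| \cdot \max_\lambda d_\lambda$, invoke Lemma~\ref{bound-R} for the second factor, and bound the first by $(N+1)^{r-1} \le (N+d-1)^{r-1}$. The only cosmetic difference is that the paper quotes the exact count $|\mathcal{Y}_{N,r}| = \binom{N+r-1}{r-1}$ before applying the same bound, whereas you use a direct injection; your treatment of the strict inequality is in fact slightly more careful than the paper's.
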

  
\Proof  By Lemma \ref{bound-R} one has  
\begin{align*}
  \sum_{\lambda\in\mathcal{{Y}}_{N,r}}     \,  d_\lambda  & \le      (N+d-1)^{\frac{2dr-r^2-r}2}   \,    \left|   \mathcal{{Y}}_{N,r}   \right|   \\  
  &  <          (N+d-1)^{\frac{2dr-r^2+r-2}2}    \, ,
\end{align*} 
having used the equality  $    \left|   \mathcal{{Y}}_{N,r}   \right|   =     {N+r-1\choose r-1} $  \cite{GW98} and the elementary bound  $   {N+r-1\choose r-1}  <  (N+1)^{r-1}  \le  (N+  d-1)^{r-1} $.  
\qed  

\medskip  

{\bf Proof of Theorem \ref{thm2}.}   A zero-error compression protocol is given by the following encoding and decoding channels: 
\begin{align*}
\map E     (\rho)     & =    \bigoplus_{\lambda  \in   {\spc Y}_{N,r}}      \Tr_{\spc M_\lambda}   [\Pi_\lambda  \rho  \Pi_\lambda] \\
\map D(\rho')      &    =      \bigoplus_{\lambda  \in  {\spc Y}_{N,r}}       P_\lambda  \rho'   P_\lambda  \otimes \frac {  I_{m_\lambda}}{m_{\lambda}}  \, ,        
\end{align*}
where $\Pi_\lambda$ is the projector on  $\spc R_\lambda \otimes \spc M_\lambda$ and  $  P_\lambda$ is the projector on $ \spc R_\lambda$.      The encoding space is  $\spc H_{\rm enc}   =  \bigoplus_{\lambda \in  {\spc Y}_{N,r}}     \spc R_\lambda$
and has dimension     $d_{\rm enc}=\sum_{\lambda\in {\spc{Y}}_{N,r}}d_{\lambda}$, which we bound as  
\begin{align*} 
d_{\rm enc}  &=\sum_{\lambda\in {\spc{Y}}_{N,r}}d_{\lambda}    \\  
   &<   (N+d-1)^{\frac{2dr-r^2+r-2 } 2}  \, , 
 \end{align*}
having used Lemma \ref{denc}.  \qed      

\subsection{Lower bound on the number of encoding qubits used by the zero-error protocol}

Here we give a lower bound on the dimension of the  encoding space in  the zero-error protocol 
 described in the proof of Theorem \ref{thm2}.     Precisely, we have the following  
 \begin{lem}\label{lem:upperbounddimensiond}
 The total dimension of all the representation spaces corresponding to Young diagrams with no more than $r$ rows is lower bounded as  
\begin{align}\label{lowerboundsum} \sum_{\lambda\in\mathcal{{Y}}_{N,r}}     \,  d_\lambda  \ge    c (r,d)  \,  N^{\frac{2dr-r^2+ r-2}2} \, , 
\end{align}
where $c$ is a suitable function.        
 \end{lem}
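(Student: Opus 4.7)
The plan is to establish the lower bound by exhibiting a large subfamily of Young diagrams $\lambda\in\mathcal{Y}_{N,r}$ on which the Weyl dimension $d_\lambda$ is simultaneously large. Since the upper bound in Lemma \ref{bound-R} has the exponent $(2dr-r^2-r)/2$, and the claimed lower bound has exponent $(2dr-r^2+r-2)/2$, we need to gain an extra factor $N^{r-1}$. This matches exactly the scaling $|\mathcal{Y}_{N,r}|=\binom{N+r-1}{r-1}=\Theta(N^{r-1})$ of the count of diagrams, so the strategy is to show that a constant fraction of diagrams $\lambda$ satisfy $d_\lambda\ge c(r,d)\,N^{(2dr-r^2-r)/2}$ and then just sum.

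First I would fix a reference profile $(a_1,\dots,a_r)$ with $a_i=\tfrac{2(r-i+1)}{r(r+1)}$, so that $a_1>a_2>\dots>a_r>0$, $\sum_i a_i=1$, and all consecutive gaps $a_i-a_{i+1}$ equal the positive constant $2/(r(r+1))$. I then define the subfamily
\begin{equation*}
\mathcal{Y}^{\ast}_{N,r}:=\left\{\lambda\in\mathcal{Y}_{N,r}\;\middle|\;\lambda_i=\lfloor a_iN\rfloor+\delta_i,\ |\delta_i|\le \alpha N\right\},
\end{equation*}
where $\alpha=\alpha(r)>0$ is chosen small enough that $\lambda_i-\lambda_{i+1}\ge \beta N$ for some $\beta=\beta(r)>0$ uniformly over $\mathcal{Y}^{\ast}_{N,r}$ (a fixed choice like $\alpha=1/(2r^2(r+1))$ works once $N$ is large enough to dominate the rounding). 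The tuples $(\delta_1,\dots,\delta_r)$ are integers in a box of side $\Theta(N)$ subject to the single linear constraint $\sum_i\delta_i=N-\sum_i\lfloor a_iN\rfloor$, which is a bounded integer; a standard lattice-point count shows $|\mathcal{Y}^{\ast}_{N,r}|\ge c_1(r)\,N^{r-1}$ for $N$ large.

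Next I would lower bound $d_\lambda$ for $\lambda\in\mathcal{Y}^{\ast}_{N,r}$ by splitting the Weyl product (\ref{dimensione}) into the three blocks used in the proof of Lemma \ref{bound-R}. For $1\le i<j\le r$ one has $\lambda_i-\lambda_j+j-i\ge(j-i)\beta N$, contributing at least $(\beta N)^{\binom{r}{2}}$ times a combinatorial constant. For $1\le i\le r<j\le d$ one has $\lambda_i+j-i\ge \lambda_r\ge \beta' N$ for some $\beta'(r)>0$, contributing at least $(\beta' N)^{r(d-r)}$. The last block $r<i<j\le d$ gives a positive constant $\prod_{l=1}^{d-r-1}l!$. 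Multiplying and collecting exponents,
\begin{equation*}
d_\lambda\ \ge\ c_2(r,d)\,N^{\binom{r}{2}+r(d-r)}\ =\ c_2(r,d)\,N^{(2dr-r^2-r)/2}.
\end{equation*}
Summing this bound over $\lambda\in\mathcal{Y}^{\ast}_{N,r}\subseteq\mathcal{Y}_{N,r}$ gives
\begin{equation*}
\sum_{\lambda\in\mathcal{Y}_{N,r}}d_\lambda\ \ge\ |\mathcal{Y}^{\ast}_{N,r}|\cdot c_2(r,d)\,N^{(2dr-r^2-r)/2}\ \ge\ c(r,d)\,N^{(2dr-r^2+r-2)/2},
\end{equation*}
as required, with $c(r,d)=c_1(r)c_2(r,d)$.

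The main obstacle I anticipate is bookkeeping: checking that the profile $(a_i)$ together with the radius $\alpha N$ really gives $\lambda_i-\lambda_{i+1}\ge\beta N$ after floor rounding and uniformly over the subfamily, and that the one-parameter integer constraint $\sum_i\delta_i=O(1)$ still leaves $\Theta(N^{r-1})$ admissible tuples (so that the constant $c_1(r)$ is explicit and positive). Once these elementary but somewhat fiddly estimates are in place, the rest of the argument is a clean application of the Weyl dimension formula with all factors uniformly of order $N$. Small $N$ can be absorbed into the constant $c(r,d)$ by making it as small as needed to cover the finitely many initial values.
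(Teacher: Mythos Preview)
Your proposal is correct and follows essentially the same route as the paper: both arguments pick a core family of Young diagrams centered on the linear profile $\lambda_i\approx \frac{2(r-i+1)}{r(r+1)}N$ with width proportional to $N$, use the Weyl dimension formula split into the same three blocks to get $d_\lambda\gtrsim N^{(2dr-r^2-r)/2}$ on that core, and then multiply by the core size $\Theta(N^{r-1})$. The only cosmetic differences are that the paper assumes $N$ divisible by $r(r+1)/2$ (avoiding floors) and takes the half-width to be exactly $s/(2r)=N/(r^2(r+1))$, whereas you handle general $N$ with floors and a slightly smaller $\alpha$; your treatment of the lattice-point count under the constraint $\sum_i\delta_i=O(1)$ is in fact more explicit than the paper's.
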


 \Proof  
  For simplicity, we use the notation $f  (N,  r,d)\gtrsim  g(N,r,d)$ to mean that there exists a function  $c (r,d)$  such that $ f(  N,r,d)  \ge c(r,d)   g(N,r,d) $ for every $N$.  If $f(N,r,d)  \gtrsim  g(N,r,d)$  and $g(N,r,d)   \gtrsim  f(N,r,d)$, then we write $  f(N,r,d)  \approx  g(N,r,d)$.    
 With this notation, we have  \begin{align*}
 d_\lambda    \gtrsim    \prod_{1\le i<j\le d}   \,  (\lambda_i -\lambda_j)  \, , 
 \end{align*}
 having used Eq. (\ref{dimensione}).    
 Consider the case when  $N$  is a  multiple of $r(r+1)/2$ and define $s  =   2N/r(r+1)$.    Define the subset of Yang diagrams 
  \begin{align*}     
  \set S_{\rm core}  = \left\{  \lambda  \in  \mathcal Y_{N,r}  ~|~  \lambda_i   \in   \left[ (  r-i+1)  s  -  \frac s {2r} ,  (  r-i+1)  s  +  \frac s{2r}   \right]    \,  ,  \qquad \forall i   =  1,\dots,  r-1  \right\}
 \end{align*} 
For every diagram in $\set S_{\rm core}$ we have the lower bound  
 \begin{align}
 \nonumber d_\lambda     & \gtrsim     \left[    \prod_{1\le i<j\le r-1}   \,  (\lambda_i -\lambda_j)  \right]  \, \left[    \prod_{1\le i\le r-1}   \,  (\lambda_i -\lambda_r)  \right]  \,  \left[        \prod_{1\le i  <r<  j\le d}   \,  \lambda_i   \right]   \,  \left[        \prod_{r<  j\le d}   \,  \lambda_r   \right]  \\
 \nonumber   &  \ge        \left\{    \prod_{1\le i<j\le r}   \,   \left[ ( j-i)s  -  \frac sr   \right]    \right\}  \,  \left\{        \prod_{1\le i\le r-1}   \,   \left[ (  r-i) s  -  \frac s2\right]    \right\}  \,  \left\{        \prod_{1\le i\le r<  j\le d}   \,  (r-i)s   \right\}   \,    \left\{        \prod_{r<  j\le d}   \,     \frac s 2      \right\}   \\
 \nonumber   &  \approx  s^{\frac {  2dr-r^2  -r}2} \\
   \label{core}   & \approx   N^{\frac {  2dr-r^2  -r}2} \, .  
 \end{align}  
Now, the total dimension of the subspaces with Young diagrams in $\set S_{\rm core}$ an be lower bounded as  
\begin{align*}
\sum_{\lambda \in  \set S_{\rm core}}  d_\lambda &  \gtrsim    N^{\frac {  2dr-r^2  -r}2}   \,    |   \set S_{\rm core} |  \\  
&  =   N^{\frac {  2dr-r^2  -r}2}   \,      \left(   \frac sr\right)^{r-1}   \\
&\approx      N^{\frac {  2dr-r^2  -r}2}   \,        N^{r-1}   \\
&  =   N^{  \frac{  2rd-r^2  +r -2}2}  \, .
\end{align*} 
Since $\set S_{\rm core}$ is a subset of $\spc Y_{N,r}$, we obtain Eq. (\ref{lowerboundsum}).  \qed  

\medskip 
Following the steps adopted in the $d=2$ case, it is also  possible to show that the upper bound of Lemma  \ref{lem:upperbounddimensiond}   is actually an upper bound for \emph{every} zero-error protocol  that works for a \emph{complete} ensemble of mixed states---i.~e.~for an ensemble of the form $\{  \rho_g^{\otimes N} \,   ,   p_g \}$ where the state  $\rho_g$ is non-degenerate  and the probability distribution $p_g$ is dense on $\grp {SU}  (d)$.   Essentially, the argument is based on the use of Proposition \ref{prop:todos}, which can be applied here to all the $\grp {SU} (2)$ subgroups of $\grp {SU}  (d)$.

\section{APPROXIMATE COMPRESSION FOR  QUANTUM SYSTEMS OF DIMENSION  $d>2$}

\subsection{Compression protocol}  
Here we consider ensembles of  $N$ identically prepared mixed states, each of them  having the same  spectrum.   Every such ensemble can be written in the form $\{ \rho_g^{\otimes N} ,  p_g  \}$, where $\rho_g$ is a density matrix of the form  
\[\rho_g  =      U_g  \rho_0  U_g^\dag  \, ,  \qquad g\in\grp{SU} (d )  \, ,  \]
$\rho_0$ is a rank-$r$ density matrix with non-degenerate positive eigenvalues, and  $p_g$ is a probability distribution over the group  $\grp {SU}  (d)$.  For ensembles of this form, we have the following

\begin{theo}\label{thm2second}
 For every $\epsilon >0$ there exists an integer $N_0$ such that  for every $N  \ge N_0$ the ensemble  $\{\rho_g^{\otimes N} \, ,  p_g\}$ can be compressed with error less than $\epsilon$ into  $N_{\rm enc}$ qubits, with
 \[N_{\rm enc}  \le  \frac{2dr-r^2-1-m}{2}\log (N+d-1)+    \frac{m+r-1}{2}\log \left[  4 d (d+1)\ln(N+1)+  8    \ln\left(\frac{1}{\epsilon}\right)  +  O\left( \frac 1 {\sqrt N}  \right)\right]    \]
and  $m  :  =  \sum_{i=1}^r  \mu_i$, where  $\mu_i$ be the cardinality of the set $\{ j :  \,  j  >  i  \, ,  p_j  =  p_i \}$. We notice that $m=0$ when the spectrum is non-degenerate.
\end{theo}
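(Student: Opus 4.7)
My plan is to extend the qubit protocol of Theorem 2 to arbitrary dimension using Schur-Weyl duality, generalising each ingredient to rank-$r$ states in a $d$-dimensional Hilbert space. Starting from the block decomposition $\rho_g^{\otimes N}=\bigoplus_{\lambda\in \spc{Y}_{N,r}} q_{\lambda,N}\,(\rho_{g,\lambda}\otimes I_{m_\lambda}/m_\lambda)$, I define the encoder
\[
\map{E}(\rho) = \bigoplus_{\lambda\in \set{S}_\epsilon}\Tr_{\spc{M}_\lambda}\!\left[\Pi_\lambda\rho\Pi_\lambda\right] + \sum_{\lambda\notin\set{S}_\epsilon}\Tr[\Pi_\lambda\rho]\,\rho_0
\]
and the decoder that projects onto each $\spc{R}_\lambda$ and tensors with $I_{m_\lambda}/m_\lambda$, exactly as in Eqs.~(7)--(8) of the main text. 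The typical set $\set{S}_\epsilon\subset\spc{Y}_{N,r}$ is chosen as the set of Young diagrams whose normalised rows $\lambda_i/N$ lie within a distance $\delta$ in $\ell^\infty$ from the padded spectrum $\bar p=(p_1,\dots,p_r,0,\dots,0)$, for a $\delta$ to be fixed below.

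The triangle inequality bounds the compression error by $\sum_{\lambda\notin\set S_\epsilon}q_{\lambda,N}$, just as in the qubit case. To control this tail I would invoke the Keyl-Werner large-deviation bound for the Schur-Weyl distribution, which gives a tail of the form $(N+1)^{d(d+1)/2}\exp(-N\delta^2/2)$. Setting $\delta^2=[2d(d+1)\ln(N+1)+4\ln(1/\epsilon)]/N$, with corrections of order $1/\sqrt N$, makes this probability smaller than $\epsilon$ for all sufficiently large $N$; this is the $d$-dimensional analogue of the Hoeffding argument used in the qubit proof.

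The core of the argument is bounding $d_{\rm enc}=\sum_{\lambda\in\set S_\epsilon}d_\lambda$. Plugging the Weyl dimension formula $d_\lambda=\prod_{i<j}(\lambda_i-\lambda_j-i+j)/\prod_k k!$ into the definition of $\set S_\epsilon$, I would split the $\binom{d}{2}$ factors into four families on the typical set: pairs with $i,j\le r$ and $p_i>p_j$, each of order $N$ (there are $r(r-1)/2-m$ such pairs, by the definition of $m$); pairs with $i,j\le r$ and $p_i=p_j$, each of order $N\delta$ (exactly $m$ such pairs); pairs with $i\le r<j$, each of order $N$ ($r(d-r)$ of them); and pairs with $r<i<j$, each of order $1$. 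Multiplying yields $d_\lambda\le C\,N^{(2dr-r^2-r)/2}\,\delta^m$ uniformly on $\set S_\epsilon$, while $|\set S_\epsilon|\le (2N\delta+1)^{r-1}$, since the $r$-th coordinate is fixed by $\sum_i\lambda_i=N$ and each of the remaining $r-1$ coordinates ranges over an integer interval of width $2N\delta$.

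Combining the two estimates and substituting the chosen $\delta$ gives
\[
d_{\rm enc}\le C' \, (N+d-1)^{(2dr-r^2-1-m)/2}\left[4d(d+1)\ln(N+1)+8\ln\frac{1}{\epsilon}+O\!\left(\frac{1}{\sqrt N}\right)\right]^{(m+r-1)/2},
\]
from which the claimed bound on $N_{\rm enc}=\lceil\log d_{\rm enc}\rceil$ follows upon taking logarithms. The main obstacle is the bookkeeping in the dimensional count: each degenerate eigenvalue shrinks a Weyl factor by $\delta$ rather than contributing a full factor of $N$, and one must verify that precisely $m$ factors are of this type and that the remaining fluctuations collapse to only $r-1$ effective degrees of freedom. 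A secondary issue is ensuring that all subleading constants from the Keyl-Werner bound and the Weyl formula are absorbed into the argument of the outer logarithm, so that the leading coefficient of $\log(N+d-1)$ is exactly $(2dr-r^2-1-m)/2$ and not polluted by hidden $\log N$ contributions.
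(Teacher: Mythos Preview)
Your proposal is correct and follows essentially the same route as the paper: define the typical set of Young diagrams by a neighbourhood of the spectrum, control the tail with the Keyl--Werner/Christandl--Mitchison concentration bound (the paper's Lemma~5), and bound $d_{\rm enc}$ by splitting the Weyl factors according to whether the corresponding eigenvalue differences are $O(N)$, $O(N\delta)$, or $O(1)$ (the paper's Lemmas~6--7). The only cosmetic difference is that the paper works with total variation distance and then passes to the enclosing hypercube, whereas you start directly from the $\ell^\infty$ ball; the exponent bookkeeping and the final count $(2dr-r^2-1-m)/2$ coincide.
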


The proof of the theorem is based on the Schur-Weyl decomposition 
\begin{align}\label{stateschur} \rho_g^{\otimes N}  =\bigoplus_{\lambda\in\mathcal{{Y}}_{N,r}}q_{\lambda,N} \left(  U_{g}^{(\lambda)}   \,\rho_{0,\lambda}   \,  U_{g}^{(\lambda)  \, \dag}\otimes\frac{I_{m_{\lambda}}}{m_\lambda}\right) \, ,
\end{align}
where    $\rho_{0,\lambda}$ is a fixed density matrix on $\spc R_\lambda$  and $U_g^{(\lambda)}$ is the irreducible representation of $\grp {SU}  (d)$ acting on   $\spc R_\lambda$.  
The  key point is that  the probability distribution  $q_{\lambda, N}$ is concentrated on the Young diagrams such that the vector 
\begin{align}\label{plambda}    p_\lambda   :  =  \left(\frac {\lambda_1} N ,   \dots,  \frac{\lambda_d} N\right)
\end{align}  is close to the vector of the eigenvalues of $\rho_0$ \cite{Spectrum,CM}, listed as 
\begin{align}\label{pspec}  p    =  (p_1, \dots,  p_d)   \,  ,  \qquad p_1\ge p_2  \ge \cdots  \ge p_r  >   p_{r+1}   =  \cdots  =  p_{d}   =  0 \, .  
\end{align}  

Precisely, we will  use the following 
\begin{lem}[\cite{Spectrum,CM}]\label{lem}
Let $p_\lambda$ and $p$ be the vectors defined in Eqs. (\ref{plambda}) and (\ref{pspec}), respectively, and let $ d(a , b)  :  =  \frac 12  \sum_i  |a_i  -b_i|$ be the total variation distance between two vectors. Then, one has
\begin{align*}
{\sf Prob}\left[\lambda: d(p_\lambda, p) >x\right] \le (N + 1)^{d(d+1)/2}\cdot e^{-2Nx^2} \, ,
\end{align*}
with $  {\sf Prob}\left[\lambda: d(p_\lambda, p) >x\right]  :   =   \sum_{\lambda  :  \,   d(p_\lambda, p) >x }  \,    q_{\lambda,N}$, $q_{N, \lambda}$ being the probability distribution in Eq. (\ref{stateschur}).  
\end{lem}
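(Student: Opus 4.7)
The plan is to establish the concentration bound by the classical method of types, adapted to the Schur--Weyl setting: combine a sharp pointwise estimate on $q_{\lambda,N}$ with Pinsker's inequality and a union bound over Young diagrams. The starting identity, which follows from evaluating $\mathrm{Tr}[\Pi_\lambda \rho_g^{\otimes N}]$ against the decomposition \eqref{stateschur} and using that the character depends only on the spectrum, is
\begin{align*}
q_{\lambda,N} \;=\; m_\lambda\, s_\lambda(p_1,\ldots,p_d),
\end{align*}
where $s_\lambda$ is the Schur polynomial---the character of the $\grp{GL}(d)$-irrep $\spc R_\lambda$ evaluated on $\mathrm{diag}(p)$---and $m_\lambda = \dim\spc M_\lambda$ equals the number of standard Young tableaux of shape $\lambda$.

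The central step is to prove the pointwise estimate
\begin{align*}
q_{\lambda,N} \;\le\; (N+1)^{d(d-1)/2}\, e^{-N\, D(p_\lambda\,\|\,p)},
\end{align*}
with $D(\cdot\|\cdot)$ the Kullback--Leibler divergence and $p_\lambda$ as in \eqref{plambda}. For this I would bound $m_\lambda \le \binom{N}{\lambda_1,\ldots,\lambda_d}$ via the hook-length formula, which by Stirling yields $m_\lambda \le \mathrm{poly}(N)\, e^{N H(p_\lambda)}$, and bound $s_\lambda(p)$ by expanding it as a sum over semistandard Young tableaux: each tableau contributes a monomial $\prod_i p_i^{\mu_i}$ with $\sum_i \mu_i=N$, the dominant term coming from the ``ground'' filling with $i$'s on row $i$, which gives $\prod_i p_i^{\lambda_i}$. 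The number of remaining tableaux is polynomially controlled and can be absorbed into the prefactor $(N+1)^{d(d-1)/2}$. Multiplying the two estimates and using the identity
\begin{align*}
H(p_\lambda) + \sum_i (p_\lambda)_i \ln p_i \;=\; -D(p_\lambda\,\|\,p)
\end{align*}
produces the claimed exponential decay.

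With the pointwise estimate in hand, I would invoke Pinsker's inequality $D(p_\lambda\,\|\,p) \ge 2\, d(p_\lambda,p)^2$ to replace KL by total variation, and then sum over the subset of $\mathcal{Y}_{N,d}$ on which $d(p_\lambda,p)>x$. Since each $\lambda_i \in \{0,1,\ldots,N\}$, the elementary counting $|\mathcal{Y}_{N,d}| \le (N+1)^d$ holds, and the arithmetic $d + d(d-1)/2 = d(d+1)/2$ gives
\begin{align*}
\mathsf{Prob}\bigl[\lambda:\, d(p_\lambda,p)>x\bigr] \;\le\; (N+1)^{d(d+1)/2}\, e^{-2Nx^2},
\end{align*}
which is exactly the claim.

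The main obstacle is the Schur polynomial bound. The identity $q_{\lambda,N}=m_\lambda s_\lambda(p)$, Pinsker's inequality, and the union bound are all routine; the delicate part is extracting the exponential factor $e^{-ND(p_\lambda\|p)}$ from $s_\lambda(p)$ while keeping the prefactor at most polynomial in $N$, since $s_\lambda$ is a symmetric polynomial with many terms and no single monomial dominates unless one identifies the correct ``typical'' tableau in the semistandard-tableau expansion. This is the representation-theoretic heart of the Keyl--Werner-type argument, and is precisely where the exponent $d(d-1)/2$ of the prefactor is generated.
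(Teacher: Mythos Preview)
The paper does not supply its own proof of this lemma; it is quoted directly from the cited references (Keyl--Werner and Christandl--Mitchison). Your sketch reproduces essentially the Christandl--Mitchison argument, and the overall strategy is sound: the identity $q_{\lambda,N}=m_\lambda\, s_\lambda(p)$, the pointwise bound with Kullback--Leibler exponent, Pinsker's inequality, and the union bound over $\mathcal Y_{N,d}$ are exactly the right ingredients, and the arithmetic $d+d(d-1)/2=d(d+1)/2$ is correct.

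Two small sharpenings would make the sketch airtight. First, the inequality $\binom{N}{\lambda_1,\ldots,\lambda_d}\le e^{NH(p_\lambda)}$ is exact (it follows from $1\ge\binom{N}{\lambda}\prod_i(\lambda_i/N)^{\lambda_i}$), so there is no need for Stirling or an extra $\mathrm{poly}(N)$ factor; inserting one here would spoil the precise prefactor $(N+1)^{d(d+1)/2}$. Second, the Schur-polynomial step is actually less delicate than you suggest: the content $\mu(T)$ of any semistandard tableau of shape $\lambda$ is majorized by $\lambda$, and for $p_1\ge\cdots\ge p_d$ the map $\mu\mapsto\prod_i p_i^{\mu_i}$ is Schur-convex, so \emph{every} monomial in $s_\lambda(p)$ is bounded by the ground term $\prod_i p_i^{\lambda_i}$. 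Hence $s_\lambda(p)\le d_\lambda\prod_i p_i^{\lambda_i}$ with $d_\lambda\le(N+1)^{d(d-1)/2}$, and no ``typical tableau'' identification is needed. With these two clean inequalities in place, the combination $q_{\lambda,N}\le(N+1)^{d(d-1)/2}\binom{N}{\lambda}\prod_i p_i^{\lambda_i}\le(N+1)^{d(d-1)/2}e^{-ND(p_\lambda\|p)}$ is immediate, and Pinsker plus $|\mathcal Y_{N,d}|\le(N+1)^d$ finish the job.
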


The idea of the proof is to  discard all Young diagrams whose probability vector $p_\lambda$ falls  outside in a ball of size $O(1/\sqrt N)$ around  the vector $p$.      The dimensions of the subspaces associated to the remaining diagrams can be bounded with the following 
\begin{lem}\label{lem:dimensiondegenerate}  
The maximum dimension  of a subspace $\spc R_\lambda$ satisfying   $d (p_\lambda ,  p)  \le  x$ is upper bounded as
\begin{align}
d_\lambda  \le    (4 Nx  + r)^{m}     \,  (  N  + d-1)^{  \frac{2dr-r  (r+1)}2 - m}   \,   . 
\end{align} 
\end{lem}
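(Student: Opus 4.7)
The plan is to apply the Weyl dimension formula
\[
d_\lambda \;=\; \prod_{1 \le i<j \le d} \frac{\lambda_i - \lambda_j + (j-i)}{j-i}
\]
and to bound each factor according to the spectral pattern of $\rho_0$, using the distance constraint $d(p_\lambda,p)\le x$ to force smallness precisely at those pairs of indices for which the eigenvalues of $\rho_0$ coincide.

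The first step is to translate the hypothesis into componentwise control: since $\sum_k |\lambda_k - N p_k| \le 2Nx$, one has $|\lambda_i - N p_i| \le 2Nx$ for every $i$, whence by the triangle inequality $|\lambda_i - \lambda_j|\le 4Nx$ whenever $p_i = p_j$. In particular $\lambda_i \le 2Nx$ for every $i > r$. This is where the constant $4$ in the statement originates.

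The second step is to partition the $\binom{d}{2}$ factors of the Weyl product into four classes according to the position of $i,j$ with respect to $r$ and to the coincidence pattern of $\rho_0$. Class (A) collects pairs with $1\le i<j\le r$ and $p_i \ne p_j$, and contains $\binom{r}{2}-m$ elements. Class (B) collects the $m$ pairs with $1\le i<j\le r$ and $p_i=p_j$. Class (C) consists of the $r(d-r)$ crossing pairs $1\le i\le r<j\le d$. Class (D) consists of pairs with $r<i<j\le d$. For classes (A) and (C) I use the trivial bound $\lambda_i-\lambda_j+(j-i)\le N+d-1$; for class (B) the refined estimate from the previous paragraph gives $\lambda_i-\lambda_j+(j-i)\le 4Nx + r$; and for class (D) the natural restriction to Young diagrams supported on at most $r$ rows, which is compatible with the compression protocol since $\rho_0^{\otimes N}$ is exactly of this form, makes each factor equal to $1$.

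Multiplying these bounds and cancelling the Weyl denominators via $\prod_{1\le i<j\le d}(j-i) = \prod_{k=1}^{d-1} k!$ yields the claim, the exponent of $N+d-1$ being $\binom{r}{2}-m+r(d-r) = (2dr-r^2-r)/2 - m$ and the exponent of $4Nx+r$ being $m$. The main obstacle I anticipate is a rigorous treatment of class (D) without the restriction to rank-$r$ diagrams: in that case one must exploit the monotonicity $\lambda_{r+1}\ge\lambda_{r+2}\ge\cdots\ge 0$ together with $\sum_{i>r}\lambda_i\le 2Nx$, giving the refined bound $\lambda_{r+k}\le 2Nx/k$, and check that $\prod_{r<i<j\le d}\bigl(1 + (\lambda_i-\lambda_j)/(j-i)\bigr)$ can be absorbed into the stated bound without altering its scaling in $N$ and $x$.
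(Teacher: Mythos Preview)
Your proof is correct and follows essentially the same route as the paper's: both start from the Weyl dimension formula, use the total-variation constraint $d(p_\lambda,p)\le x$ to get the componentwise bound $|\lambda_i-\lambda_j|\le 4Nx$ whenever $p_i=p_j$, and then split the factors $\lambda_i-\lambda_j+(j-i)$ according to whether the corresponding eigenvalues coincide, bounding the $m$ coinciding factors by $4Nx+r$ and the remaining $\tfrac{2dr-r^2-r}{2}-m$ factors (with $i\le r$) by $N+d-1$. The paper organizes the split row-by-row (for each $i\le r$, separate $i<j\le i+\mu_i$ from $j>i+\mu_i$) while you group into classes (A)--(D), but the partitions are identical; your observation that the class-(D) ratios equal $1$ under the implicit restriction $\lambda\in\mathcal Y_{N,r}$ is exactly what the paper uses when it drops the indices $i>r$ from the product.
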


\Proof  The dimension can be bounded as 
\begin{align*} d_\lambda  & =   \frac{\prod_{1\le i<j\le d}(\lambda_i-\lambda_j-i+j)  }{\prod_{k=1}^{d-1}k!} \\  
&\le   \frac{  \prod_{1\le i\le r}  \left\{   \left[  \prod_{  i< j  \le  i+  \mu_i}   (\lambda_i-\lambda_j-i+j)  \right]  \,  \left[  \prod_{i+ \mu_i<  j  \le d} \,  (\lambda_i-\lambda_j-i+j)\right]  \right\} }{\prod_{k=1}^{d-1}k!} \\
&\le   \frac{  \prod_{1\le i\le r}  \left\{   \left[  \prod_{  i< j  \le  i+  \mu_i}   (4 Nx  + \mu_i)  \right]  \,  \left[  \prod_{i+ \mu_i<  j  \le d} \,  (  N  + d-1)\right]  \right\} }{\prod_{k=1}^{d-1}k!}   \\
& \le   \frac{    
    \prod_{1\le i\le r}   (4 Nx  + \mu_i)^{\mu_i}     \,  (  N  + d-1)^{d-i-\mu_i}    }{\prod_{k=1}^{d-1}k!} \\
    & \le   \frac{    
  (4 Nx  + r)^{m}     \,  (  N  + d-1)^{  \frac{2dr-r  (r+1)}2 - m}    }{\prod_{k=1}^{d-1}k!}   \, ,
\end{align*}   
having used the fact that the ball $\set S  = \left\{\lambda    \in   \spc Y_{N,r}:  \,   d(p_\lambda, p) \le x \right\} $ is contained in the hypercube $\set S'   =   \{  \lambda    \in   \spc Y_{N,r}:    |  \lambda_i/N  -   p_i|  \le 2 x  \, , \forall i   =  1,\dots,  r  -1  \}$, so that, for $p_i  =  p_j$, $i<j$,   one has $\lambda_i  -  \lambda_j \le  4 N x$.     \qed

 \begin{lem}\label{lem:totaldim1}
The total dimension of the subspaces satisfying $d (p_\lambda ,  p)  \le  x$ satisfies 
\[ \sum_{{\lambda    \in   \spc Y_{N,r}:  \,   d(p_\lambda, p) \le x } }  \,    d_\lambda  \le         \,  (  N  + d-1)^{  \frac{2dr-r  (r+1)}2 - m}       (4 Nx  + r)^{m  + r-1} \, .\]
 \end{lem}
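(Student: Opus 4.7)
The plan is to combine the per-diagram dimension bound of Lemma~\ref{lem:dimensiondegenerate} with a counting estimate for the number of Young diagrams that fall inside the prescribed ball. Since Lemma~\ref{lem:dimensiondegenerate} already provides a uniform upper bound on $d_\lambda$ that is valid for \emph{every} $\lambda$ satisfying $d(p_\lambda,p)\le x$, what remains is essentially to multiply this bound by the cardinality of the set
\[
\set B_x := \left\{\lambda\in\spc Y_{N,r}\;:\;d(p_\lambda,p)\le x\right\}.
\]

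First, I would bound $|\set B_x|$ by the same containment argument used in the proof of Lemma~\ref{lem:dimensiondegenerate}: the total variation ball $\set B_x$ is contained in the hypercube
\[
\set B'_x := \left\{\lambda\in\spc Y_{N,r}\;:\;|\lambda_i/N-p_i|\le 2x,\;\forall i=1,\dots,r-1\right\},
\]
since a deviation of more than $2x$ in a single coordinate would already force $d(p_\lambda,p)>x$. Because $p_{r+1}=\cdots=p_d=0$ and $\lambda\in\spc Y_{N,r}$, the last $d-r$ coordinates of $\lambda$ are automatically zero, and the constraint $\sum_{i=1}^r\lambda_i=N$ fixes $\lambda_r$ once $\lambda_1,\dots,\lambda_{r-1}$ are given. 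Thus only $r-1$ coordinates are free, and each of them takes integer values in an interval of length $4Nx$, giving at most $4Nx+1$ choices. Hence
\[
|\set B_x|\le (4Nx+1)^{r-1}\le (4Nx+r)^{r-1}.
\]

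Second, I would apply Lemma~\ref{lem:dimensiondegenerate} to every $\lambda\in\set B_x$, yielding
\[
\sum_{\lambda\in\set B_x}d_\lambda\;\le\;|\set B_x|\cdot (4Nx+r)^{m}(N+d-1)^{\frac{2dr-r(r+1)}{2}-m},
\]
and combine this with the cardinality bound to obtain the stated estimate
\[
\sum_{\lambda\in\set B_x}d_\lambda\;\le\;(4Nx+r)^{m+r-1}(N+d-1)^{\frac{2dr-r(r+1)}{2}-m}.
\]

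I do not expect any substantive obstacle here: the per-diagram bound has already been done in Lemma~\ref{lem:dimensiondegenerate}, and the only new ingredient is the elementary lattice-point count, for which the mild point to watch is the reduction from $r$ to $r-1$ free coordinates via the sum constraint $\sum_i\lambda_i=N$. If one were sloppy and counted all $r$ coordinates independently, one would pick up an extra factor of $4Nx+r$ and miss the exponent $m+r-1$ in the claimed bound, so the sum constraint is what makes the exponent tight.
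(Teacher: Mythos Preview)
Your proposal is correct and follows essentially the same argument as the paper: bound each $d_\lambda$ by Lemma~\ref{lem:dimensiondegenerate}, contain the total-variation ball in the hypercube $\{\lambda:|\lambda_i/N-p_i|\le 2x,\ i=1,\dots,r-1\}$, use the sum constraint to reduce to $r-1$ free coordinates, and multiply. The paper writes the cardinality bound as $(4Nx)^{r-1}$ rather than your slightly more careful $(4Nx+1)^{r-1}$, but either way the product is absorbed into $(4Nx+r)^{m+r-1}$.
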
 

\Proof  
 Immediate from Lemma \ref{lem:dimensiondegenerate}  and from the fact that the ball $\set S  = \left\{\lambda    \in   \spc Y_{N,r}:  \,   d(p_\lambda, p) \le x \right\} $ is contained in the hypercube $\set S'   =   \{  \lambda    \in   \spc Y_{N,r}:    |  \lambda_i/N  -   p_i|  \le 2 x  \, , \forall i   =  1,\dots,  r  -1  \}$, yielding the bound
\[ |\set S|  \le   |\set S'|   \le   (4 Nx)^{r-1}  \, .\]
\qed 

\medskip 

{\bf Proof of Theorem \ref{thm2second}.}  
To compress within an error $\epsilon$, we choose the encoding and decoding channels 
\begin{align*}
\map E     (\rho)     & =    \bigoplus_{\lambda  \in   \set S_\epsilon }      \Tr_{\spc M_\lambda}   [\Pi_\lambda  \rho  \Pi_\lambda]  \,  \oplus  \,  \Tr \left[  \rho   \left(I^{\otimes N}-  \Pi_\epsilon  \right)\right]  \, \rho_{\rm fail}  \\
\map D(\rho')      &    =      \bigoplus_{\lambda  \in \set S_\epsilon} \, \left(       P_\lambda  \rho'   P_\lambda  \otimes \frac {  I_{m_\lambda}}{m_{\lambda}}  \right) \, ,        
\end{align*}
with $\Pi_{\epsilon}  =  \bigoplus_{\lambda\in\set S_\epsilon} \Pi_\lambda$,  $ \Supp ( \rho_{\rm fail})  \subseteq  \spc H_{\rm enc}  = \bigoplus_{\lambda  \in  \set S_\epsilon}  \,    \spc R_\lambda$, and
\begin{align*}
\set S_\epsilon:=\left\{\lambda\in\mathcal{{Y}}_{N,r}~|~ d(p_\lambda, p) \le x_\epsilon\right\} \, , \qquad    x_\epsilon= \sqrt{\frac{d(d+1)/2\ln(N+1)+ \ln(1/\epsilon)}{ 2N}}    \, .
\end{align*}
The value of $x_\epsilon$ is chosen in order to bound the compression error as 
 \begin{align*}
e_N    &=       \frac 12 \left\|   \map D\circ \map E  \left( \rho_g^{\otimes N} \right)   -   \rho_g^{\otimes N}   \right\|     \qquad \forall g\in\grp{SU}  (d)   \\  &\le  \frac 12    \Tr \left[  \rho   \left(I^{\otimes N}-  \Pi_\epsilon  \right)\right]  \,      \left\| \map D  (\rho_{\rm fail})   -       \rho_{g,\rm fail}      \right\|    \, , \qquad \rho_{g, \rm fail} :  =    \bigoplus_{ \lambda\not \in \set S_\epsilon}     \frac{  q_{\lambda,N} }{ \Tr \left[  \rho   \left(I^{\otimes N}-  \Pi_\epsilon  \right)\right]  }  \, \left(  U_{g}^{(\lambda)}   \,\rho_{0,\lambda}   \,  U_{g}^{(\lambda)     \, \dag}    \otimes\frac{I_{m_{\lambda}}}{m_\lambda}    \right)  \\
 &\le    \Tr \left[  \rho   \left(I^{\otimes N}-  \Pi_\epsilon  \right)\right]   \\
 &  =   \,   \sum_{\lambda \not \in  \set S_\epsilon}  \, q_{\lambda,N}  \\
 &\le    (N + 1)^{d(d+1)/2}\cdot e^{-2Nx^2}  \\
 &  =  \epsilon \, , 
\end{align*}
the last inequality coming from Lemma \ref{lem}.  On the other hand, the encoding subspace has dimension
\begin{align*}
d_{\rm enc}&  =  \sum_{\lambda  \in  \set S_
\epsilon}  \,    d_\lambda  \\
  &   \le     (  N  + d-1)^{dr   -  \frac{r  (r+1)}2 - m}       (4 Nx  + r)^{m  + r-1}    \\
  &\le       (  N  + d-1)^{dr   -  \frac{r  (r+1)}2 - m}           \,       N^{\frac{m+ r-1}2}  \,  \left[4d(d+1)\ln (N+1) +8\ln \left(\frac 1\epsilon\right)  +  O\left( \frac 1 {\sqrt N} \right )\right]^{\frac{m+r-1}2}   \\
    &\le   (N+d-1)^{ \frac{2 dr  -  r^2  - 1 - m }2 }       \,  \left[4d(d+1)\ln (N+1) +8\ln \left(\frac 1\epsilon\right)  +  O\left(  \frac1 {\sqrt N} \right )\right]^{\frac{m+r-1}2}    \, , 
\end{align*}
having used Lemma \ref{lem:totaldim1} and the definition of $x_\epsilon$.  Hence,  the number of encoding qubits satisfies 
\begin{align*}
N_{\rm enc}     &\le  \log d_{\rm enc}  \\
 &\le\frac{2rd-r^2-1-m}{2}\log (N+d-1)+    \frac{m+r-1}{2}\log \left[  4 d (d+1)\ln(N+1)+  8    \ln\left(\frac{1}{\epsilon}\right)  +  O\left( \frac 1 {\sqrt N}  \right)\right]   \, .
\end{align*}
\qed 

\subsection{Optimality proof in the presence of symmetry}  
Here we prove the converse of Theorem  \ref{thm2second}.   Our proof is valid for protocols where the encoding is covariant and the decoding preserves the \emph{nonabelian charges} \cite{algebra-preserve}  identified by the Young diagrams.
 Precisely, we assume that 
  \begin{enumerate}
 \item the encoding space $\spc H_{\rm enc}$  supports a unitary representation of the group $\grp {SU}  (d)$, denoted by  $  \{  V_g ~|~ g\in\grp {SU} (d) \}$. 
 \item  the encoding channel satisfies the covariance condition    $ \map E \circ    \map U_g      =   \map  V_g  \circ  \map E$,    $\forall  g\in\grp {SU}  (d)$. 
  \item  the decoding channel $\map D$   preserves the nonabelian charges associated to $\grp {SU}  (d)$, namely,  for every input state $\rho$, one has 
  \begin{align}\label{conservation}
   \Tr  \left [   {\Pi}_\lambda  \,    \map D( \rho)  \right]  =   \Tr  \left[    \widetilde \Pi_\lambda     \,  \rho  \right ]  \,  \qquad \forall \lambda \in  \spc Y_{N,d}       \, ,
 \end{align}
where    $\widetilde \Pi_{\lambda}$   is the projector on the direct sum  of  all the invariant subspaces of $\spc H_{\rm enc}$  with Young diagram $\lambda$.
 \end{enumerate}
By the same argument as in the qubit case, the error of the compression protocol satisfying the above assumption can be lower bounded as $e_N\ge (1/2)\sum_{\lambda\in\set S}q_{\lambda,N}$, with $\set S$ being a subset of $\spc{Y}_{N,r}$ specified by the protocol. The encoding dimension is given by $d_{\rm enc}=\sum_{\lambda\in\set S}q_{\lambda,N}$. We have the following theorem.

\begin{theo}\label{thmopt}     Every compression protocol that encodes a complete  $N$-qubit ensemble  into
$$\left(\frac{2dr-r^2-1-m}{2}-\delta\right)\log N \, ,  \qquad \delta   >  0 \, ,$$
  qubits  with  covariant encoding and a decoding that  preserves the  nonabelian charges  will necessarily have error $e \ge1/2$ in the asymptotic limit. Here  $m  :  =  \sum_{i=1}^r  \mu_i$, where  $\mu_i$ be the cardinality of the set $\{ j :  \,  j  >  i  \, ,  p_j  =  p_i \}$. We notice that $m=0$ when the spectrum is non-degenerate.
\end{theo}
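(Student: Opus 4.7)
The plan is to generalise the qubit argument of Theorem~3, replacing each binomial estimate by its multinomial Schur-Weyl counterpart. As the excerpt already observes, covariance of $\map E$ together with the charge-preserving property of $\map D$ force the image of $\map D\circ\map E$ on any input $\rho_g^{\otimes N}$ to be supported on $\bigoplus_{\lambda\in\set S}(\spc R_\lambda\otimes\spc M_\lambda)$ for some set $\set S\subseteq\spc Y_{N,r}$ of Young diagrams. Repeating verbatim the orthogonality-of-blocks computation that produced Eq.~(\ref{approxt}) then yields the two fundamental inequalities
\begin{align*}
e_N\ge\tfrac12\sum_{\lambda\notin\set S}q_{\lambda,N},\qquad d_{\rm enc}\ge\sum_{\lambda\in\set S}d_\lambda,
\end{align*}
so that the problem reduces to showing $\sum_{\lambda\in\set S}q_{\lambda,N}\to 0$ under the dimensional assumption $\sum_{\lambda\in\set S}d_\lambda\le N^{(2dr-r^2-1-m)/2-\delta}$.

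As the next step, I would introduce the typical set $\set T_N:=\{\lambda:\,d(p_\lambda,p)\le R_N/\sqrt N\}$ with $R_N^2:=\tfrac{d(d+1)}{4}\ln(N+1)+c$ for an arbitrary constant $c>0$, so that Lemma~\ref{lem} immediately yields $\sum_{\lambda\notin\set T_N}q_{\lambda,N}\le e^{-2c}$. Splitting $\set S=\set S_1\sqcup\set S_2$ along $\set T_N$, the atypical piece $\set S_2$ contributes at most $e^{-2c}$ to the total probability. For the typical piece the key quantitative input is the pointwise Gaussian-peak estimate
\begin{align*}
q_{\lambda,N}\le C(p,d,r)\,\frac{d_\lambda\,\mathrm{polylog}(N)}{N^{(2dr-r^2-1-m)/2}}\qquad\forall\lambda\in\set T_N,
\end{align*}
whose $d=r=2$, $m=0$ specialisation is the bound $q_{j,N}/(2j+1)=O(N^{-3/2})$ that drives the qubit proof. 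Granting it, one obtains
\begin{align*}
\sum_{\lambda\in\set S_1}q_{\lambda,N}\le C\,\mathrm{polylog}(N)\,\frac{d_{\rm enc}}{N^{(2dr-r^2-1-m)/2}}=O\bigl(N^{-\delta}\mathrm{polylog}(N)\bigr)=o(1),
\end{align*}
so that $e_N\ge\tfrac12(1-e^{-2c}-o(1))$; letting $N\to\infty$ and then $c\to\infty$ gives $e_N\to\tfrac12$.

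The main technical obstacle is the pointwise Gaussian-peak estimate above. In the qubit case the analogous inequality was read off directly from the closed form~(\ref{dist}) of $q_{j,N}$ via the de Moivre--Laplace theorem, but for $d>2$ one needs a uniform local central limit theorem for the Schur-Weyl measure $q_{\lambda,N}$: concretely, one must control simultaneously the Gaussian fluctuations of $\lambda$ across the $r$ distinct eigenvalue groups and the additional $O(\sqrt{N\ln N})$ fluctuations \emph{within} each group of $p_i=p_j$ degenerate eigenvalues, and then combine them with the Weyl dimension formula~(\ref{dimensione}) to extract the typical value $d_\lambda\sim N^{(2dr-r^2-r-m)/2}\mathrm{polylog}(N)$ that feeds the scaling above. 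Once this analytic input is in place, the remainder is a structure-preserving translation of Theorem~3's proof, with Lemma~\ref{lem:totaldim1} playing the role of the trivial dimension count $\sum_j(2j+1)=O(N^2)$ used there.
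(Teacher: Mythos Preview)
Your skeleton is exactly the paper's: reduce via covariance and charge conservation to a set $\set S$ with $e_N\ge\tfrac12\sum_{\lambda\notin\set S}q_{\lambda,N}$ and $d_{\rm enc}\ge\sum_{\lambda\in\set S}d_\lambda$, intersect $\set S$ with a typical window, and win by a pointwise bound of the form $q_{\lambda,N}/d_\lambda\lesssim N^{-(2dr-r^2-1-m)/2}$ times a slowly growing factor. The paper chooses the wider window $\set H_\epsilon=\{\lambda:|\lambda_i-p_iN|\le\tfrac12 N^{(1+\epsilon)/2}\}$ and takes $\epsilon=\delta/m$, picking up a factor $N^{\epsilon m/2}$ in place of your $\mathrm{polylog}(N)$; your $\sqrt{N\ln N}$-wide window with the free constant $c$ is a cosmetically sharper variant of the same idea and works just as well once the pointwise bound is in hand.

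Where you diverge is in how to obtain that pointwise bound. You propose a uniform local CLT for the Schur--Weyl measure, but the paper does not go that route: it invokes the exact closed form $q_{\lambda,N}=(\det\Delta/\det\Sigma)\,m_\lambda$ from \cite{ExactDistribution}, expands $\det\Delta$ over permutations $\sigma\in\grp S_r$, and splits the sum according to whether $\sigma$ preserves the degeneracy blocks $\set L_1,\dots,\set L_k$. The block-preserving piece factors into Vandermonde-type determinants bounded by $w^m$ (with $w$ the window width), while the off-block piece is shown to decay like $e^{-N\,D(p\|\sigma_p)}$ via Kullback--Leibler divergence; combining with elementary bounds on $m_\lambda$ and the multinomial peak $m(N,p,\lambda)=O(N^{-(r-1)/2})$ gives $q_{\lambda,N}/d_\lambda\lesssim w^m N^{-(2dr-r^2-1)/2}$ directly. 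This is considerably more elementary than a uniform local CLT---which, in the degenerate case, would have to resolve the joint fluctuations of $\lambda_i-\lambda_j$ within each eigenvalue cluster and is not available off the shelf in the form you need. So your outline is correct, but the ``main technical obstacle'' you flag is best handled by the exact-formula/determinant argument rather than by asymptotic analysis.
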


To prove the theorem, we first define the cubic lattice
\begin{align}\label{H-epsilon}
\set H_\epsilon   =   \left\{  \lambda  \in  \spc Y_{N,r}   ~\left|~  
\lambda_i  \in  \left[  p_i N -    \frac{\sqrt{  N ^{1+\epsilon}}} 2 ,   p_i N +  \frac{\sqrt{N^{1+\epsilon}}} 2   \right]     \, ,  \quad  \forall~i=1,\dots,  r-1 
\right\}\right.  \, 
\end{align}
for any constant $\epsilon\in(0,1)$. With this definition, the sum of the probability $q_{\lambda,N}$ when $\lambda\not\in\set H_\epsilon$ vanishes exponentially in $N$. Precisely, we have the following lemma.
\begin{lem} \label{lemma-H-epsilon}
For the set $\set H_\epsilon$ defined by Eq. (\ref{H-epsilon}), the following bound holds.
$$\sum_{\lambda\not\in \set H_{\epsilon}}q_{\lambda,N}\le (N+1)^{\frac{d(d+1)}2}e^{-\frac{N^\epsilon}8}.$$
\end{lem}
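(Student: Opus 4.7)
The plan is to reduce the claim to a direct application of Lemma \ref{lem} by converting the ``outside the cube'' condition into a total variation distance bound on the empirical Young diagram. First I would observe that $\lambda\notin\set H_\epsilon$ means exactly that there exists some index $i\in\{1,\dots,r-1\}$ with $|\lambda_i - p_i N| > \sqrt{N^{1+\epsilon}}/2$, and dividing by $N$ this becomes $|p_{\lambda,i} - p_i| > \frac12 N^{(\epsilon-1)/2}$, where $p_{\lambda,i} := \lambda_i/N$ is the $i$th component of the vector $p_\lambda$ from Eq.~(\ref{plambda}).

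Next I would use the elementary fact that the total variation distance $d(p_\lambda,p) = \frac12\sum_j |p_{\lambda,j} - p_j|$ is bounded below by $\frac12|p_{\lambda,i} - p_i|$ for the single offending coordinate $i$, so $\lambda\notin\set H_\epsilon$ implies $d(p_\lambda,p) > \frac14 N^{(\epsilon-1)/2}$. Choosing $x = \frac14 N^{(\epsilon-1)/2}$ in Lemma \ref{lem} then gives
\[
\sum_{\lambda\notin\set H_\epsilon} q_{\lambda,N} \;\le\; {\sf Prob}\bigl[\lambda : d(p_\lambda,p) > x\bigr] \;\le\; (N+1)^{d(d+1)/2}\, e^{-2Nx^2} \;=\; (N+1)^{\frac{d(d+1)}{2}}\, e^{-N^\epsilon/8},
\]
which is exactly the stated inequality.

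The whole argument is essentially a rescaling: the cube $\set H_\epsilon$ has half-side $\sqrt{N^{1+\epsilon}}/2$ in ``$\lambda$-coordinates,'' which becomes half-side of order $N^{(\epsilon-1)/2}$ in ``$p_\lambda$-coordinates,'' and the Keyl--Werner type concentration of Lemma \ref{lem} then delivers subgaussian decay at that scale, with the precise constants chosen so that $2Nx^2 = N^\epsilon/8$. There is no real obstacle here; the only item requiring attention is that $\set H_\epsilon$ constrains only the coordinates $i=1,\dots,r-1$, but this is harmless because violation at any single coordinate already suffices to lower bound the total variation distance, so one never needs to worry separately about the $r$th coordinate (which is anyway determined by the others through $\sum_i \lambda_i = N$).
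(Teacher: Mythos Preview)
Your proof is correct and follows essentially the same approach as the paper: both identify an offending coordinate $i$ with $|\lambda_i - p_iN| > \sqrt{N^{1+\epsilon}}/2$, convert this to the lower bound $d(p_\lambda,p) > \tfrac14 N^{(\epsilon-1)/2}$, and then invoke Lemma~\ref{lem} with $x = \tfrac14 N^{(\epsilon-1)/2}$ so that $2Nx^2 = N^\epsilon/8$. Your remark about the $r$th coordinate being determined by the others is a helpful clarification that the paper leaves implicit.
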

\begin{proof}
For any Young diagram $\lambda$ not in the set $\set H_\epsilon$, there exist at least one $j$ such that $|\lambda_j-p_i N|\ge  \sqrt{  N ^{1+\epsilon}}/2$.
Thus we have $$d(p_\lambda,p)\ge \frac12\left|\frac{\lambda_j}{N}-p_j\right|\ge \frac1{4\sqrt{N^{1-\epsilon}}}.$$ Substituting this fact into Lemma \ref{lem}, we immediately get the following lemma.
\end{proof}

Now we start to bound the probability distribution $q_{\lambda,N}$ within the set $\set H_\epsilon$. Notice that the exact expression of $q_{\lambda,N}$ is given as \cite{ExactDistribution}
\begin{align}\label{q-expression}
q_{\lambda,N}=\frac{\det \Delta }{\det \Sigma}\cdot m_{\lambda}
\end{align}
where the matrix $\Sigma$ is independent of $N$ (and thus its expression is not relevant to bounding the probability) and the matrix $\Delta$ is a rank $r$ square matrix defined as the following.
\begin{align}
\Delta_{ij}=\left[\prod_{\beta=0}^{\mu_j-1}(\lambda_i+r-i-\beta)\right]p_j^{\lambda_i+r-i-\mu_j},
\end{align}
with $\mu_i$ defined in Theorem \ref{thmopt}. Notice that we follow the convention $\prod_{i=0}^{-1}f(i)=1$. We first prove the following bound of $\det\Delta$.

\begin{lem} \label{lemma-delta}
For any $\lambda$ in the set $\set H_\epsilon$ defined by Eq. (\ref{H-epsilon}), the following bound holds asymptotically for large $N$: 
$$\det\Delta\lesssim N^{\frac {(1+\epsilon)m}2}\left(\prod_{i=1}^{r}p_i^{\lambda_i}\right) \, , \qquad m=\sum_{i=1}^r \mu_i \, . $$

\end{lem}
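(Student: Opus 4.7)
The plan is to separate the dependence of $\det\Delta$ on $N$ into a dominant exponential factor $\prod_i p_i^{\lambda_i}$ and a polynomial prefactor, whose naive size $N^m$ is cut down to $N^{(1+\epsilon)m/2}$ by Vandermonde-type cancellations occurring inside each degeneracy class of the spectrum. First I would introduce the shifted index $x_i := \lambda_i + r - i$ and rewrite
\begin{equation*}
\Delta_{ij} \;=\; p_j^{-\mu_j}\, p_j^{x_i}\, (x_i)_{\mu_j}, \qquad (x)_k := x(x-1)\cdots(x-k+1).
\end{equation*}
Pulling $p_j^{-\mu_j}$ out of column $j$ and noting that $\prod_j p_j^{-\mu_j}\cdot \prod_i p_i^{r-i}$ is an $N$-independent constant, the lemma reduces to proving $|\det \widetilde N|\lesssim N^{(1+\epsilon)m/2}\prod_i p_i^{x_i}$ for $\widetilde N_{ij} := p_j^{x_i}(x_i)_{\mu_j}$.

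Next I would Leibniz-expand
\begin{equation*}
\det \widetilde N \;=\; \sum_{\sigma\in S_r}\mathrm{sgn}(\sigma)\,\Bigl[\prod_i p_{\sigma(i)}^{x_i}\Bigr]\Bigl[\prod_i(x_i)_{\mu_{\sigma(i)}}\Bigr]
\end{equation*}
and split the sum according to whether $\sigma$ preserves the degeneracy classes (i.e.\ $p_{\sigma(i)}=p_i$ for every $i$) or not. Class-preserving permutations all share the exponential weight $\prod_i p_i^{x_i}$, while non-class-preserving ones produce a ratio $\prod_i(p_{\sigma(i)}/p_i)^{x_i}$. Writing $x_i = p_i N + O(N^{(1+\epsilon)/2})$ for $\lambda\in\set H_\epsilon$ and applying Gibbs' inequality $\sum_i p_i\log p_{\sigma(i)}<\sum_i p_i\log p_i$ (strict precisely when $\sigma$ does not preserve classes), the latter ratio is at most $\exp(-\Omega(N))$, which for $\epsilon<1$ is much smaller than any polynomial in $N$ and can therefore be absorbed into the error term.

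The class-preserving contribution factorizes across classes. For a class $C=\{j_1<\dots<j_{k+1}\}$ one has $\mu_{j_b}=k+1-b$, and the corresponding block is $[(x_{j_a})_{\mu_{j_b}}]_{a,b}$, whose column $b$ is a monic polynomial of degree $k+1-b$ in $x_{j_a}$. Elementary column operations remove the lower-order terms of each falling factorial, reducing the block to a Vandermonde matrix in $x_{j_1},\dots,x_{j_{k+1}}$ and giving
\begin{equation*}
\bigl|\det[(x_{j_a})_{\mu_{j_b}}]_{a,b}\bigr| \;=\; \prod_{1\le a<b\le k+1}|x_{j_a}-x_{j_b}|.
\end{equation*}
Inside $\set H_\epsilon$ each factor is at most $\sqrt{N^{1+\epsilon}}+r$, so this block contributes $O(N^{(1+\epsilon)k(k+1)/4})$. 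Taking the product over classes and using $\sum_{\text{classes}}k(k+1)/2=\sum_i\mu_i=m$ yields the polynomial prefactor $N^{(1+\epsilon)m/2}$, and combining with the first two steps proves the lemma.

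The main obstacle will be making the suppression of non-class-preserving permutations rigorous and uniform over $\lambda\in\set H_\epsilon$: since $x_i$ may fluctuate by $O(N^{(1+\epsilon)/2})$ around $p_i N$, the correction $\exp\bigl(\sum_i(x_i-p_iN)\log(p_{\sigma(i)}/p_i)\bigr)=\exp(O(N^{(1+\epsilon)/2}))$ must remain negligible compared with the leading gap $\exp(-\Omega(N))$ coming from Gibbs. This is where the constraint $\epsilon<1$ built into $\set H_\epsilon$ enters decisively, together with a quantitative form of the strict separation between distinct eigenvalues $p_i$.
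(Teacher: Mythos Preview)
Your proposal is correct and follows essentially the same route as the paper: both split the Leibniz expansion into permutations that preserve the degeneracy classes (the paper's subgroup $\grp P_r$) versus those that do not, bound the latter via positivity of the relative entropy $D(p\|\sigma_p)$ (your Gibbs' inequality), and factor the former over classes into Vandermonde determinants in the variables $\lambda_i+r-i$, each contributing $O(N^{(1+\epsilon)/2})$ per pair. The only cosmetic differences are your use of the shifted variable $x_i$ and explicit column operations where the paper directly quotes $\det\Delta_l=\prod_{i<j}(\lambda_i-\lambda_j+j-i)$.
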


\begin{proof}
Suppose that there are $k$ distinct positive values in the spectrum, and the $i$-th biggest value has degeneracy $r_i$. 
We can then divide the set $\{1,\dots,r\}$ into $k$ subsets $\set L_1\cup\dots\cup\set L_k$, corresponding to the distinct eigenvalues, so that  $\set L_i$ is the set of  indices corresponding to the $i$-th biggest eigenvalue.  Recalling that $r_j$ is the degeneracy of the $j$-th eigenvalue, we have
$$\set L_i=\left\{\sum_{j=1}^{i-1}r_j+1,\dots,\sum_{j=1}^i r_j\right\}.$$
Notice that, by definition, one has 
\begin{align}\label{L-property}
p_l=p_k\qquad\forall\,l,k\in\set L_i.
\end{align}
With the above definition, the spectrum now reads
\begin{align*}
\underbrace{p_1=\dots=p_{r_1}}_{\set L_1}>\underbrace{p_{r_1+1}=\dots=p_{r_1+r_2}}_{\set L_2}>\dots>\underbrace{p_{\sum_{i=1}^{k-1}r_i+1}=\dots=p_{r}}_{\set L_k}>p_{r+1}=\dots=p_{d}=0.
\end{align*}

Correspondingly, we define a subgroup $\grp{P}_r$ of the group $\grp S_r$, consisting of the product of permutations that act within the subsets  $\{\set L_i\}$. Precisely, 
\begin{align*}
\grp{P}_r:=\left\{  \sigma^{(1)}  \times \sigma^{(2)}  \times \cdots \times \sigma^{(k)} \,  | \,  \sigma^{(i)} \in\grp S_{r_i}; i=1,\dots,k\right\}.
\end{align*}

 With the above definition, we divide $\det\Delta$ into two terms
\begin{equation}\label{main-bound}
\begin{split}
\det \Delta &=t_1+t_2\\
 t_1&=\sum_{\sigma\in\grp{P}_r}{\rm sgn}(\sigma)\left(\prod_{i=1}^{r}\Delta_{i\,\sigma_i }\right)\\
 t_2&=\sum_{\sigma\not\in\grp{P}_r}{\rm sgn}(\sigma)\left(\prod_{i=1}^{r}\Delta_{i\,\sigma_i }\right),
 \end{split}
\end{equation}
denoting by $\sigma_i$ the index that comes from applying $\sigma$ to $i$.

Let us bound $t_1$.   By definition, $\grp P_r$ contains every permutation $\sigma$ such that $p_i=p_{\sigma_i}$ for every $i$. Therefore, we have
\begin{align*}
 t_1&=  \sum_{\sigma\in\grp{P}_r}{\rm sgn}(\sigma)\left\{\prod_{i=1}^{r}\left[\prod_{\beta=0}^{\mu_{\sigma_i}-1}(\lambda_i+r-i-\beta)\right]p_{\sigma_i}^{\lambda_i+r-i-\mu_{\sigma_i}}\right\}\\
   & =  \sum_{\sigma\in\grp{P}_r}{\rm sgn}(\sigma)\left\{\prod_{i=1}^{r}\left[\prod_{\beta=0}^{\mu_{\sigma_i}-1}(\lambda_i+r-i-\beta)\right]p_{i}^{\lambda_i+r-i}\right\}\left(\prod_{i=1}^{r}p_{\sigma_i}^{-\mu_{\sigma_i}}\right)\\
      & =  \sum_{\sigma\in\grp{P}_r}{\rm sgn}(\sigma)\left\{\prod_{i=1}^{r}\left[\prod_{\beta=0}^{\mu_{\sigma_i}-1}(\lambda_i+r-i-\beta)\right]p_{i}^{\lambda_i+r-i}\right\}\left(\prod_{i=1}^{r}p_{i}^{-\mu_i}\right)\\
 & =  \left(\prod_{i=1}^{r} p_{i}^{\lambda_i+r-i-\mu_{i}}\right)\sum_{\sigma\in\grp{P}_r}{\rm sgn}(\sigma)\left[\prod_{i=1}^{r}\prod_{\beta=0}^{\mu_{\sigma_i}-1}(\lambda_i+r-i-\beta)\right].
\end{align*}
Since $i$ and $\sigma_i$ are always in the same subset $\set L_l$ (for  suitable $l$), we can rewrite the term $\prod_{i=1}^{r}\prod_{\beta=0}^{\mu_{\sigma_i}-1}(\lambda_i+r-i-\beta)$ as $\prod_{l=1}^{k}\prod_{i\in \set L_l}\prod_{\beta=0}^{\mu_{\sigma_i}-1}(\lambda_i+r-i-\beta)$. We then have
\begin{align*}
t_1  & =  \left(\prod_{i=1}^{r} p_{i}^{\lambda_i+r-i-\mu_{i}}\right) \sum_{\sigma\in\grp{P}_r}{\rm sgn}(\sigma)\left\{\prod_{l=1}^{k}\left[\prod_{i\in \set L_l}\prod_{\beta=0}^{\mu_{\sigma_i}-1}(\lambda_i+r-i-\beta)\right]\right\}\\
    &=\left(\prod_{i=1}^{r} p_{i}^{\lambda_i+r-i-\mu_{i}}\right) \prod_{l=1}^{k}\left\{\sum_{\sigma^{(l)}\in\grp{S}_{r_l}}{\rm sgn}\left(\sigma^{(l)}\right)\left[\prod_{i\in \set L_l}\prod_{\beta=0}^{\mu_{\sigma^{(l)}_i}-1}(\lambda_i+r-i-\beta)\right]\right\}\\
 & = \left(\prod_{i=1}^{r} p_{i}^{\lambda_i+r-i-\mu_{i}}\right) \prod_{l=1}^{k}\left\{\sum_{\sigma^{(l)}\in\grp{S}_{r_l}}{\rm sgn}\left(\sigma^{(l)}\right)\left[\prod_{i\in\set L_l}\left(\Delta_l\right)_{i\sigma^{(l)}_{i}}\right]\right\}\\
  & =  \left(\prod_{i=1}^{r} p_{i}^{\lambda_i+r-i-\mu_{i}}\right)\left(\prod_{l=1}^{k}~\det \Delta_l\right).
\end{align*}
Here $\Delta_l$ is a rank $r_l$ square matrix defined as
\begin{align*}
\left(\Delta_l\right)_{ij}=\prod_{\beta=0}^{r_l-j-1}(\lambda_i+r-i-\beta),
\end{align*}
observing that $\mu_j$ assumes the values $r_l-1,r_l-2,\dots,1,0$ for the indices in $\set L_l$.
The determinant of $\Delta_l$ equals to $\prod_{1\le i<j\le r_l}(\lambda_i-\lambda_j+j-i)$. Combining this with the definition of $\set H_\epsilon$ (\ref{H-epsilon}), we have
\begin{align}
t_1&=\left[\prod_{l=1}^{k}\prod_{1\le i<j\le r_l}(\lambda_i-\lambda_j+j-i)\right]\left(\prod_{i=1}^{r} p_{i}^{\lambda_i+r-i-\mu_{i}}\right)\nonumber\\
&\lesssim \left[\prod_{l=1}^{k}\left(\sqrt{N^{1+\epsilon}}\right)^{\frac{r_l(r_l-1)}2}\right]\left(\prod_{i=1}^{r} p_{i}^{\lambda_i+r-i-\mu_{i}}\right)\nonumber\\
&= N^{\frac{(1+\epsilon)m}{2}}\left(\prod_{i=1}^{r} p_{i}^{\lambda_i+r-i-\mu_{i}}\right)\nonumber\\
&\approx N^{\frac{(1+\epsilon)m}{2}}\left(\prod_{i=1}^{r} p_{i}^{\lambda_i}\right)\label{term1}.
\end{align}
The last step follows from the fact that $$m=\sum_{i=1}^{r} \mu_i=\sum_{i=1}^{k}\sum_{j=1}^{r_i}(r_i-j).$$

Next, we bound the second term $t_2$ in Eq. (\ref{main-bound}) as
\begin{align*}
t_2&\le \sum_{\sigma\not\in\grp{P}_r}\left(\prod_{i=1}^{r}\Delta_{i\,\sigma_i }\right)\\
&= \sum_{\sigma\not\in\grp{P}_r}\left\{\prod_{i=1}^{r}\left[\prod_{j=0}^{\mu_{\sigma_i}-1}(\lambda_i+r-i-j)\right]p_{\sigma_i}^{\lambda_i+r-i-\mu_{\sigma_i}}\right\}\\
&\le \sum_{\sigma\not\in\grp{P}_r}\left[\prod_{i=1}^{r}(N+r-1)^{\mu_{\sigma_i}}p_{\sigma_i}^{\lambda_i+r-i-\mu_{\sigma_i}}\right]\\
&= (N+r-1)^m\sum_{\sigma\not\in\grp{P}_r}\left[\prod_{i=1}^{r}p_{\sigma_i}^{\lambda_i+r-i-\mu_{\sigma_i}}\right]\\
&= (N+r-1)^m\sum_{\sigma\not\in\grp{P}_r}\left[\prod_{i=1}^{r}\left(\frac{p_{\sigma_i}}{p_{i}}\right)^{\lambda_i+r-j-\mu_{\sigma_j}}\right]\left[\prod_{j=1}^{r}p_{j}^{\lambda_j+r-j-\mu_{\sigma_j}}\right]\\
&= (N+r-1)^m\sum_{\sigma\not\in\grp{P}_r}\left[\prod_{i=1}^{r}\left(\frac{p_{\sigma_i}}{p_{i}}\right)^{Np_i+O(\sqrt{N^{1+\epsilon}})}\right]\left[\prod_{j=1}^{r}p_{j}^{\lambda_j+r-j-\mu_{\sigma_j}}\right]\\
&\approx (N+r-1)^m\sum_{\sigma\not\in\grp{P}_r}\exp\left[-N D(p||\sigma_p)\right]\left[\prod_{i=1}^{r}p_{i}^{\lambda_i+r-i-\mu_{\sigma_i}}\right],
\end{align*}
where $D(p||q):=\sum_i p_i \ln(p_i/q_i)$ is the Kullback-Leibler divergence and $\sigma_p:=(\sigma_{p_1},\dots,\sigma_{p_r})$. Now, since $\sigma\not\in\grp{P}_r$, we always have $D(p||\sigma_p)>0$. Therefore, the second term in Eq. (\ref{main-bound}) vanishes exponential in $N$. Combining this fact with Eq. (\ref{main-bound}) and Eq. (\ref{term1}) we get the desired bound on $\det \Delta$.
\end{proof}

\begin{lem} \label{lemma-q/d}
For any $\lambda$ in the set $\set H_\epsilon$ defined by Eq. (\ref{H-epsilon}), the following bound holds asymptotically for large $N$.
$$\frac{q_{\lambda,N}}{d_{\lambda}}\lesssim N^{-\frac{2dr-r^2-1-(1+\epsilon)m}2}.$$
\end{lem}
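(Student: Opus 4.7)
The plan is to combine the exact formula (\ref{q-expression}) with the bound from Lemma \ref{lemma-delta} and with explicit expressions for $m_\lambda$ and $d_\lambda$, reducing the whole problem to an elementary multinomial estimate. Since $\det\Sigma$ in (\ref{q-expression}) is a nonzero constant independent of $N$, it can be absorbed into the $\lesssim$ notation, so by Lemma \ref{lemma-delta} it suffices to show
\[
\frac{m_\lambda\prod_{i=1}^r p_i^{\lambda_i}}{d_\lambda}\lesssim N^{-(2dr-r^2-1)/2}
\]
uniformly in $\lambda\in\set H_\epsilon$.

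The key algebraic observation I would exploit is that the Weyl dimension formula $d_\lambda=\prod_{1\le i<j\le d}(\lambda_i-\lambda_j+j-i)/\prod_{k=1}^{d-1}k!$ and the Frobenius form of the hook-length formula $m_\lambda=N!\prod_{1\le i<j\le d}(\lambda_i-\lambda_j+j-i)/\prod_{i=1}^d (\lambda_i+d-i)!$ share identical Vandermonde-type numerators, which cancel in the ratio, leaving
\[
\frac{m_\lambda}{d_\lambda}=\frac{N!\prod_{k=1}^{d-1}k!}{\prod_{i=1}^d (\lambda_i+d-i)!}.
\]
Factoring $(\lambda_i+d-i)!=\lambda_i!\prod_{k=1}^{d-i}(\lambda_i+k)$ then splits the right-hand side into a multinomial probability mass function and an explicit polynomial factor, giving
\[
\frac{m_\lambda\prod_{i=1}^r p_i^{\lambda_i}}{d_\lambda}\lesssim\left(\frac{N!}{\prod_{i=1}^r\lambda_i!}\prod_{i=1}^r p_i^{\lambda_i}\right)\cdot\frac{1}{\prod_{i=1}^d\prod_{k=1}^{d-i}(\lambda_i+k)}.
\]

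I would then estimate the two factors separately. The multinomial PMF is dominated everywhere by its global maximum, achieved near $\lambda_i=Np_i$; Stirling's formula at the mode yields the standard bound $O(N^{-(r-1)/2})$. Using this uniform estimate is what side-steps the main technical difficulty: deviations $|\lambda_i-Np_i|$ within $\set H_\epsilon$ can reach $\Theta(N^{(1+\epsilon)/2})$, well beyond the typical $\sqrt N$ scale, so a pointwise Stirling expansion of $m_\lambda$ would pick up super-polynomial corrections of the form $\exp(-\sum_i\delta_i^2/(2Np_i))=\exp(-O(N^\epsilon))$; since those corrections only suppress the PMF below its maximum, they are harmless for an upper bound. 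For the polynomial factor, $\lambda\in\set H_\epsilon$ guarantees $\lambda_i=\Theta(N)$ for $i\le r$ and $\lambda_i=0$ for $i>r$, so $\prod_{i=1}^d\prod_{k=1}^{d-i}(\lambda_i+k)\asymp N^{\sum_{i=1}^r(d-i)}=N^{rd-r(r+1)/2}$. Multiplying the three pieces together with the $N^{(1+\epsilon)m/2}$ from Lemma \ref{lemma-delta} produces the exponent $(1+\epsilon)m/2-(r-1)/2-rd+r(r+1)/2=-(2dr-r^2-1-(1+\epsilon)m)/2$, which is exactly the claim. The only conceptual hurdle is spotting the cancellation of the Vandermonde factors that makes $m_\lambda/d_\lambda$ elementary; the remaining estimates are routine.
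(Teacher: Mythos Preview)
Your proposal is correct and follows essentially the same route as the paper: both arguments feed Lemma~\ref{lemma-delta} into the exact formula (\ref{q-expression}), exploit the identical Vandermonde factor $\prod_{i<j}(\lambda_i-\lambda_j+j-i)$ shared by $m_\lambda$ and $d_\lambda$, reduce the remaining piece to a multinomial PMF times $\prod_{i\le r}\lambda_i^{-(d-i)}\asymp N^{-(2dr-r^2-r)/2}$, and invoke the $O(N^{-(r-1)/2})$ mode bound for the multinomial. The only difference is order of operations---you cancel the Vandermonde upfront in $m_\lambda/d_\lambda$, whereas the paper carries it through the bound on $q_{\lambda,N}$ and cancels against $d_\lambda$ at the end---which makes your write-up marginally cleaner but not substantively different.
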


\begin{proof}
The dimension of $\spc M_\lambda$ is given by   
\begin{align*}
m_\lambda&=\frac{N!}{\prod_{i=1}^{d}(\lambda_i+d-i)!}\prod_{1\le i<j\le d}(\lambda_i-\lambda_j+j-i)
\end{align*}
(see e.~g.~\cite{ExactDistribution}) and can be bounded  as
\begin{align*}
m_\lambda &\le     \frac 1  {  \lambda_1^{d-1}  \,  \lambda_2^{d-2}  \,\dots \, \lambda_r^{d-r} }\,  {N  \choose \lambda}   \prod_{1\le i<j\le d}(\lambda_i-\lambda_j+  j-i)   \\
&\lesssim N^{-\frac{2dr-r^2-r}2}    \,  {N  \choose \lambda}   \prod_{1\le i<j\le d}(\lambda_i-\lambda_j + j-i)     \, 
\end{align*}
for any $\lambda\in\set H_\epsilon$. Substituting the above bound and the bound in Lemma \ref{lemma-delta} into Eq. (\ref{q-expression}), we have
\begin{align*}
q_{\lambda,N}&\lesssim \frac{N^{\frac {(1+\epsilon)m}2}}{\det \Sigma} \left(\prod_{i=1}^{r}p_i^{\lambda_i}\right)\cdot N^{-\frac{2dr-r^2-r}2}    \,  {N  \choose \lambda}   \prod_{1\le i<j\le d}(\lambda_i-\lambda_j + j-i) \\
&\lesssim N^{-\frac{2dr-r^2-r-(1+\epsilon)m}2} m(N,p,\lambda)   \prod_{1\le i<j\le d}(\lambda_i-\lambda_j + j-i) \\
&\lesssim N^{-\frac{2dr-r^2-1-(1+\epsilon)m}2}  \prod_{1\le i<j\le d}(\lambda_i-\lambda_j + j-i) \\
\end{align*}
which holds for any $\lambda\in\set H_\epsilon$. The last inequality comes from the upper bound of the multinomial $m(N,p,\lambda)$. Finally, we get the desired bound of $q_{\lambda,N}/d_\lambda$ by combining the above bound with the expression of $d_{\lambda}$ 
$$d_\lambda=\frac{\prod_{1\le i<j\le d}(\lambda_i-\lambda_j-i+j)}{\prod_{k=1}^{d-1}k!}.$$
\end{proof}

Finally, we can bound the error of any compression protocol with an encoding set $\set S$ and with the encoding dimension $d_{\rm enc}=O\left(N^{\frac{2dr-r^2-1-m}{2}-\delta}\right)$ as
\begin{align*}
e_N&\ge \frac12\sum_{\lambda\in\set S} q_{\lambda,N}\\
&=\frac12\left(1-\sum_{\lambda\not\in\set S} q_{\lambda,N}\right)\\
&\ge\frac12\left(1-\sum_{\lambda\not\in \set H_{\delta/m}}q_{\lambda,N}-\sum_{\lambda\in\set H_{\delta/m}\cap\set S} q_{\lambda,N}\right)\\
&\ge\frac12\left[1-\sum_{\lambda\not\in \set H_{\delta/m}}q_{\lambda,N}-\max_{\lambda\in\set H_{\delta/m}}\left(\frac{q_{\lambda,N}}{d_{\lambda}}\right)\sum_{\lambda\in\set S}d_{\lambda}\right]\\
&\ge\frac12\left[1-\sum_{\lambda\not\in \set H_{\delta/m}}q_{\lambda,N}-\max_{\lambda\in\set H_{\delta/m}}\left(\frac{q_{\lambda,N}}{d_{\lambda}}\right)\cdot d_{\rm enc}\right]\\
&\gtrsim\frac12\left[1-(N+1)^{\frac{d(d+1)}2}e^{-\frac{1}8 N^{\frac{\delta}m}}-N^{-\frac{\delta}{2}}\right]\\
&=\frac12\left(1-N^{-\frac{\delta}{2}}\right).
\end{align*}

\end{widetext}

\end{document}